\numberwithin{equation}{section} \allowdisplaybreaks
\newtheorem{theo}{Theorem}[section]
\newtheorem{prop}{Proposition}[section]
\theoremstyle{definition}
\newtheorem{deff}{Definition}[section]
\newtheorem{example}{Example}[section]
\newtheorem{rem}{Remark}[section]
\begin{document}
%%%%%%%%%%%%%%%%%%%%%%%%%%%%%%%%%%%%%%%%%%
%%%%%%%%%%%%%%%%%%%%%%%%%%%%%%%%%%%%%%%%%%%%%%%%
%%%%%%%%  START OF THE CONTRIBUTION
%%%%%%%%%%%%%%%%%%%%%%%%%%%%%%%%%%%%%%%%%%%%
%%%%    private macros
%%%%    do not use renewcommand (i.e. do not redefine
%%%%%   standard  Latex-Commands
%%%%%%%%%%%%%%%%%%%%%%%%%%%%%%%%%%%%%%%%%%%%%%%%%%%%%%%
\font\black=cmbx10 \font\sblack=cmbx7 \font\ssblack=cmbx5 \font\blackital=cmmib10  \skewchar\blackital='177
\font\sblackital=cmmib7 \skewchar\sblackital='177 \font\ssblackital=cmmib5 \skewchar\ssblackital='177
\font\sanss=cmss10 \font\ssanss=cmss8 %scaled 900
\font\sssanss=cmss8 scaled 600 \font\blackboard=msbm10 \font\sblackboard=msbm7 \font\ssblackboard=msbm5
\font\caligr=eusm10 \font\scaligr=eusm7 \font\sscaligr=eusm5 \font\blackcal=eusb10 \font\fraktur=eufm10
\font\sfraktur=eufm7 \font\ssfraktur=eufm5 \font\blackfrak=eufb10

\font\bsymb=cmsy10 scaled\magstep2
\def\all#1{\setbox0=\hbox{\lower1.5pt\hbox{\bsymb
       \char"38}}\setbox1=\hbox{$_{#1}$} \box0\lower2pt\box1\;}
\def\exi#1{\setbox0=\hbox{\lower1.5pt\hbox{\bsymb \char"39}}
       \setbox1=\hbox{$_{#1}$} \box0\lower2pt\box1\;}

\def\mi#1{{\fam1\relax#1}}
\def\tx#1{{\fam0\relax#1}}

\newfam\bifam
\textfont\bifam=\blackital \scriptfont\bifam=\sblackital \scriptscriptfont\bifam=\ssblackital
\def\bi#1{{\fam\bifam\relax#1}}

\newfam\blfam
\textfont\blfam=\black \scriptfont\blfam=\sblack \scriptscriptfont\blfam=\ssblack
\def\rbl#1{{\fam\blfam\relax#1}}

\newfam\bbfam
\textfont\bbfam=\blackboard \scriptfont\bbfam=\sblackboard \scriptscriptfont\bbfam=\ssblackboard
\def\bb#1{{\fam\bbfam\relax#1}}

\newfam\ssfam
\textfont\ssfam=\sanss \scriptfont\ssfam=\ssanss \scriptscriptfont\ssfam=\sssanss
\def\sss#1{{\fam\ssfam\relax#1}}

\newfam\clfam
\textfont\clfam=\caligr \scriptfont\clfam=\scaligr \scriptscriptfont\clfam=\sscaligr
\def\cl#1{{\fam\clfam\relax#1}}

\newfam\frfam
\textfont\frfam=\fraktur \scriptfont\frfam=\sfraktur \scriptscriptfont\frfam=\ssfraktur
\def\fr#1{{\fam\frfam\relax#1}}

\def\cb#1{\hbox{$\fam\gpfam\relax#1\textfont\gpfam=\blackcal$}}

\def\hpb#1{\setbox0=\hbox{${#1}$}
    \copy0 \kern-\wd0 \kern.2pt \box0}
\def\vpb#1{\setbox0=\hbox{${#1}$}
    \copy0 \kern-\wd0 \raise.08pt \box0}

\def\pmb#1{\setbox0\hbox{${#1}$} \copy0 \kern-\wd0 \kern.2pt \box0}
\def\pmbb#1{\setbox0\hbox{${#1}$} \copy0 \kern-\wd0
      \kern.2pt \copy0 \kern-\wd0 \kern.2pt \box0}
\def\pmbbb#1{\setbox0\hbox{${#1}$} \copy0 \kern-\wd0
      \kern.2pt \copy0 \kern-\wd0 \kern.2pt
    \copy0 \kern-\wd0 \kern.2pt \box0}
\def\pmxb#1{\setbox0\hbox{${#1}$} \copy0 \kern-\wd0
      \kern.2pt \copy0 \kern-\wd0 \kern.2pt
      \copy0 \kern-\wd0 \kern.2pt \copy0 \kern-\wd0 \kern.2pt \box0}
\def\pmxbb#1{\setbox0\hbox{${#1}$} \copy0 \kern-\wd0 \kern.2pt
      \copy0 \kern-\wd0 \kern.2pt
      \copy0 \kern-\wd0 \kern.2pt \copy0 \kern-\wd0 \kern.2pt
      \copy0 \kern-\wd0 \kern.2pt \box0}

\def\cdotss{\mathinner{\cdotp\cdotp\cdotp\cdotp\cdotp\cdotp\cdotp
        \cdotp\cdotp\cdotp\cdotp\cdotp\cdotp\cdotp\cdotp\cdotp\cdotp
        \cdotp\cdotp\cdotp\cdotp\cdotp\cdotp\cdotp\cdotp\cdotp\cdotp
        \cdotp\cdotp\cdotp\cdotp\cdotp\cdotp\cdotp\cdotp\cdotp\cdotp}}

%%%%%%%%%%%%%%%%%%%%%%%%%%%%%%%%%%%%%%%%%%%%%%%%%%%%%%%%%%%%
%\textwidth15.6cm \textheight24cm \hoffset-1.6cm \voffset-2.5cm
\font\frak=eufm10 scaled\magstep1 \font\fak=eufm10 scaled\magstep2 \font\fk=eufm10 scaled\magstep3
\font\scriptfrak=eufm10 \font\tenfrak=eufm10

%\newtheorem{theorem}{Theorem}
%\newtheorem{corollary}{Corollary}
%\newtheorem{proposition}{Proposition}
%\newtheorem{definition}{Definition}
%\newtheorem{lemma}{Lemma}
%\font\frak=eufm10 scaled\magstep1
%%%%%%%%%%%%%%%%%%%%%%%%%%%%%%%%%%%%%%%%%%%%%%%%%%%%
%\newenvironment{pf}{{\noindent{\it Proof. }}}{\ \rule{2mm}{2.5mm}\medskip}
%\newenvironment{pft}{{\noindent{\it Proof of Theorem }}}{\ \rule{2mm}
%{2.5mm}\medskip}
%%%%%%%%%%%%%%%%%%%%%%%%%%%%%%%%%%%%%%%%%%%%%%%%%%%%%%%%%%%%%%%%%

\mathchardef\za="710B  %\alpha
\mathchardef\zb="710C  %\beta
\mathchardef\zg="710D  %\gamma
\mathchardef\zd="710E  %\delta
\mathchardef\zve="710F %\epsilon
\mathchardef\zz="7110  %\zeta
\mathchardef\zh="7111  %\eta
\mathchardef\zvy="7112 %\theta
\mathchardef\zi="7113  %\iota
\mathchardef\zk="7114  %\kappa
\mathchardef\zl="7115  %\lambda
\mathchardef\zm="7116  %\mu
\mathchardef\zn="7117  %\nu
\mathchardef\zx="7118  %\xi
\mathchardef\zp="7119  %\pi
\mathchardef\zr="711A  %\rho
\mathchardef\zs="711B  %\sigma
\mathchardef\zt="711C  %\tau
\mathchardef\zu="711D  %\upsilon
\mathchardef\zvf="711E %\phi
\mathchardef\zq="711F  %\chi
\mathchardef\zc="7120  %\psi
\mathchardef\zw="7121  %\omega
\mathchardef\ze="7122  %\varepsilon
\mathchardef\zy="7123  %\vartheta
\mathchardef\zf="7124  %\varomega
\mathchardef\zvr="7125 %\varrho
\mathchardef\zvs="7126 %\varsigma
\mathchardef\zf="7127  %\varphi
\mathchardef\zG="7000  %\Gamma
\mathchardef\zD="7001  %\Delta
\mathchardef\zY="7002  %\Theta
\mathchardef\zL="7003  %\Lambda
\mathchardef\zX="7004  %\Xi
\mathchardef\zP="7005  %\Pi
\mathchardef\zS="7006  %\Sigma
\mathchardef\zU="7007  %\Upsilon
\mathchardef\zF="7008  %\Phi
\mathchardef\zW="700A  %\Omega

\newcommand{\be}{\begin{equation}}
\newcommand{\ee}{\end{equation}}
\newcommand{\ra}{\rightarrow}
\newcommand{\lra}{\longrightarrow}
\newcommand{\bea}{\begin{eqnarray}}
\newcommand{\eea}{\end{eqnarray}}
\newcommand{\beas}{\begin{eqnarray*}}
\newcommand{\eeas}{\end{eqnarray*}}
\def\*{{\textstyle *}}
\newcommand{\R}{{\mathbb R}}
\newcommand{\T}{{\mathbb T}}
\newcommand{\C}{{\mathbb C}}
\newcommand{\unit}{{\mathbf 1}}
\newcommand{\SL}{SL(2,\C)}
\newcommand{\Sl}{sl(2,\C)}
\newcommand{\SU}{SU(2)}
\newcommand{\su}{su(2)}
\def\ssT{\sss T}
\newcommand{\G}{{\goth g}}
\newcommand{\D}{{\rm d}}
\newcommand{\Df}{{\rm d}^\zF}
\newcommand{\de}{\,{\stackrel{\rm def}{=}}\,}
\newcommand{\we}{\wedge}
\newcommand{\nn}{\nonumber}
\newcommand{\ot}{\otimes}
\newcommand{\s}{{\textstyle *}}
\newcommand{\ts}{T^\s}
\newcommand{\oX}{\stackrel{o}{X}}
\newcommand{\oD}{\stackrel{o}{D}}
\newcommand{\obD}{\stackrel{o}{\bD}}
%%%%%%%%%%%%%%%%%%%%%%%%%%%%%%%%%%%%%%%%%%%%%%%%%%%%%%%%%%%%%%%%%%
\newcommand{\pa}{\partial}
\newcommand{\ti}{\times}
\newcommand{\A}{{\cal A}}
\newcommand{\Li}{{\cal L}}
\newcommand{\ka}{\mathbb{K}}
\newcommand{\find}{\mid}
\newcommand{\ad}{{\rm ad}}
\newcommand{\rS}{]^{SN}}
\newcommand{\rb}{\}_P}
\newcommand{\p}{{\sf P}}
\newcommand{\h}{{\sf H}}
\newcommand{\X}{{\cal X}}
\newcommand{\I}{\,{\rm i}\,}
\newcommand{\rB}{]_P}
\newcommand{\Ll}{{\pounds}}
\def\lna{\lbrack\! \lbrack}
\def\rna{\rbrack\! \rbrack}
\def\rnaf{\rbrack\! \rbrack_\zF}
\def\rnah{\rbrack\! \rbrack\,\hat{}}
\def\lbo{{\lbrack\!\!\lbrack}}
\def\rbo{{\rbrack\!\!\rbrack}}
\def\lan{\langle}
\def\ran{\rangle}
\def\zT{{\cal T}}
\def\tU{\tilde U}
\def\ati{{\stackrel{a}{\times}}}
\def\sti{{\stackrel{sv}{\times}}}
\def\aot{{\stackrel{a}{\ot}}}
\def\sati{{\stackrel{sa}{\times}}}
\def\saop{{\stackrel{sa}{\op}}}
\def\bwa{{\stackrel{a}{\bigwedge}}}
\def\svop{{\stackrel{sv}{\oplus}}}
\def\saot{{\stackrel{sa}{\otimes}}}
\def\cti{{\stackrel{cv}{\times}}}
\def\cop{{\stackrel{cv}{\oplus}}}
\def\dra{{\stackrel{\xd}{\ra}}}
\def\bdra{{\stackrel{\bd}{\ra}}}
\def\bAff{\mathbf{Aff}}
\def\Aff{\sss{Aff}}
\def\bHom{\mathbf{Hom}}
\def\Hom{\sss{Hom}}
\def\bt{{\boxtimes}}
\def\sot{{\stackrel{sa}{\ot}}}
\def\bp{{\boxplus}}
\def\op{\oplus}
\def\bwak{{\stackrel{a}{\bigwedge}\!{}^k}}
\def\aop{{\stackrel{a}{\oplus}}}
\def\ix{\operatorname{i}}
\def\V{{\cal V}}
\def\cD{{\cal D}}
\def\cC{{\cal C}}
\def\cE{{\cal E}}
\def\cL{{\cal L}}
\def\cN{{\cal N}}
\def\cR{{\cal R}}
\def\cJ{{\cal J}}
\def\cT{{\cal T}}
\def\cH{{\cal H}}
\def\bA{\mathbf{A}}
\def\bI{\mathbf{I}}
\def\wh{\widehat}
\def\wt{\widetilde}
\def\ol{\overline}
\def\ul{\underline}
\def\Sec{\operatorname{Sec}}
\def\Lin{\sss{Lin}}
\def\ader{\sss{ADer}}
\def\ado{\sss{ADO^1}}
\def\adoo{\sss{ADO^0}}
\def\AS{\sss{AS}}
\def\bAS{\sss{AS}}
\def\bLS{\sss{LS}}
\def\bAP{\sss{AV}}
\def\bLP{\sss{LP}}
\def\AP{\sss{AP}}
\def\LP{\sss{LP}}
\def\LS{\sss{LS}}
\def\Z{\mathbf{Z}}
\def\oZ{\overline{\bZ}}
\def\oA{\overline{\bA}}
\def\cim{{C^\infty(M)}}
\def\de{{\cal D}^1}
\def\la{\langle}
\def\ran{\rangle}
%%%%%%%%%%%%%%%%%%%
\def\by{{\bi y}}
\def\bs{{\bi s}}
\def\bc{{\bi c}}
\def\bd{{\bi d}}
\def\bh{{\bi h}}
\def\bD{{\bi D}}
\def\bY{{\bi Y}}
\def\bX{{\bi X}}
\def\bL{{\bi L}}
\def\bV{{\bi V}}
\def\bW{{\bi W}}
\def\bS{{\bi S}}
\def\bT{{\bi T}}
\def\bC{{\bi C}}
\def\bE{{\bi E}}
\def\bF{{\bi F}}
\def\bP{{\bi P}}
\def\bp{{\bi p}}
\def\bz{{\bi z}}
\def\bZ{{\bi Z}}
\def\bq{{\bi q}}
\def\bQ{{\bi Q}}
\def\bx{{\bi x}}

\def\sA{{\sss A}}
\def\sC{{\sss C}}
\def\sD{{\sss D}}
\def\sG{{\sss G}}
\def\sH{{\sss H}}
\def\sI{{\sss I}}
\def\sJ{{\sss J}}
\def\sK{{\sss K}}
\def\sL{{\sss L}}
\def\sO{{\sss O}}
\def\sP{{\sss P}}
\def\sPh{{\sss P\sss h}}
\def\sT{{\sss T}}
\def\sV{{\sss V}}
\def\sR{{\sss R}}
\def\sS{{\sss S}}
\def\sE{{\sss E}}
\def\sF{{\sss F}}
\def\st{{\sss t}}
\def\sg{{\sss g}}
\def\sx{{\sss x}}
\def\sv{{\sss v}}
\def\sw{{\sss w}}
\def\sQ{{\sss Q}}
\def\sj{{\sss j}}
\def\sq{{\sss q}}
\def\xa{\tx{a}}
\def\xc{\tx{c}}
\def\xd{\tx{d}}
\def\xi{\tx{i}}
\def\xD{\tx{D}}
\def\xV{\tx{V}}
\def\xF{\tx{F}}
\def\dt{\xd_{\sss T}}
\def\dte{\dt^\ze}
\def\dtwte{\dt^{\wt{\ze}}}
\def\vt{\textsf{v}_{\sss T}}
\def\vta{\operatorname{v}_\zt}
\def\vtb{\operatorname{v}_\zp}
\def\cM{{\cal M}}
\def\cN{{\cal N}}
\def\cD{{\cal D}}
\def\ug{{\ul{\zg}}}
\def\rel{{-\!\!\!-\!\!\rhd}}
\newdir{|>}{%
!/4.5pt/@{|}*:(1,-.2)@^{>}*:(1,+.2)@_{>}}
\def\rk{\operatorname{rank}}
\def\dim{\operatorname{dim}}
\def\Graph{\operatorname{graph}}
\def\Dom{\operatorname{Dom}}
\def\Rg{\operatorname{Rg}}
\def\Ph{\sss{Ph}}
\def\Ve{\sss{Vel}}
\def\el{{\cE}}
\def\pr{\operatorname{pr}}
\def\suit{\sss{Suit}}
\def\Der{\sss{Der}}
\def\Suit{\operatorname{Suit}}
\def\bzg{{\bm{\wh{\gamma}}}}
%%%%%%%%%%%%%%%%%

\setcounter{page}{1} \thispagestyle{empty}
%%%%    Please replace by your data

%%%   This are the running heads

%\label{firstpage} $ $
\bigskip

\bigskip

%%%%    Title page
%%%%%%
\title{Dirac Algebroids\\ in Lagrangian and Hamiltonian Mechanics\thanks{Research
supported by the Polish Ministry of Science and Higher Education under the grant N N201 365636.} }

        \author{
        Katarzyna  Grabowska$^1$, Janusz Grabowski$^2$\\
        \\
         $^1$ {\it Faculty of Physics}\\
                {\it University of Warsaw} \\ \\
         $^2$ {\it Institute of Mathematics}\\
                {\it Polish Academy of Sciences}
                }
\date{}
\maketitle
\begin{abstract}
We present a unified approach to constrained implicit Lagrangian and Hamiltonian systems based on the
introduced concept of {\it Dirac algebroid}. The latter is a certain almost Dirac structure associated with
the Courant algebroid $\sT E^\ast\op_M\sT^\ast E^\ast$ on the dual $E^\ast$ to a vector bundle $\zt:E\ra M$.
If this almost Dirac structure is integrable (Dirac), we speak about a Dirac-Lie algebroid. The bundle $E$
plays the role of the bundle of kinematic configurations (quasi-velocities), while the bundle $E^\ast$ -- the
role of the phase space. This setting is totally intrinsic and does not distinguish between regular and
singular Lagrangians. The constraints are part of the framework, so the general approach does not change when
nonholonomic constraints are imposed, and produces the (implicit) Euler-Lagrange and Hamilton equations in an
elegant geometric way. The scheme includes all important cases of Lagrangian and Hamiltonian systems, no
matter if they are with or without constraints, autonomous or non-autonomous etc., as well as their
reductions; in particular, constrained systems on Lie algebroids. we prove also some basic facts about the
geometry of Dirac and Dirac-Lie algebroids.

\bigskip\noindent
\textit{MSC 2010: 37J05, 70G45, 70F25, 57D17, 70H45, 70H03, 70H25, 17B66.}

\medskip\noindent
\textit{Key words: variational calculus, geometrical mechanics, nonholonomic constraint, Euler-Lagrange
equation, Dirac structure, Lie algebroid.}
\end{abstract}
%\pagestyle{myheadings}
%%%%%%%%%%%%%%%%
\section{Introduction}
The concept of {\it Dirac structure}, proposed by Dorfman \cite{Do} in the Hamiltonian framework of integrable
evolution equations and defined in \cite{Co} as a subbundle of the Whitney sum $\sT N\oplus_N\sT^\ast N$ of
the tangent and the cotangent bundle (the {\it extended tangent} or the {\it Pontryagin bundle}) satisfying
certain conditions, was thought-out as a common generalization of Poisson and presymplectic structures. It was
designed also to deal with constrained systems, including constraints induced by degenerate Lagrangians, as
was investigated by Dirac \cite{Di}, which is the reason for the name.

The need of extending the geometrical tools of the Lagrangian formalism from tangent bundles to Lie algebroids
was caused by the fact that reductions usually move us out of the environment of the tangent bundles
\cite{CHMR} (think on the reduction to $\mathfrak{so}(3,\R)$ for the rigid body). It is similar to the
better-known situation of passing from the symplectic to the Poisson structures by a reduction in the
Hamiltonian formalism.

Note that the use of Lie algebroids and Lie groupoids for describing some systems of Analytical Mechanics was
proposed by P.~Libermann \cite{Li} and A.~Weinstein \cite{We}, and then developed by many authors, for
instance \cite{CLMMM,LMM,Mar1,M1}, making use of Lie algebroids in various aspects of Analytical Mechanics and
Classical Field Theory.

Since a Lie algebroid structure on a vector bundle $\zt:E\ra M$ can be viewed as a linear Poisson structure
$\Pi$ on the dual bundle $\zp:E^*\ra M$, a properly defined `linear' Dirac structure should be viewed as a
generalization of the concept of Lie algebroid. Linear structures of different kinds on a vector bundle can be
viewed, in turn, as associated with certain {\it double vector bundles}. The double vector bundles, introduced
in \cite{Pr1,Pr2} (see also \cite{KU,GR}) as manifolds with two `compatible' vector bundle structures, have
been successfully applied in \cite{GU3,GU2} to geometric formalisms of Analytical Mechanics, including
nonholonomic constraints \cite{GG,GLMM}. To be more precise, note first that canonical examples of {double
vector bundles} are: the tangent $\sT E$, and the cotangent bundle $\sT^\ast E$ of the vector bundle $E$. The
double vector bundles
$$\xymatrix{
\sT^\ast E^\ast\ar[rr]^{\sT^\ast\zp} \ar[d]_{\zp_{E^\ast}} && E\ar[d]^{\zt} \\
E^\ast\ar[rr]^{\zp} && M } \qquad {,}\qquad \xymatrix{
\sT^\ast E\ar[rr]^{\sT^\ast\zt} \ar[d]_{\zt_{E^\ast}} && E^\ast\ar[d]^{\zp} \\
E\ar[rr]^{\zt} && M }
$$
are canonically isomorphic (cf. \cite{KU,Ur}). In particular, all arrows correspond to vector bundle
structures and all pairs of vertical and horizontal arrows are vector bundle morphisms. Double vector bundles
have been recently characterized \cite{GR} in a simple way as two vector bundle structures whose Euler vector
fields commute.

In \cite{GU2,GU3}, a Lie algebroid (and its generalizations) on $E$ has been viewed as a double vector bundle
morphism
\be\label{tt}\varepsilon:\sT^\ast E\rightarrow \sT E^\ast\ee
covering the identity on $E^\ast$. This is because the linearity of a bivector field (e.g. a Poisson tensor)
$\zP_\ze$ on the dual bundle $E^\ast$ can be geometrically expressed as respecting the double vector bundle
structures by the induced vector bundle morphism
\be\label{mp}\wt{\zP}_\ze:\sT^\ast E^\ast\rightarrow \sT E^\ast\,.\ee
We obtain $\ze$ as the composition of the canonical isomorphism of double vector bundles $\cR_\zt \colon
\sT^\*E \ra \sT^\* E^\*$ with $\wt{\zP}_\ze$, $\ze=\wt{\zP}_\ze\circ\cR_\zt$.

An application of this approach to Analytical Mechanics, in which $\tau: E\rightarrow M$ plays the role of
kinematic configurations, is based on some ideas of Tulczyjew and Urba\'nski \cite{Tu1,Tu3,TU}.

Note that we can represent the morphism (\ref{mp}) of double vector bundles  by its graph $D_\ze$ in the
Whitney sum bundle
\be\label{dd}\cT E^\ast=\sT E^\ast\oplus_{E^\ast}\sT^\ast E^\ast\,.\ee
The {\it Pontryagin bundle} $\cT E^\ast$ is canonically a double vector bundle: over $E^\ast$ and over $\sT
M\oplus_M E$, and the fact that $\ze$ is a morphism means that $D_\ze$ is a double vector subbundle. Moreover,
since $D_\ze$ is the graph of a Poisson tensor (in the case when $E$ is a standard Lie algebroid), the
subbundle $D_\ze$ is a Dirac structure on $E^\ast$. This immediately leads to a generalization of the concept
of Lie algebroid: we replace the graph $D_\ze$ with any Dirac structure $D$ on $E^\ast$ which is linear, i.e.,
which is a double vector subbundle of $\cT E^\ast$. Such an object we will cal a {\it Dirac-Lie algebroid}.

As was observed already in \cite{GGU3}, the construction of phase dynamics associated with a given Lagrangian
does not use the fact that the bivector field $\Pi_\ze$ is Poisson (which, on the other hand, induces nice
properties of the dynamics), so we will use also almost Dirac structures, imposing no integrability
assumptions. Thus, a {\it Dirac algebroid} on $E$ will be a linear almost Dirac structure on $E^\ast$. We
introduce also affine analogs of Dirac and Dirac-Lie algebroids.

The main applications we propose go back again to Analytical Mechanics. To some extent, our concepts are
similar to that of \cite{YM1}, where (almost) Dirac structures have been used in description of `implicit'
Lagrangian systems. Our approach, however, we find much more general (we work with arbitrary vector bundles)
and much simpler. This is because we obtain `implicit Lagrangian systems' (in fact both: implicit phase
dynamics and implicit Euler-Lagrange equations), as well as implicit Hamilton equations, just composing
relations, instead of working with somehow artificial concept of {\it partial vector field}. This generality
allows us to cover a large variety of Lagrangian and Hamiltonian systems, including reduced systems,
nonholonomic or vakonomic constraints, and time-dependent systems, with no regularity assumptions on
Lagrangian nor Hamiltonian.

The paper is organized as follows. In section 2 we recall basic facts concerning double vector bundle approach
to Lie algebroids and their generalizations. Dirac algebroids, Dirac-Lie algebroids, and their affine
counterparts are introduced in section 3, together with main examples. In section 4 we investigate closer the
structure of Dirac algebroids, finding a short exact sequence of Lie algebroids associated with a Dirac-Lie
algebroid and providing a local form of Dirac algebroids. Section 5 is devoted to inducing new Dirac
algebroids by means of nonholonomic constraints. In section 6 we present the general schemes, based on Dirac
algebroids, for Lagrangian and Hamiltonian formalisms. We end up with a number of examples in section 7 and
concluding remarks in section 8.

\section{Lie algebroids as double vector bundle morphisms\label{S1}}
We start with recalling basic facts and introducing some notation.

Let $M$ be a smooth manifold and let $(x^a), \ a=1,\dots,n$, be a coordinate system in $M$. We denote with
$\zt_M \colon \sT M \rightarrow M$ the tangent vector bundle and by $\zp_M \colon \sT^\* M\rightarrow M$ the
cotangent vector bundle. We have the induced (adapted) coordinate systems, $(x^a, {\dot x}^b)$ in $\sT M$ and
$(x^a, p_b)$ in $\sT^\* M$.
        More generally, let $\zt\colon E \rightarrow M$ be a vector bundle and let $\zp
\colon E^\* \rightarrow M$ be the dual bundle.
  Let $(e_1,\dots,e_m)$  be a basis of local sections of $\zt\colon
E\rightarrow M$ and let $(e^{1}_*,\dots, e^{m}_*)$ be the dual basis of local sections of $\zp\colon
E^\*\rightarrow M$. We have the induced coordinate systems:
    $(x^a, y^i),  y^i=\zi(e^{i}_*)$, {in} $E$, and
    $(x^a, \zx_i), \zx_i = \zi(e_i)$, {in} $E^\*$ ,
    where the linear functions  $\zi(e)$ are given by the canonical pairing
    $\zi(e)(v_x)=\la e(x),v_x\ran$. In this way we get local coordinates
    \beas
    (x^a, y^i,{\dot x}^b, {\dot y}^j ) &  \text{in} \ \sT E,\quad
    (x^a, \zx_i, {\dot x}^b, {\dot \zx}_j) &  \text{in} \ \sT E^\* ,\\
    (x^a, y^i, p_b, \zp_j) &  \text{in}\ \sT^\*E,\quad
    (x^a, \zx_i, p_b, \zf^j) &  \text{in}\ \sT^\* E^\* .
    \eeas

The cotangent bundles $\sT^\*E$ and $\sT^\*E^\*$ are examples of so called {\it double vector bundles}. They
are fibred over $E$ and $E^\*$ and canonically isomorphic, with the isomorphism $\cR_\zt \colon \sT^\*E
\longrightarrow \sT^\* E^\*$, being simultaneously an anti-symplectomorphism  (cf. \cite{KU,GU2}). In local
coordinates, $\cR_\zt$ is given by
    \be\label{iso1}\cR_\zt(x^a, y^i, p_b, \zp_j) = (x^a, \zp_i, -p_b,y^j).
                              \ee
This means that we can identify coordinates $\zp_j$ with $\zx_j$, coordinates $\zf^j$ with $y^j$, and use the
coordinates $(x^a, y^i, p_b, \zx_j)$ in $\sT^\ast E$ and the coordinates $(x^a, \zx_i, p_b,y^j)$ in $\sT^\ast
E^\ast$ in full agreement with (\ref{iso1}). According to \cite{GR}, the double vector bundle structure is
completely characterized by a pair of commuting Euler vector fields defining the two vector bundle structures
(or by the pair of the corresponding families of homotheties). In local coordinates the Euler vector fields on
$\sT^\ast E^\ast$ read
\be\label{EVF} \nabla_{\sT^\ast
E}^E=p_b\pa_{p_b}+\zx_i\pa_{\zx_i}\,,\quad \nabla_{\sT^\ast E}^{E^\ast}=p_b\pa_{p_b}+y^i\pa_{y^i}\,.
\ee
Double vector (and vector-affine) bundles will play an important role in our concepts and we refer to
\cite{Ma, GR,GRU,KU,Ur} for the general theory.

It is well known that Lie algebroid structures on a vector bundle $E$ correspond to linear Poisson tensors on
$E^\*$. A 2-contravariant tensor $\zP$ on $E^\*$ is called {\it linear} if the corresponding mapping
$\widetilde{\zP} \colon \sT^\* E^\* \rightarrow \sT E^\*$ induced by the contraction,
$\wt{\zP}(\zn)=i_\zn\zP$, is a morphism of double vector bundles. One can equivalently say that the
corresponding bracket of functions is closed on (fiber-wise) linear functions. The commutative diagram
$$\xymatrix{
\sT^\ast E^\ast\ar[r]^{\widetilde\Pi}  & \sT E^\ast \\
\sT^\ast E\ar[u]_{\cR_\tau}\ar[ur]^{\ze} & },
$$
describes a one-to-one correspondence between linear 2-contravariant tensors $\zP_\ze$ on $E^\*$ and morphisms
$\ze$ (covering the identity on $E^\*$) of the following double vector bundles (cf. \cite{KU, GU2}):

\be\xymatrix{
 & \sT^\ast E \ar[rrr]^{\varepsilon} \ar[dr]^{\pi_E}
 \ar[ddl]_{\sT^\ast\tau}
 & & & \sT E^\ast\ar[dr]^{\sT\pi}\ar[ddl]_/-20pt/{\tau_{E^\ast}}
 & \\
 & & E\ar[rrr]^/-20pt/{\zr}\ar[ddl]_/-20pt/{\tau}
 & & & \sT M \ar[ddl]_{\tau_M}\\
 E^\ast\ar[rrr]^/-20pt/{id}\ar[dr]^{\pi}
 & & & E^\ast\ar[dr]^{\pi} & &  \\
 & M\ar[rrr]^{id}& & & M &
}\label{F1.3}
\ee
In local coordinates, every  such $\ze$ is of the form \be\label{F1.4} \ze(x^a,y^i,p_b,\zx_j) = (x^a, \zx_i,
\zr^b_k(x)y^k, c^k_{ij}(x) y^i\zx_k + \zs^a_j(x) p_a)
\ee
(summation convention assumed) and it corresponds to the linear tensor $$ \zP_\ze =c^k_{ij}(x)\zx_k
\partial _{\zx_i}\otimes \partial _{\zx_j} + \zr^b_i(x) \partial _{\zx_i}
\otimes \partial _{x^b} - \zs^a_j(x)\partial _{x^a} \otimes
\partial _{\zx_j}.
$$
The morphism (\ref{F1.3}) of double vector bundles covering the identity on $E^\*$ has been called an {\it
algebroid} in \cite{GU2}. We will consider only {\it skew algebroids}, i.e., algebroids $\ze$ for which the
tensor $\zP_\ze$ is skew-symmetric, i.e., is a bivector field. If $\zP_\ze$ is a Poisson tensor, we deal with
a {\it Lie algebroid}. The relation to the canonical definition of Lie algebroid is given by the following
theorem (cf. { \cite{GU3, GU2}}).

\begin{theo}\label{bracket}
A skew algebroid structure $(E,\ze)$ can be equivalently defined as a skew-symmetric bilinear bracket $[\cdot
,\cdot]_\ze $ on the space $\Sec(E)$ of sections of $\zt\colon E\rightarrow M$, together with a vector bundle
morphisms $\zr \colon E\rightarrow \sT M$ (called the {\it anchor}), such that
$$ [X,fY]_\ze = \zr(X)(f)Y+f [X,Y]_\ze
$$
         for $f \in \cC^\infty (M)$, $X,Y\in \Sec(E)$.
The bracket and the anchor are related to the  bracket
$\{\zf,\psi\}_{\zP_\ze}=\la\zP_\ze,\xd\zf\ot\xd\psi\ran$ in the algebra of functions on $E^\ast$, associated
with the bivector field $\zP_\ze$, by the formulae
\beas
        \zi([X,Y]_\ze)&= \{\zi(X), \zi(Y)\}_{\zP_\ze},  \\
        \zp^\*(\zr(X)(f))       &= \{\zi(X),
        \zp^\*f\}_{\zP_\ze}\,,
                                                   \eeas
where $\zi(X)$ is the linear function on $E^\ast$ associated with the section $X$ of $E$.
\end{theo}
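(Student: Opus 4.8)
The plan is to set up a dictionary between objects on the total space $E^\ast$ and objects on $\Sec(E)$ and $\cC^\infty(M)$, and to read the two displayed formulae as \emph{definitions} of the bracket and the anchor. Among the functions on $E^\ast$ two classes are relevant: the fibre-wise \emph{linear} functions, which are exactly the $\zi(X)$ for a uniquely determined section $X\in\Sec(E)$, and the \emph{basic} functions $\zp^\ast f$, $f\in\cC^\infty(M)$. The map $X\mapsto\zi(X)$ is an $\R$-linear bijection onto the linear functions and satisfies $\zi(fX)=\zp^\ast f\cdot\zi(X)$. The structural input I rely on is that $\zP_\ze$ being linear---i.e.\ that $\wt{\zP}_\ze$, and hence $\ze$, is a morphism of the double vector bundles in (\ref{F1.3})---is equivalent to the induced biderivation $\{\cdot,\cdot\}_{\zP_\ze}$ being closed on linear functions, carrying a linear and a basic function to a basic function, and annihilating pairs of basic functions. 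In the coordinates of (\ref{F1.4}) this is transparent: pairing $\zP_\ze$ with $\xd\zi(X)\ot\xd\zi(Y)$ returns a fibre-wise linear function, while pairing it with $\xd\zi(X)\ot\xd(\zp^\ast f)$ returns a basic one.

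Granting this grading behaviour, I would argue as follows. Since $\{\zi(X),\zi(Y)\}_{\zP_\ze}$ is linear, it equals $\zi(Z)$ for a unique $Z\in\Sec(E)$, and I put $[X,Y]_\ze=Z$; bilinearity and skew-symmetry of $[\cdot,\cdot]_\ze$ are inherited from the biderivation $\{\cdot,\cdot\}_{\zP_\ze}$, the skewness being exactly the skew-symmetry of the bivector $\zP_\ze$ assumed for a skew algebroid. Likewise $\{\zi(X),\zp^\ast f\}_{\zP_\ze}$ is basic, hence equal to $\zp^\ast g$ for a unique $g\in\cC^\infty(M)$; the assignment $f\mapsto g$ is a derivation of $\cC^\infty(M)$, because $\{\zi(X),\cdot\}_{\zP_\ze}$ is, and so it is realised by a vector field which I call $\zr(X)$. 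To see that $\zr$ is a genuine vector bundle morphism $E\ra\sT M$ I check $\cC^\infty(M)$-linearity in $X$: from $\zi(fX)=\zp^\ast f\cdot\zi(X)$ and $\{\zp^\ast f,\zp^\ast h\}_{\zP_\ze}=0$ one gets $\zr(fX)=f\,\zr(X)$.

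The Leibniz identity is then a one-line consequence of the biderivation property. Using $\zi(fY)=\zp^\ast f\cdot\zi(Y)$,
\be
\zi([X,fY]_\ze)=\{\zi(X),\zp^\ast f\cdot\zi(Y)\}_{\zP_\ze}
=\{\zi(X),\zp^\ast f\}_{\zP_\ze}\,\zi(Y)+\zp^\ast f\,\{\zi(X),\zi(Y)\}_{\zP_\ze}.
\ee
Rewriting the first summand as $\zp^\ast(\zr(X)(f))\cdot\zi(Y)=\zi(\zr(X)(f)\,Y)$ and the second as $\zi(f[X,Y]_\ze)$, injectivity of $\zi$ yields $[X,fY]_\ze=\zr(X)(f)Y+f[X,Y]_\ze$.

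For the converse I would reverse the construction. Given a skew bracket $[\cdot,\cdot]_\ze$ and an anchor $\zr$ obeying the Leibniz rule, the structure functions $c^k_{ij}$ fixed by $[e_i,e_j]_\ze=c^k_{ij}e_k$ (skew in $i,j$) and the coefficients $\zr^a_i$ fixed by $\zr(e_i)=\zr^a_i\pa_{x^a}$ reassemble, through the coordinate expression for $\zP_\ze$ displayed after (\ref{F1.4}), into a skew bivector field, equivalently a morphism $\ze$ as in (\ref{F1.3}). I regard the heart of the matter as the passage between ``$\zP_\ze$ is linear'' and ``the bracket respects the linear/basic grading'': it powers the forward direction and, in the converse, it is exactly what guarantees that the local prescription is basis-independent and glues to a globally defined tensor, since the Leibniz rule forces the bracket to be a bidifferential operator of precisely the tensorial type whose components transform like those of $\zP_\ze$. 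As the two passages invert one another on the coordinate data, the correspondence is bijective.
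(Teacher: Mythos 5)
Your proposal is correct: the paper itself states this theorem without proof (it is recalled from the cited references \cite{GU3,GU2}), and your argument is exactly the standard one that the paper's own preliminary discussion sets up --- linearity of $\zP_\ze$ recast as the grading behaviour of $\{\cdot,\cdot\}_{\zP_\ze}$ on fibre-wise linear and basic functions, the two displayed formulae read as definitions of the bracket and the anchor, the Leibniz rule obtained from the biderivation property, and the converse obtained by reassembling the structure functions and anchor coefficients into the coordinate expression of $\zP_\ze$. The one step you assert rather than verify (basis-independence of the glued tensor in the converse) does hold: the Leibniz rule, combined with skew-symmetry to get the Leibniz rule in the first argument as well, makes the prescription $\zi([X,Y]_\ze)$, $\zp^\*(\zr(X)(f))$, $0$ depend only on the pointwise values of $\xd\zi(X)$, $\xd\zi(Y)$, $\xd\zp^\*f$, which span each cotangent space of $E^\ast$, so the tensor is well defined globally.
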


\section{Dirac algebroids and affine Dirac algebroids}
Let $N$ be a smooth manifold. There is a natural symmetric pairing $(\cdot|\cdot)$ on the vector bundle $\cT
N=\sT N\oplus_N\sT^\ast N$ (called sometimes the {\it Pontryagin bundle}) given by
$$(X_1 + \za_1|X_2 + \za_2) = \frac{1}{2}\left(\za_1(X_2) + \za_2(X_1)\right)\,,
$$
for all sections $X_i+\za_i$, $i=1,2$, of $\cT N=\sT N\oplus_N\sT^\ast N$. Furthermore, the space $\Sec(\cT
N)$ of smooth sections of $\cT N$ is endowed with the Courant-Dorfman bracket,
\be\label{CD}
\lna X_1 + \za_1,X_2 + \za_2\rna = [X_1,X_2] + \cL_{X_1}\za_2 - i_{X_2}\xd \za_1\,,
\ee
where $[\cdot,\cdot]$ is the Lie bracket of vector fields, $\cL_X$ is the Lie derivative along the vector
field $X$, and $i_X$ is the contraction (inner product) with $X$. An {\it almost Dirac structure (or bundle)}
on the smooth manifold $N$ is a subbundle $D$ of $\cT N$ which is maximally isotropic with respect to the
symmetric pairing $(\cdot| \cdot)$. If additionally the space of sections of $D$ is closed under the
Courant-Dorfman bracket, we speak about a {\it Dirac structure (or bundle)} \cite{Co,Do}.

Standard examples of almost Dirac structures are the graphs
\beas\Graph(\zP)&=&\{ X_p+\za_p\in \cT_pN:p\in N\,,\ X_p=\wt{\zP}(\za_p)\}\,,\\
\Graph(\zw)&=&\{ X_p+\za_p\in \cT_pN:p\in N\,,\ \za_p=\wt{\zw}(X_p)\}\,,
\eeas
of bivector fields $\zP$ or 2-forms $\zw$ viewed as vector bundle morphisms,
\beas\wt{\zP}&:&\sT^\ast N\ra\sT N\,,\
\wt{\zP}(\za_p)=i_{\za_p}\zP(p)\,,\\
\wt{\zw}&:&\sT N\ra\sT^\ast N\,,\ \wt{\zw}(X_p)=-i_{X_p}\zw(p)\,.
\eeas
These graphs are actually Dirac structures if and only if $\zP$ is a Poisson tensor and $\zw$ is a closed
2-form, respectively.

\begin{rem} A vector subbundle of a vector bundle over $N$ is often
understood as a vector bundle over the whole base manifold $N$. It is however clear by many reasons (see e.g.
\cite[Theorem 2.3]{GR}) that we must consider also vector subbundles supported on submanifolds of $N$.
Throughout this paper the term {\it vector subbundle} always means a subbundle of the original vector bundle
supported on a submanifold $N_0\subset N$. In this sense, our definitions of almost Dirac and Dirac structure
are slightly more general than those usually available in the literature. By `being closed' with respect to
the bracket we clearly mean that the bracket of any two sections of $\cT N$, extending sections of $D$, does
not depend over $N_0$ on the extensions chosen and gives a section extending a section of $D$. This uniquely
defines a bracket on sections of $D$ which is known to be a Lie algebroid bracket.
\end{rem}
Since the projection $\pr_{\sT N}:\cT N\ra \sT N$ is the left anchor for the Courant-Dorfman bracket, i.e.,
\be\label{anchor}\lna X_1+\za_1,f(X_2+\za_2)\rna=f\lna X_1+\za_1,X_2+\za_2\rna+X_1(f)(X_2+\za_2)\,,
\ee
it is a straightforward observation that the bracket of extensions of sections of a subbundle $D$, supported
on a submanifold $N_0$ of $N$, does not depend on the extensions if and only if
\be\label{1ic}\pr_{\sT N}(D)\subset \sT N_0\,.
\ee
Indeed, if $f$ is 0 on $N_0$, by (\ref{anchor}) $X_1(f)$ must be 0 on $N_0$ for any section $X_1+\za_1$ which
belongs to $D$ along $N_0$. The condition (\ref{1ic}) we will call the {\it first integrability condition} for
the Dirac-Lie algebroid. Under this condition the Courant-Dorfman bracket restricts to
\be\label{restr}\lna\cdot,\cdot\rna_{D}:\Sec(D)\ti\Sec(D)\ra\Sec(\cT N)\,.
\ee
 Then, the {\it second integrability condition} says that $\lna\cdot,\cdot\rna_{D}$ takes values in $\Sec(D)$:
\be\label{2i}
\lna\cdot,\cdot\rna_{D}:\Sec(D)\ti\Sec(D)\ra\Sec(D)\subset\Sec((\cT N)_{|N_0})\,,
\ee
which, according to (\ref{anchor}) and (\ref{1ic}), is sufficient to be checked on a generating set of
sections of $D$:
\be\label{2ic}
\lna\zs_k,\zs_l\rna_D\in\Sec(D)\ \text{for}\ \{\zs_i\}\subset\Sec(D)\ \text{generating}\ D\,.
\ee
By definition, an almost Dirac structure is a Dirac structure if and only if it satisfies both the
integrability conditions, (\ref{1ic}) and (\ref{2ic}).

\begin{rem} Suppose that an almost Dirac structure $D$ satisfies the first integrability condition,
i.e., the Courant-Dorfman bracket $\lna\cdot, \cdot\rna_D$ of sections of $D$ supported on $N_0$ is well
defined. If we have chosen a subbundle $K$ of $\cT N$ complementary to $D$ over $N_0$, we can define the
bracket
\be\label{br}\lna\cdot, \cdot\rna_D^K:\Sec(D)\ti\Sec(D)\ra\Sec(D)
\ee
by projecting the value of $\lna\cdot,\cdot\rna_{D}$ onto $\Sec(D)$ along $K$. Of course, if $D$ is a Dirac
structure, $\lna\cdot, \cdot\rna_D^K$ does not depend on the choice of $K$ and is just the Lie algebroid
bracket on sections of $D$.
\end{rem}

In Geometric Mechanics there is often a need to use affine bundles and affine versions of algebroids
\cite{MMS1,GGU1,GGU2,GGU4,GU,IMMS,IMPS} ({\it affgebroids}  in the terminology introduced in
\cite{GGU1,GGU2}). We will use the following concept.

\begin{deff} Let $A$ be an affine subbundle of a Lie algebroid $E\ra M$ with the bracket $[\cdot,\cdot]$
and the anchor $\zr:E\ra\sT M$, supported on a submanifold $S\subset M$. Let $V=\sv(A)$ be its model vector
bundle viewed as a vector subbundle of $E$. We call $A$ an {\it affine Lie subalgebroid in $E$}, if the
brackets of sections of $A$ lie in $\Sec(\sv(A))$, i.e., $\zr(A)\subset \sT S$ (thus the bracket of sections
of $A$ is well defined over $S$) and $[\zs,\zs']\in\Sec(V)$ for all $\zs,\zs'\in\Sec(A)$.
\end{deff}
\noindent For a more extensive treatment of brackets on affine bundles we refer to \cite{GGU1,GGU2} (see also
\cite{MMS1,IMMS,IMPS}).

To consider also affine versions of (almost) Dirac structures, we propose the following (compare
\cite{GGU1,GGU2}).
\begin{deff} An {\it affine almost Dirac structure} on a manifold $N$ is an affine
subbundle $D$ of $\cT N$, supported on a submanifold $N_0$ of $N$, whose model vector bundle $\sv(D)\subset\cT
N$ (canonically represented by a subbundle of $\cT N$) is an almost Dirac structure on $N$. An affine almost
Dirac structure is called {\it affine Dirac structure}, if the Courant-Dorfman bracket of sections of $D$
makes sense (like the analogous concept for Dirac-Lie algebroids) and takes values in the set of sections of
$\sv(D)$, i.e., (\ref{1ic}) is satisfied, so that (\ref{restr}) is well defined and
\be\label{2ica}
\lna\cdot,\cdot\rna_{D}:\Sec(D)\ti\Sec(D)\ra\Sec(\sv(D))\subset\Sec(\cT N)\,.
\ee
\end{deff}
\noindent The following is straightforward.
\begin{prop}\label{p0} If $D$ is an affine Dirac structure, then $\sv(D)$ is a Dirac structure.
\end{prop}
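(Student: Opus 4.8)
The plan is to unwind the definition: to say that $\sv(D)$ is a \emph{Dirac} structure we must check that it is (i) an almost Dirac structure and (ii) that it satisfies the two integrability conditions (\ref{1ic}) and (\ref{2ic}). Item (i) is immediate, since it is built into the definition of an affine almost Dirac structure that the model bundle $\sv(D)$ is maximally isotropic. So the content lies entirely in deriving the two integrability conditions for $\sv(D)$ from the corresponding hypotheses on the affine bundle $D$.

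First I would verify the first integrability condition (\ref{1ic}) for $\sv(D)$, namely $\pr_{\sT N}(\sv(D))\subset\sT N_0$. The key elementary fact is that a fibre of $\sv(D)$ over a point $p\in N_0$ consists precisely of the differences $d_1-d_2$ of elements $d_1,d_2$ of the (affine) fibre $D_p$. Since $\pr_{\sT N}$ is a vector bundle morphism and the affine Dirac hypothesis gives $\pr_{\sT N}(D)\subset\sT N_0$, each $\pr_{\sT N}(d_i)$ lies in the linear subspace $\sT_p N_0$, hence so does their difference; thus $\pr_{\sT N}(\sv(D))\subset\sT N_0$. By (\ref{anchor}) this guarantees that the Courant-Dorfman bracket descends to a well-defined restricted bracket $\lna\cdot,\cdot\rna_{\sv(D)}$ on $\Sec(\sv(D))$.

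The main step is the second integrability condition (\ref{2ic}) for $\sv(D)$. Here I would again represent sections of the model bundle as differences of sections of $D$: locally choosing any reference section $s_0\in\Sec(D)$, every $\zs\in\Sec(\sv(D))$ can be written $\zs=(s_0+\zs)-s_0$ with both $s_0+\zs$ and $s_0$ in $\Sec(D)$. Then for sections $a-b$ and $c-d$ of $\sv(D)$ (with $a,b,c,d\in\Sec(D)$) the $\R$-bilinearity of the Courant-Dorfman bracket yields
$$\lna a-b,c-d\rna=\lna a,c\rna-\lna a,d\rna-\lna b,c\rna+\lna b,d\rna,$$
and each of the four brackets on the right lies in $\Sec(\sv(D))$ by the defining property (\ref{2ica}) of an affine Dirac structure. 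Since $\sv(D)$ is a vector subbundle, the signed sum again lies in $\Sec(\sv(D))$, which is exactly (\ref{2ic}). Together with the almost Dirac property this shows that $\sv(D)$ is a Dirac structure.

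I expect the only genuinely delicate point to be the passage from the affine to the linear bracket. One might worry that the failure of $\cC^\infty(N)$-bilinearity of the Courant-Dorfman bracket---the anchor term in (\ref{anchor})---obstructs the expansion of $\lna a-b,c-d\rna$. The point to emphasize is that we combine sections only with the constant coefficients $\pm1$, so no anchor (derivative) terms are produced and the expansion is purely $\R$-bilinear; this is precisely why differences of sections of $D$, rather than arbitrary $\cC^\infty$-combinations, are the right objects to use.
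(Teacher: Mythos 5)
Your proof is correct; the paper itself offers no written proof (it introduces the proposition with ``The following is straightforward''), and your argument is precisely the straightforward unwinding intended: maximal isotropy of $\sv(D)$ is part of the definition, the first integrability condition for $\sv(D)$ follows from that for $D$ because fibres of $\sv(D)$ are differences of elements of the affine fibres of $D$ and $\pr_{\sT N}$ is linear, and the second follows by writing sections of $\sv(D)$ as differences of sections of $D$ and expanding the Courant-Dorfman bracket by $\R$-bilinearity, each resulting term lying in $\Sec(\sv(D))$ by the hypothesis (\ref{2ica}). Your closing remark is also the right one to make: since only constant coefficients $\pm1$ appear, the anchor term in (\ref{anchor}) never arises, so the expansion is legitimate.
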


Let now $F$ be a vector bundle over a manifold $M$. Since both, $\sT F$ and $\sT^\ast F$, are canonically
double vector bundles, their Whitney sum carries a structure of canonical double vector bundle as well. From the general
theory we easily derive the following  (cf. \cite{KU,GR}).
\begin{theo}\label{core} If $F$ is a vector bundle over $M$, its Pontryagin bundle
$\cT F=\sT F\oplus_F\sT^\ast F$, canonically isomorphic to $\sT F\oplus_F\sT^\ast F^\ast$, is also canonically
a double vector bundle structure with two compatible vector bundle structures: $\zt_1:\cT F\ra F$ and
$\zt_2:\cT F\ra\sT M\oplus_M F^\ast$.

The {\em core bundle} of $\cT F$, i.e., a vector bundle over $M$ being the intersection of the kernels of the
both projections, is in this case canonically isomorphic to $\sT^\ast M\oplus_M F$. Moreover, the fibration
$$(\zt_1,\zt_2):\cT F\ra F\oplus_M\sT M\oplus_M F^\ast$$ is an affine bundle modeled on the pull-back core bundle,
i.e., the core bundle $\sT^\ast M\oplus_M F$ over $M$ pulled-back to $F\oplus_M\sT M\oplus_M F^\ast$ {\it via}
the canonical projection $F\oplus_M\sT M\oplus_M F^\ast\ra M$.
\end{theo}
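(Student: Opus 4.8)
The plan is to reduce every assertion to a direct reading-off in the adapted coordinates of Section~\ref{S1}, using the characterization of double vector bundles by commuting Euler vector fields from \cite{GR}. Writing $(x^a,y^i)$ on $F$ and the induced coordinates $(x^a,y^i,{\dot x}^b,{\dot y}^j)$ on $\sT F$ and $(x^a,y^i,p_b,\zp_j)$ on $\sT^\ast F$, a point of $\cT F=\sT F\oplus_F\sT^\ast F$ carries coordinates $(x^a,y^i,{\dot x}^b,{\dot y}^j,p_c,\zp_k)$. The first assertion, the canonical isomorphism $\cT F\cong\sT F\oplus_F\sT^\ast F^\ast$, is then immediate: by Section~\ref{S1} the anti-symplectomorphism $\cR_\zt\colon\sT^\ast F\to\sT^\ast F^\ast$ intertwines the projection $\pi_F\colon\sT^\ast F\to F$ with $\sT^\ast\zp\colon\sT^\ast F^\ast\to F$, so applying $\cR_\zt$ to the second summand yields an isomorphism of Whitney sums over $F$.

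For the double vector bundle structure I identify the two projections. The first, $\zt_1\colon\cT F\to F$, is the common bundle projection $(x^a,y^i,{\dot x}^b,{\dot y}^j,p_c,\zp_k)\mapsto(x^a,y^i)$, along which one adds the pairs (tangent vector, covector) fibrewise over $F$. The second, $\zt_2\colon\cT F\to\sT M\oplus_M F^\ast$, is $\sT\zt$ on the $\sT F$ summand and $\sT^\ast\zt$ on the $\sT^\ast F$ summand; in coordinates, after the identification $\zp_k=\zx_k$ from (\ref{iso1}), it reads $(x^a,y^i,{\dot x}^b,{\dot y}^j,p_c,\zp_k)\mapsto(x^a,{\dot x}^b,\zp_k)$. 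By the criterion of \cite{GR} it suffices to exhibit the two associated Euler vector fields and check that they commute; these are
\be\nabla_1={\dot x}^b\pa_{{\dot x}^b}+{\dot y}^j\pa_{{\dot y}^j}+p_c\pa_{p_c}+\zp_k\pa_{\zp_k}\,,\qquad \nabla_2=y^i\pa_{y^i}+{\dot y}^j\pa_{{\dot y}^j}+p_c\pa_{p_c}\,,\ee
and, being diagonal linear vector fields, they satisfy $[\nabla_1,\nabla_2]=0$. This establishes the two compatible vector bundle structures with bases $F$ and $\sT M\oplus_M F^\ast$.

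Next I compute the core as the set of points projecting to the zero section under both $\zt_1$ and $\zt_2$ over a common $x\in M$; this forces $y^i=0$, ${\dot x}^b=0$ and $\zp_k=0$, leaving the free coordinates $(x^a,{\dot y}^j,p_c)$. I identify this with $\sT^\ast M\oplus_M F$ canonically: the ${\dot y}^j$ parametrize the vertical tangent space of $F$ along its zero section, which is canonically $F_x$, while $p_c\,\xd x^c$ parametrizes $\sT^\ast_x M$. Hence the core is $\sT^\ast M\oplus_M F$.

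Finally, for the affine statement I consider the combined fibration $(\zt_1,\zt_2)\colon\cT F\to F\oplus_M\sT M\oplus_M F^\ast$, whose fibres are coordinatized by the remaining pair $({\dot y}^j,p_c)$, while $(y^i,{\dot x}^b,\zp_k)$ become base coordinates. Tracking the transition functions under a change of chart $x^a\mapsto x^{a'}(x)$, $y^i\mapsto A^{i'}_i(x)y^i$, differentiation gives ${\dot y}^{j'}=A^{j'}_j\,{\dot y}^j+\frac{\pa A^{j'}_j}{\pa x^a}\,{\dot x}^a y^j$ and, dually, $p_{a'}=\frac{\pa x^a}{\pa x^{a'}}\,p_a+\frac{\pa y^i}{\pa x^{a'}}\,\zp_i$. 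The linear parts reproduce exactly the transformation laws of $F$ (on ${\dot y}$) and of $\sT^\ast M$ (on $p$), so the underlying vector bundle is $\sT^\ast M\oplus_M F$ with transition functions depending only on $x$; the inhomogeneous terms depend only on the base coordinates $({\dot x},y,\zp)$ and thus act as affine shifts along the fibres. This exhibits $(\zt_1,\zt_2)$ as an affine bundle modeled on the pull-back of the core $\sT^\ast M\oplus_M F$, as claimed. I expect this last step to be the only real obstacle: everything else is an immediate reading-off from coordinates together with the Euler-vector-field criterion of \cite{GR}, whereas here one must check that $({\dot y}^j,p_c)$ transform affinely rather than linearly, and that the inhomogeneous contributions -- arising from the derivatives $\frac{\pa A^{j'}_j}{\pa x^a}$ of the transition matrix and from the mixing of the momenta $\zp_i$ with the base variable $y$ -- are precisely the affine shifts modeled on the pulled-back core bundle.
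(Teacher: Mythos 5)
Your proof is correct, but it takes a genuinely different route from the paper, which in fact writes out no proof at all: Theorem \ref{core} is prefaced only by the remark that $\sT F$ and $\sT^\ast F$ are canonically double vector bundles, so their Whitney sum is one as well, and the statement is said to follow ``from the general theory'', with a citation of \cite{KU,GR}. In that theory the core of the Whitney sum is the sum of the cores (the core of $\sT F$ is $F$, that of $\sT^\ast F$ is $\sT^\ast M$), and the fact that $(\zt_1,\zt_2)$ is an affine bundle modeled on the pulled-back core is a standard structural property of any double vector bundle. You instead verify everything from scratch in adapted coordinates: the isomorphism with $\sT F\oplus_F\sT^\ast F^\ast$ via $\cR_\zt$ covering the identity on $F$, the double vector bundle structure via the commuting-Euler-fields criterion of \cite{GR} (which the paper itself endorses in Section \ref{S1}), the core by solving $y^i=\dot{x}^b=\zp_k=0$, and the affine statement by an explicit transition-function computation. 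Your route buys self-containedness and explicit formulas -- your $\nabla_1,\nabla_2$ are exactly the Euler vector fields the paper displays later for $\cT E^\ast$, so your computation doubles as a derivation of those -- while the paper's route buys brevity and manifest canonicity, since no chart ever appears. One point you should make explicit: the criterion of \cite{GR} presupposes that both projections are vector bundle structures. This is obvious for $\zt_1$ (a Whitney sum over $F$), but for $\zt_2$ it is precisely the content of your final transition-function computation, which shows that the $\zt_2$-fibre coordinates $(y^i,\dot{y}^j,p_c)$ transform linearly among themselves with coefficients depending only on the $\zt_2$-base coordinates $(x^a,\dot{x}^b,\zp_k)$; so that computation should logically be invoked before, or together with, the commutation of $\nabla_1$ and $\nabla_2$, and it is also what makes your identification of the core with $\sT^\ast M\oplus_M F$ chart-independent, i.e., canonical. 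With that reordering your argument is complete and supplies exactly the details the paper delegates to the literature.
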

\begin{deff}
We call a submanifold $D$ of a double vector bundle its {\it double vector subbundle}, if $D$ is a subbundle
for each of the two vector bundle structures.
\end{deff}
Following the ideas of \cite{GR}, one can easily prove that this means  that the two Euler vector fields
defining the double vector bundle structure are tangent to $D$. One can also equivalently say that $D$ is
invariant with respect to both commuting families of homotheties defined by the two vector bundle structures
(cf. \cite{GR}).

\begin{prop}\label{p1} Let $D$ be a double vector subbundle of a double vector bundle
\begin{equation}\label{dvb}
\xymatrix{ K\ar[r]^{\zt_2} \ar[d]^{\zt_1} & K_2\ar[d]^{\zt_1'} \\
K_1\ar[r]^{\zt_2'} & M }
\end{equation}
Then, $D$ inherits a double vector bundle structure with projections onto  vector bundles $S_i=\zt_i(D)$,
$i=1,2$, where $S_i$ is a vector subbundle of $K_i$.
\end{prop}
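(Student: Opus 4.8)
The plan is to exploit the homothety characterization of double vector subbundles recalled just above the statement: by hypothesis $D$ is invariant under both commuting families of homotheties, $h^1_t$ of the structure $\zt_1$ and $h^2_t$ of the structure $\zt_2$, or equivalently the two Euler vector fields $\nabla^1,\nabla^2$ generating these structures are tangent to $D$. The first thing I would record are the compatibility relations encoded in the double vector bundle (\ref{dvb}). Since $\zt_2\colon K\to K_2$ is a morphism of the $\zt_1$-bundle structures (over $\zt_2'\colon K_1\to M$) and $\zt_1\colon K\to K_1$ is a morphism of the $\zt_2$-bundle structures (over $\zt_1'\colon K_2\to M$), denoting by $k^1_t$ and $k^2_t$ the homotheties of the vector bundles $K_1\to M$ and $K_2\to M$, one has $\zt_2\circ h^1_t=k^2_t\circ\zt_2$ and $\zt_1\circ h^2_t=k^1_t\circ\zt_1$.

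By the definition of a double vector subbundle, $\zt_i|_D\colon D\to S_i=\zt_i(D)$ is already a vector subbundle of $(K,\zt_i)$ supported on the submanifold $S_i\subset K_i$; what the proposition really asserts is that each $S_i$ is moreover linear for the structure $K_i\to M$ and that the two bundle structures on $D$ fit together. For the first point I would verify that $S_1$ is invariant under the homotheties $k^1_t$: for $d\in D$ the relation $\zt_1\circ h^2_t=k^1_t\circ\zt_1$ together with $h^2_t(D)\subset D$ gives $k^1_t(\zt_1(d))=\zt_1(h^2_t(d))\in\zt_1(D)=S_1$, and symmetrically $S_2=\zt_2(D)$ is $k^2_t$-invariant via $\zt_2\circ h^1_t=k^2_t\circ\zt_2$ and the $h^1$-invariance of $D$. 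Being a submanifold of $K_i$ invariant under the homotheties of $K_i\to M$, each $S_i$ is a vector subbundle of $K_i$ by \cite[Theorem 2.3]{GR}, supported on the submanifold $S_0=\zt(D)=\zt_2'(S_1)=\zt_1'(S_2)$ of $M$, where $\zt=\zt_1'\circ\zt_2=\zt_2'\circ\zt_1$.

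It then remains to recognize the two vector bundle structures $\zt_1|_D\colon D\to S_1$ and $\zt_2|_D\colon D\to S_2$ as forming a double vector bundle over the side bundles $S_1,S_2$ and the base $S_0$. Because $D$ is a $\zt_i$-subbundle, $\nabla^i|_D$ is tangent to $D$ and is precisely the Euler vector field of $\zt_i|_D$; these two Euler fields commute, being restrictions of the commuting $\nabla^1,\nabla^2$. By the characterization of double vector bundles as two vector bundle structures whose Euler vector fields commute \cite{GR}, $D$ is then a double vector bundle, the base square being the restriction to $S_1,S_2,S_0$ of the square in (\ref{dvb}).

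The only step that is not purely formal is the passage from homothety-invariance to genuine local triviality with linear fibers of locally constant rank; this is exactly the content of \cite[Theorem 2.3]{GR}, on which I would lean rather than reproduce an adapted-coordinate computation. The point to be careful about is therefore the order of the argument: one must first establish that $S_1$ and $S_2$ are subbundles over one and the same submanifold $S_0\subset M$, and only afterwards recognize the commuting restricted Euler fields as defining the double structure with that base, since attempting to assemble the double structure before controlling the images $S_i$ would leave the base undefined.
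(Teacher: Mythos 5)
Your proof is correct and takes essentially the same route as the paper's: the paper's entire argument is precisely your second paragraph, namely that the homothety $h^1_t$ of $\zt_1$ descends through $\zt_2$ to the homothety of $K_2\ra M$, so the $h^1_t$-invariant submanifold $D$ has homothety-invariant image $S_2=\zt_2(D)$, which is then a vector subbundle of $K_2\ra M$ by \cite[Theorem 2.3]{GR}, with the symmetric statement for $S_1$ left to the reader. Your final step, identifying the restricted commuting Euler vector fields as giving $D$ its own double vector bundle structure over $S_1$, $S_2$ and $S_0$, is not written out in the paper's proof (which leaves that part implicit), so your version is if anything a more complete rendering of the same argument.
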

\begin{proof} It is esy to see that the homothety $h^1_t$, being
the multiplication of vectors of the bundle $\zt_1:K\ra K_1$ by $t\in\R$, coincides on $K_2$ with the
homothety of the vector bundle $K_2\ra M$. The submanifold $D$, being $h^1_t$-invariant, has the base
$S_2\subset K_2$ which is $h^1_t$-invariant, thus is a vector subbundle of $K_2\ra M$ \cite[Theorem 2.3]{GR}.
\end{proof}

\begin{deff} A {\it Dirac algebroid} (resp., {\it Dirac-Lie algebroid}) {\it
structure} on a vector bundle $E$ is an almost Dirac (resp., Dirac) subbundle $D$ of $\cT E^\ast$ being a
double vector subbundle, i.e., $D$ is not only a subbundle of $\zt_1:\cT E^\ast\ra E^\ast$ but also a vector
subbundle of the vector bundle $\zt_2:\cT E^\ast\ra\sT M\oplus_M E$.
\end{deff}
\begin{rem} The above definition gives an analog of linearity of a Poisson or a presymplectic structure.
\end{rem}
\noindent We will consider also {\it affine Dirac algebroids} ({\it Dirac affgebroids} in short).
\begin{deff} An {\it affine Dirac algebroid} on a vector bundle $E$ is an affine
subbundle $D$ of $\cT E^\ast$ whose model vector bundle $\sv(D)\subset\cT E^\ast$ (represented by vertical
vectors tangent to the fibers of $D$) is a Dirac algebroid. An affine Dirac algebroid is called {\it affine
Dirac-Lie algebroid}, if $D$ is an affine Dirac structure, i.e., if the Courant-Dorfman bracket of sections of
$D$ is a section of $\sv(D)$.
\end{deff}
\noindent According to proposition \ref{p0}, if $D$ is an affine Dirac-Lie algebroid, then $\sv(D)$ is a
Dirac-Lie algebroid.

\begin{rem} We can consider as well other affine types of Dirac structures,
defined on affine or special affine bundles, by considering vector-affine bundles of different types (see e.g.
\cite{GRU}), but we skip these considerations here in order not to multiply technical difficulties.
\end{rem}

\medskip
In view of proposition \ref{p1}, a Dirac algebroid $D\subset\cT E^\ast$ projects onto two vector subbundles:
$\Ph_D=\zt_1(D)\subset E^\ast$ and $\Ve_D=\zt_2(D)\subset\sT M\oplus_M E$, both based on a submanifold $M_D$
of $M$, giving rise to a single projection,
\be\label{proj} \zt^D=(\zt_1^D,\zt_2^D):D\ra\Ph_D\op_{M_D}\Ve_D\subset E^\ast\op_M(\sT M\op_M E)\,,
\ee
which, according to theorem \ref{core}, is an affine bundle modeled on the core $\cC_D$ of $D$ pulled-back to
$\Ph_D\op_{M_D}\Ve_D$, i.e., on $(\Ph_D\op_{M_D}\Ve_D)\ti_{M_D}\cC_D$. Note that the core $\cC_D$ is a
subbundle (supported on $M_D$) of $\sT^\ast M\oplus_M E^\ast$ -- the core of the double vector bundle $\cT E$.

The first component in $\Ph_D\op_{M_D}\Ve_D$ we will call the {\it phase bundle} and the second -- the {\it
anchor relation} (or the {\it velocity bundle}) of the Dirac algebroid $D$. The anchor relation is just a
linear relation between vectors of $E$ (`quasi-velocities') and  vectors tangent to $M$ (`actual velocities')
and gives rise to the {\it anchor map}
\be\label{vam}
\zr_D:\sT M\op_M E\supset\Ve_D\ra\sT M_D
\ee
being the projection onto the first summand.

\medskip
To express linearity of an almost Dirac (or Dirac) subbundle of $\cT E^\ast$ in a more explicit way, consider
adapted coordinates $(x^a,\zx_i,\dot{x}^b,\dot{\zx}_j,p_c,y^k)$ on $\cT E^\ast$. The two commuting Euler
vector fields are:
$$\nabla_1=p_a\pa_{p_b}+\dot{\zx}_j\pa_{\dot{\zx}_j}+y^i\pa_{y^i}+\dot{x}^b\pa_{\dot{x}^b}\,,$$
corresponding to the vector bundle structure over $E^\ast$ with coordinates $(x,\zx)$, and
$$\nabla_2=p_a\pa_{p_b}+\zx_i\pa_{\zx_i}+\dot{\zx}_j\pa_{\dot{\zx}_j}\,,$$
corresponding to vector bundle structure over $\sT M\oplus_M E$ with coordinates $(x,\dot{x},y)$. The
corresponding homotheties read
\bea\label{ho1} h^1_t(x^a,\zx_i,\dot{x}^b,\dot{\zx}_j,p_c,y^k)&=&(x^a,\zx_i,t\dot{x}^b,t\dot{\zx}_j,tp_c,ty^k)\,,\\
\label{ho2}h^2_s(x^a,\zx_i,\dot{x}^b,\dot{\zx}_j,p_c,y^k)&=&(x^a,s\zx_i,\dot{x}^b,s\dot{\zx}_j,sp_c,y^k)\,,
\eea
and a linear almost Dirac subbundle in $\cT E^\ast$ (Dirac algebroid) should be invariant with respect to both
sets of homotheties. Note that the canonical symmetric pairing is represented by the quadratic function
$Q(x^a,\zx_i,\dot{x}^b,\dot{\zx}_j,p_c,y^k)=p_a\dot{x}^a+y^i\dot{\zx}_i$ which vanishes on Dirac algebroids.

\medskip
%%%%%%%%%%%%%%%%%%%%

\begin{example}\label{e1} The graph of any linear bivector field
$$ \zP =\frac{1}{2}c^k_{ij}(x)\zx_k
\partial _{\zx_i}\we \partial _{\zx_j} + \zr^b_i(x) \partial _{\zx_i}
\wedge \partial _{x^b}\,,
$$
where $c^k_{ij}=-c^k_{ji}$, is a Dirac algebroid:
$$\Graph(\zP)=\{(x^a,\zx_i,\dot{x}^b,\dot{\zx}_j,p_c,y^k):
\dot{x}^b=\zr^b_k(x)y^k\,,\ \dot{\zx}_j=c^k_{ij}(x) y^i\zx_k - \zr^a_j(x) p_a\}\,.$$ It is clear that $Q$
vanishes on $\Graph(\zP)$. This graph is a double vector subbundle, since the constraint functions
\be\label{kk}\dot{x}^b-\zr^b_k(x)y^k\,,\ \dot{\zx}_j-c^k_{ij}(x) y^i\zx_k + \zr^a_j(x) p_a
\ee
are homogeneous with respect to the Euler vector fields $\nabla_1,\nabla_2$. The phase bundle is here $E^\ast$
and the anchor relation is actually the graph of the vector bundle morphism $\zr:E\ra\sT M$ (the anchor map)
given in local coordinates by $\zr(x^a,y^i)=(x^a,\zr_i^b(x)y^i)$. This means that skew-algebroids are
particular examples of Dirac algebroids. The Dirac algebroids of this form, associated with a bivector field
$\zP$, we will call {\it $\zP$-graph Dirac algebroids} on $E$ and denote $D_\zP$. The Dirac algebroid $D_\zP$
is a Dirac-Lie algebroid if and only if $\zP$ is a Poisson tensor, i.e., if and only if we deal with a Lie
algebroid.
\end{example}
\begin{example}\label{e2} The graph of any linear 2-form
$$ \zw=\frac{1}{2}c^k_{ab}(x)\zx_k\xd x^a\we\xd x^b + \zr_b^i(x) \xd{\zx_i} \wedge \xd{x^b}\,,
$$
where $c^k_{ab}=-c^k_{ba}$, is a Dirac algebroid:
$$\Graph(\zw)=\{(x^a,\zx_i,\dot{x}^b,\dot{\zx}_j,p_c,y^k):
y^i=\zr^i_a(x)\dot{x}^a\,,\ p_a=c^k_{ab}(x)\zx_k\dot{x}^b - \zr_a^i(x)\dot{\zx}_i\}\,.$$ It is clear that $Q$
vanishes on $\Graph(\zw)$. This graph is a double vector subbundle, since the constraint functions
\be\label{kkk}y^i-\zr^i_a(x)\dot{x}^a\,,\ p_a-c^k_{ab}(x)\zx_k\dot{x}^b +
\zr_a^i(x)\dot{\zx}_i
\ee
are homogeneous with respect to the Euler vector fields $\nabla_1,\nabla_2$. The phase bundle is here $E^\ast$
and the anchor relation is in fact the graph of the vector bundle morphism $\zr:\sT M\ra E$ given in local
coordinates by $\zr(x^a,\dot{x}^b)=(x^a,\zr^i_b(x)\dot{x}^b)$. The Dirac algebroids of this form, associated
with a 2-form $\zw$, we will call {\it $\zw$-graph Dirac algebroids} and denote $D_\zw$. The Dirac algebroid
$D_\zw$ is a Dirac-Lie algebroid ({\it presymplectic Dirac-Lie algebroid}) if and only if $\zw$ is closed.
\end{example}
\begin{example}\label{e3} The {\it canonical Dirac-Lie algebroid}
$D_M=D_{\zP_M}=D_{\zw_M}$, corresponding to the canonical Lie algebroid $E=\sT M$, belongs to the both above
types. It is associated with the canonical symplectic form $\zw_M$ on $E^\ast=\sT^\ast M$ and, simultaneously,
to the canonical Poisson tensor $\zP_M=\zw_M^{-1}$ on $\sT^\ast M$. In our local coordinates, the equations
defining $D_M$ are
$$\dot{x^a}=y^a\,,\ \dot{\zx_b}=-p_b\,.$$
\end{example}
\begin{example}\label{e5}
Suppose we have a Dirac (Dirac-Lie) algebroid $D$ on $E\ra M$. Let us consider the extension $E_0=E\ti\R$ as a
vector bundle over $M_0=M\ti\R$ in the obvious way. Then, $E_0^\ast=E^\ast\ti\R$, $\sT E_0^\ast=\sT
E^\ast\ti\sT \R$, and $\sT^\ast E_0^\ast=\sT^\ast E^\ast\ti\sT^\ast\R$. The subbundle $D_0=D\ti A_0$ in $\cT
E_0^\ast=\cT E^\ast\ti\cT\R$, where $A_0$ is the affine subbundle in $\cT\R$ defined by the constraint
$\dot{x}_0=1$ in the natural coordinates $(x_0,\dot{x_0},p_0)$ on $\cT\R$, is an affine Dirac (Dirac-Lie)
algebroid on $E_0$.
\end{example}

\section{The structure of a Dirac algebroid}

Let us start this paragraph with recalling that any section $\zs:N\ra F$ of a vector bundle $F\ra N$
(actually, of any fibration) is uniquely determined by its image $\zs(N)$ -- a submanifold of $F$. We will
denote this submanifold by $[\zs]$.
\begin{deff} Let $K$ be a double vector bundle (\ref{dvb}). We say that a section $\wt{\zs}:K_1\ra K$ {\it projects} on the section $\zs:M\ra K_2$,
if $\zt_2$ projects $[\wt{\zs}]$ onto $[\zs]$. We will write $\wt{\zs}^{\zt_2}=\zs$ and call such $\wt{\zs}$
{\it projectable}.

We say that a section $\wt{\zs}:K_1\ra K$ is {\it suitable}, if $[\wt{\zs}]$ is a vector subbundle of the
vector bundle $\zt_2:K\ra K_2$.
\end{deff}
It is easy to see the following
\begin{theo}
Any suitable section $\wt{\zs}$ is projectable and $[\wt{\zs}^{\zt_2}]$ is the image under $\zt_2\wt{\zs}$ of the
zero-section $0_{K_1}$ of $\zt_1$. Moreover, the set of suitable sections, $\Suit(K)$, is canonically a
$C^\infty(M)$-module and the module morphism $[\zt_2]:\wt{\zs}\mapsto\wt{\zs}^{\zt_2}$ is an epimorphism onto
$\Sec(K_2)$.
\end{theo}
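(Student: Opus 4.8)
The plan is to reduce everything to a normal form for suitable sections, obtained from the homothety criterion for subbundles \cite[Theorem~2.3]{GR} used repeatedly above: a submanifold of a vector bundle is a vector subbundle (supported on a submanifold of the base) precisely when it is invariant under the homotheties of that bundle, equivalently when the corresponding Euler vector field is tangent to it. Once a suitable $\wt\zs$ is normalized, all four assertions fall out.

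\emph{Normal form.} In adapted double-bundle coordinates $(x^a,y^i,u^r,w^\mu)$ on $K$, with $\zt_1(x,y,u,w)=(x,y)$, $\zt_2(x,y,u,w)=(x,u)$ and core coordinates $w^\mu$, the homothety of $\zt_2$ is $h^2_s(x,y,u,w)=(x,sy,u,sw)$ and its Euler field is $\nabla_2=y^i\pa_{y^i}+w^\mu\pa_{w^\mu}$. A section of $\zt_1$ reads $\wt\zs(x,y)=(x,y,U(x,y),W(x,y))$, and $[\wt\zs]$ is cut out by $u^r-U^r=0$, $w^\mu-W^\mu=0$. Suitability means $[\wt\zs]$ is a $\zt_2$-subbundle, i.e. $\nabla_2$ is tangent to it; applying $\nabla_2$ to the two defining functions forces $U$ to be homogeneous of degree $0$ and $W$ of degree $1$ in $y$, that is, $U=U(x)$ is independent of $y$ and $W$ is linear in $y$. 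I expect this translation to be the substantive step; the remaining assertions are then essentially bookkeeping.

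\emph{Projectability and the zero-section description.} From the normal form, $\zt_2([\wt\zs])=\{(x,U(x)):x\in M\}$ is the image of the genuine section $\zs:x\mapsto(x,U(x))$ of $\zt_1':K_2\ra M$, so $\wt\zs$ is projectable and $\wt\zs^{\zt_2}=\zs$. Since $U$ does not depend on $y$, the map $\zt_2\circ\wt\zs:K_1\ra K_2$ is constant along the fibres of $K_1\ra M$; restricting it to the zero-section $0_{K_1}$ therefore already recovers all of $[\zs]=[\wt\zs^{\zt_2}]$, which is the claimed description.

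\emph{Module structure and surjectivity.} The sections of $\zt_1$ form a module under fibrewise $\zt_1$-addition $+_1$ and fibrewise scalar multiplication; I restrict the scalars to $C^\infty(M)$ pulled back along $\zt_2':K_1\ra M$. Commutativity of the two vector bundle structures (their homotheties commute, by the characterization of double vector bundles used above) means that $+_1$ and $\zt_1$-scaling are morphisms of the $\zt_2$-structure, and that $h^2_s(\wt\zs(x,y))=\wt\zs(x,sy)$ for a suitable $\wt\zs$. Hence $h^2_s$-invariance of $[\wt\zs],[\wt\zs']$ passes to $[\wt\zs+_1\wt\zs']$ and, for $f\in C^\infty(M)$, to $[f\cdot_1\wt\zs]$; the only point requiring care is that $f$ must depend on $x$ alone, since for an $f$ genuinely depending on $y$ the $K_2$-component $fU$ would no longer be $y$-independent --- this is exactly why $\Suit(K)$ is canonically a $C^\infty(M)$-module rather than a $C^\infty(K_1)$-module. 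Finally, $\zt_2$ is fibrewise linear from $\zt_1$ to $\zt_1'$, so $[\zt_2]$ is additive and $C^\infty(M)$-linear; it is onto because, given $\zs$, the local lift $\wt\zs(x,y)=(x,y,U(x),0)$ is suitable with $\wt\zs^{\zt_2}=\zs$, and a partition of unity on $M$ assembles such local lifts into a global suitable section, which still projects to $\zs$ because $[\zt_2]$ is a module morphism. This yields the epimorphism $[\zt_2]:\Suit(K)\ra\Sec(K_2)$.
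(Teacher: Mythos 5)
Your proof is correct and takes essentially the same approach as the paper: the paper gives no separate argument for this theorem (it is prefaced by ``It is easy to see''), but immediately afterwards it carries out, for $K=\cT E^\ast$, exactly your computation --- invariance of $[\wt{\zs}]$ under the homotheties $h^2_s$ (equivalently, tangency of the second Euler vector field) forces the $K_2$-components to be independent of the $\zt_1$-base fibre coordinates and the core components to be linear in them, ``as smooth homogeneous functions are linear.'' Your general normal form, the $C^\infty(M)$-module bookkeeping, and the partition-of-unity argument for surjectivity of $[\zt_2]$ merely spell out details the paper leaves implicit, using only the homothety/subbundle facts from \cite{GR} that the paper itself invokes.
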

Suitable sections which project on the zero-section of the bundle $K_1$ we will call {\it 0-suitable}. So the
set $\Suit_0(K)$ of 0-suitable sections is the kernel of the map $[\zt_2]:\Suit(K)\ra\Sec(K_2)$. A standard
argument shows that that the $\cim$-modules $\Suit(K)$ and $\Suit_0(K)$ are the modules of sections of certain
vector bundles over $M$, $\suit(K)$ and $\suit_0(K)$, respectively, but we will not go into details here.

\medskip
All this can be applied to the situation of the Pontryagin bundle over the vector bundle $E^*$,
\begin{equation}\label{dvbp}
\xymatrix{ \cT E^\ast\ar[rr]^{\zt_2} \ar[d]^{\zt_1} && \sT M\oplus_M E\ar[d]^{\zt_M\ti\zt} \\
E^*\ar[rr]^{\zp} && M }
\end{equation}
and easily explained in our standard local coordinates $(x^a,\zx_i,\dot{x}^b,\dot{\zx}_j,p_c,y^k)$. The image
of a section $\wt{\zs}$ of $\zt_1$ consists of points
$$\left(x^a,\zx_i,\dot{x}^b(x,\zx),\dot{\zx}_j(x,\zx),p_c(x,\zx),y^k(x,\zx)\right)\in \cT E^\ast\,.$$
This section is projectable if and only if the coefficients $\dot{x}^b$ and $y^k$ depend on $x$ only,
$$\dot{x}^b=\dot{x}^b(x)\,,\ y^k=y^k(x)\,,$$ thus $\wt{\zs}$ projects onto the section
$$\zs(x)=(x,\dot{x}^b(x),y^k(x))$$
of $\sT M\op_M E$. Since being a vector subbundle means exactly being a submanifold invariant with respect to
homotheties \cite{GR}, $\wt{\zs}$ is suitable if the submanifold
$$[\wt{\zs}]=\left\{\left(x,\zx_i,\dot{x}^b(x,\zx),\dot{\zx}_j(x,\zx),p_c(x,\zx),y^k(x,\zx)\right)\in \cT E^\ast\right\}$$
is invariant with respect to homotheties (\ref{ho2}), i.e.,
\beas\dot{x}^b(x,s\zx)=\dot{x}^b(x,\zx)\,,&\ y^k(x,s\zx)=y^k(x,\zx)\,,
\\ \dot{\zx}_j(x,s\zx)=s\dot{\zx}_j(x,\zx)\,,&\ p_c(x,s\zx)=s p_c(x,\zx)\,.
\eeas
As smooth homogeneous functions are linear, we get finally that $\dot{x}^b$ and $y^k$ do not depend on $\zx$
($\wt{\zs}$ is projectable) and that $\dot{\zx}_j$ and $p_c$ linearly depend on $\zx$,
\be\label{lin}\dot{\zx}_j(x,\zx)=\dot{\zx}_j^i(x)\zx_i\,,\ p_c(x,\zx)=p_c^i(x)\zx_i\,.
\ee
Recall that the section $\wt{\zs}$ is $X+\za$, where the vector field on $E^\ast$ reads
$$X=\dot{x}^b(x,\zx)\pa_{\dot{x}^b}+\dot{\zx}_j(x,\zx)\pa_{\dot{\zx}_j}$$
and the 1-form $\za$ is
$$\za=p_c(x,\zx)\xd x^c+y^k(x,\zx)\xd\zx_k\,.$$
Since linearity is measured by homogeneity with respect to the Euler vector field in the bundle, this implies
immediately the following.
\begin{theo}\label{impo}
Let $\nabla$ be the Euler vector field in the vector bundle $E^\ast$. A section $X+\za$ of $\zt_1:\cT
E^\ast\ra E^\ast$ is suitable if and only if $\cL_\nabla X=0$ and $\cL_\nabla\za=\za$.
\end{theo}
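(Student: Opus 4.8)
The plan is to read off the statement directly from the coordinate description of suitability obtained in the paragraphs just above, by translating the two homogeneity requirements there into Lie-derivative conditions with respect to the Euler vector field of $E^\ast$, which in the coordinates $(x^a,\zx_i)$ reads $\nabla=\zx_i\pa_{\zx_i}$. Recall that a section $\wt{\zs}=X+\za$ of $\zt_1$ was shown to be suitable precisely when its component functions satisfy (\ref{lin}) together with projectability, i.e. $\dot{x}^b$ and $y^k$ are independent of $\zx$ (homogeneous of degree $0$ in the fibre coordinate $\zx$), while $\dot{\zx}_j$ and $p_c$ are linear in $\zx$ (homogeneous of degree $1$). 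Since for a smooth function $f(x,\zx)$ homogeneity of degree $\zl$ in $\zx$ is exactly the equation $\nabla(f)=\zl f$ (Euler's theorem, already invoked above when asserting that smooth homogeneous functions are linear), it remains only to verify that the two conditions $\cL_\nabla X=0$ and $\cL_\nabla\za=\za$ encode precisely this pattern of degrees.

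First I would compute $\cL_\nabla X=[\nabla,X]$ for $X=\dot{x}^b\pa_{x^b}+\dot{\zx}_j\pa_{\zx_j}$. Using the brackets $[\nabla,\pa_{x^b}]=0$ and $[\nabla,\pa_{\zx_j}]=-\pa_{\zx_j}$ one obtains
$$\cL_\nabla X=\nabla(\dot{x}^b)\,\pa_{x^b}+\bigl(\nabla(\dot{\zx}_j)-\dot{\zx}_j\bigr)\pa_{\zx_j}\,,$$
so that $\cL_\nabla X=0$ holds if and only if $\nabla(\dot{x}^b)=0$ and $\nabla(\dot{\zx}_j)=\dot{\zx}_j$, that is, $\dot{x}^b$ is of degree $0$ and $\dot{\zx}_j$ of degree $1$ in $\zx$. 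Dually, for $\za=p_c\,\xd x^c+y^k\,\xd\zx_k$ I would use $\cL_\nabla(\xd x^c)=0$ and $\cL_\nabla(\xd\zx_k)=\xd\zx_k$ to get
$$\cL_\nabla\za=\nabla(p_c)\,\xd x^c+\bigl(\nabla(y^k)+y^k\bigr)\xd\zx_k\,,$$
whence $\cL_\nabla\za=\za$ is equivalent to $\nabla(p_c)=p_c$ and $\nabla(y^k)=0$, i.e. $p_c$ of degree $1$ and $y^k$ of degree $0$.

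Combining the two computations, the pair of conditions $\cL_\nabla X=0$ and $\cL_\nabla\za=\za$ is equivalent to $\dot{x}^b,y^k$ being independent of $\zx$ and $\dot{\zx}_j,p_c$ being linear in $\zx$, which is exactly the coordinate characterization of suitability established above. This yields both implications at once. The only point requiring care---and the reason the two halves of the statement look asymmetric ($\cL_\nabla X=0$ versus $\cL_\nabla\za=\za$)---is the opposite behaviour of vectors and covectors under $\nabla$: the bracket $[\nabla,\pa_{\zx_j}]=-\pa_{\zx_j}$ lowers fibre degree by one on the vector side, whereas $\cL_\nabla(\xd\zx_k)=\xd\zx_k$ raises it on the form side. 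Tracking these shifts correctly is the entire content of the computation; once the signs are in place, the equivalence with (\ref{lin}) is immediate.
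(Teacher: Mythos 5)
Your proposal is correct and follows essentially the same route as the paper: the paper states the theorem as following ``immediately'' from the coordinate characterization of suitability (projectability of $\dot{x}^b,y^k$ plus linearity of $\dot{\zx}_j,p_c$ in $\zx$) together with the identification of homogeneity degree with the Euler vector field, and your computation of $\cL_\nabla X$ and $\cL_\nabla\za$ is exactly the omitted verification, with all signs and degree shifts handled correctly.
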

Such vector fields and 1-forms are sometimes called, with some abuse of terminology, {\it linear}. Hence,
$X+\za$ is suitable if and only if $X$ and $\za$ are {\it linear}. This allows one to identify the bundle
$\suit(\cT E^\ast)$ with $\Der(E)\op_M(\Der(E)^\ast\ot_M E)$ with $\Der(E)$ being the bundle of {\it
quasi-derivations} (or {\it derivative endomorphisms} or {\it quasi-derivations}) in $E$ (see \cite{KSM}). We
will not go into details here.

A fundamental observation is now the following.
\begin{theo}\label{impo1} If $\,\wt{\zs}_i$, $i=1,2$, are suitable sections of $\cT E^\ast$, then
$\lna\wt{\zs}_1,\wt{\zs}_2\rna$ and $\xd\left(\wt{\zs}_1|\wt{\zs}_2\right)$ are suitable. Moreover, if
$\wt{\zs}_2$ is additionally 0-suitable, then $\lna\wt{\zs}_1,\wt{\zs}_2\rna$ and
$$\lna\wt{\zs}_2,\wt{\zs}_1\rna -2\xd\left(\wt{\zs}_1|\wt{\zs}_2\right)$$
are 0-suitable.

In particular, suitable sections of $\cT E^\ast$ are closed with respect to the Courant-Dorfman bracket and
0-suitable sections form a left-ideal inside.
\end{theo}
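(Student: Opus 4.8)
Theorem \ref{impo1} concerns suitable sections of $\cT E^\ast$ — those $X+\za$ for which $X$ is a linear vector field and $\za$ is a linear $1$-form (Theorem \ref{impo}). I need to show suitability is preserved by the Courant-Dorfman bracket and by $\xd(\cdot|\cdot)$, and that $0$-suitable sections behave as claimed. Let me think about what characterizes these objects and how the bracket interacts with them.

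Theorem \ref{impo} gives a clean intrinsic criterion: $X+\za$ is suitable iff $\mathcal{L}_\nabla X = 0$ and $\mathcal{L}_\nabla \za = \za$, where $\nabla$ is the Euler vector field on $E^\ast$. So the whole theorem reduces to tracking how these two conditions propagate. For the Courant-Dorfman bracket $\lna X_1+\za_1, X_2+\za_2\rna = [X_1,X_2] + \mathcal{L}_{X_1}\za_2 - i_{X_2}\xd\za_1$, I need $\mathcal{L}_\nabla[X_1,X_2]=0$ and $\mathcal{L}_\nabla(\mathcal{L}_{X_1}\za_2 - i_{X_2}\xd\za_1) = \mathcal{L}_{X_1}\za_2 - i_{X_2}\xd\za_1$. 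The key computational tool will be the commutation identities: since $\mathcal{L}_\nabla X_i = 0$, the Jacobi/Leibniz identities for Lie derivatives give $\mathcal{L}_\nabla[X_1,X_2] = [\mathcal{L}_\nabla X_1, X_2] + [X_1, \mathcal{L}_\nabla X_2] = 0$ immediately. For the $1$-form part, $\mathcal{L}_\nabla \mathcal{L}_{X_1}\za_2 = \mathcal{L}_{[\nabla,X_1]}\za_2 + \mathcal{L}_{X_1}\mathcal{L}_\nabla\za_2 = \mathcal{L}_{X_1}\za_2$ (using $[\nabla,X_1]=0$ and $\mathcal{L}_\nabla\za_2=\za_2$), and similarly $\mathcal{L}_\nabla i_{X_2}\xd\za_1 = i_{[\nabla,X_2]}\xd\za_1 + i_{X_2}\mathcal{L}_\nabla\xd\za_1 = i_{X_2}\xd\za_1$, where $\mathcal{L}_\nabla \xd\za_1 = \xd\mathcal{L}_\nabla\za_1 = \xd\za_1$. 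So both suitability conditions transfer.

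For $\xd(X_1+\za_1|X_2+\za_2) = \frac{1}{2}\xd(\za_1(X_2)+\za_2(X_1))$: this is an exact $1$-form, so $X=0$ trivially satisfies $\mathcal{L}_\nabla X = 0$. For the form condition I need $\mathcal{L}_\nabla \xd f = \xd f$ where $f = \frac{1}{2}(\za_1(X_2)+\za_2(X_1))$, equivalently $\mathcal{L}_\nabla f = f$ (since $\xd$ commutes with $\mathcal{L}_\nabla$). The function $\za_1(X_2)$ is the pairing of a linear $1$-form with a linear vector field, hence a linear function, so $\mathcal{L}_\nabla f = f$; concretely $\mathcal{L}_\nabla(\za_1(X_2)) = (\mathcal{L}_\nabla\za_1)(X_2) + \za_1(\mathcal{L}_\nabla X_2) = \za_1(X_2)$. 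The $0$-suitable statements then follow from the characterization of $0$-suitability: $\wt{\zs}_2 = X_2+\za_2$ is $0$-suitable iff it is suitable and projects to the zero section of $K_1=E^\ast$, i.e. (in coordinates) $X_2$ has vanishing $\pa_{\dot x}$-components and $\za_2$ has vanishing $\xd\zx$-components, which intrinsically means $X_2$ is vertical (tangent to fibers of $\zp$) and $\za_2$ annihilates the vertical bundle — or most cleanly, $\wt{\zs}_2^{\zt_2}=0$. I expect the left-ideal claim to follow by checking that the projection $[\zt_2]$ is a morphism compatible with the bracket, so that $[\zt_2]\lna\wt\zs_1,\wt\zs_2\rna$ depends on $\wt\zs_2$ only through $\wt\zs_2^{\zt_2}=0$.

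The main obstacle is the symmetrized expression $\lna\wt{\zs}_2,\wt{\zs}_1\rna - 2\xd(\wt{\zs}_1|\wt{\zs}_2)$ being $0$-suitable when only $\wt{\zs}_2$ is $0$-suitable. Here I cannot simply invoke the ideal property, since the first slot carries the $0$-suitable section. The plan is to use the standard Courant-algebroid identity relating the two orderings, $\lna a,b\rna + \lna b,a\rna = 2\xd(a|b)$, so that $\lna\wt{\zs}_2,\wt{\zs}_1\rna - 2\xd(\wt{\zs}_1|\wt{\zs}_2) = -\lna\wt{\zs}_1,\wt{\zs}_2\rna$. Then the expression equals $-\lna\wt{\zs}_1,\wt{\zs}_2\rna$, which by the first part (with $\wt\zs_2$ in the second, $0$-suitable slot) is $0$-suitable. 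I would therefore verify first that $\lna\wt{\zs}_1,\wt{\zs}_2\rna$ is $0$-suitable when $\wt{\zs}_2$ is $0$-suitable — by examining the anchor/projection condition $\mathcal{L}_\nabla$ together with the vanishing of $\wt{\zs}_2^{\zt_2}$ — and then read off the symmetrized statement as an immediate algebraic consequence of the Courant symmetry identity. The genuinely delicate point is confirming that the $\pr_{\sT N}$-component of $\lna\wt\zs_1,\wt\zs_2\rna$ stays vertical (the $0$-projection condition), which I expect to handle via the coordinate description (\ref{lin}) of suitable sections or via the observation that $[\zt_2]$ intertwines the Courant-Dorfman bracket with the bracket of sections on $\sT M\op_M E$.
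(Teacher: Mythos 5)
Your proposal is correct, and it organizes the argument differently from the paper in a way worth comparing. For the first assertion the paper simply notes that linearity of the $X_i$ and $\za_i$ is ``of course'' inherited by $[X_1,X_2]$, $\cL_{X_1}\za_2-i_{X_2}\xd\za_1$ and $\xd\left(i_{X_1}\za_2+i_{X_2}\za_1\right)$; your intrinsic verification via Theorem \ref{impo}, using $[\cL_\nabla,\cL_{X_1}]=\cL_{[\nabla,X_1]}$, $[\cL_\nabla,i_{X_2}]=i_{[\nabla,X_2]}$ and $\cL_\nabla\circ\xd=\xd\circ\cL_\nabla$, is a careful justification of exactly that claim, so there the content coincides. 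The genuine divergence is in the ``moreover'' part: the paper writes both suitable sections in the coordinate form (\ref{lin}) and computes the projection $\lna\wt{\zs}_1,\wt{\zs}_2\rna^{\zt_2}$ explicitly (formula (\ref{dbracket})), reading off \emph{both} claims from it --- the projection vanishes when the second argument projects to zero, and equals $2\left(\xd(\wt{\zs}_1|\wt{\zs}_2)\right)^{\zt_2}$ when the first one does. You instead prove only the left-ideal claim and obtain the symmetrized one algebraically from the Dorfman identity $\lna a,b\rna+\lna b,a\rna=2\xd(a|b)$ (valid here by Cartan's formula), which gives $\lna\wt{\zs}_2,\wt{\zs}_1\rna-2\xd(\wt{\zs}_1|\wt{\zs}_2)=-\lna\wt{\zs}_1,\wt{\zs}_2\rna$. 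This halves the computational work and, as a bonus, explains why the correction term $2\xd(\cdot|\cdot)$ must appear in the statement at all.

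The one step you defer --- that $\lna\wt{\zs}_1,\wt{\zs}_2\rna$ is $0$-suitable when $\wt{\zs}_2$ is --- does go through, but your two suggested strategies are not equally safe. The coordinate route is exactly the paper's computation restricted to one case; intrinsically one can also argue that $[X_1,X_2]$ is vertical (linear fields are projectable and $X_2$ is vertical) and that $\cL_{X_1}\za_2-i_{X_2}\xd\za_1$ annihilates vertical vectors by a short Cartan-calculus check. The ``intertwining'' idea, however, needs care: by (\ref{dbracket}) the projection of $\lna\wt{\zs}_1,\wt{\zs}_2\rna$ is \emph{not} a function of the two projections alone --- it involves the full coefficients $f^i_j$, $g^i_b$ of $\wt{\zs}_1$ --- it only depends on $\wt{\zs}_2$ through $\wt{\zs}_2^{\zt_2}$, which is the weaker statement you actually need (compare $\wt{\zs}_2$ with the zero section); establishing even that weaker statement again amounts to the same coordinate computation. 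So your proof is sound, but its kernel is still the paper's formula (\ref{dbracket}), invoked once instead of twice.
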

\begin{proof} If $X_i$ and $\za_i$ are linear, $i=1,2$, then of course
$[X_1,X_2]$ and $\cL_{X_1}\za_2-i_{X_2}\za_1$, as well as $\xd\left(i_{X_1}\za_2+i_{X_2}\za_1\right)$, are
linear. To find the projection $\lna\wt{\zs}_1,\wt{\zs}_2\rna^{\zt_2}$ in coordinates, let us write
\beas\wt{\zs}_1(x,\zx)&=&{\dot{x}^{b}}(x)\pa_{x^{b}}+f_i^j(x)\zx_j\pa_{\zx_i}+{y^i}(x)\xd\zx_i
+g_a^j(x)\zx_j\xd x\,,\\
\wt{\zs}_2(x,\zx)&=&{\dot{\bar{x}}^{b}}(x)\pa_{x^{b}}+\bar{f}_i^j(x)\zx_j\pa_{\zx_i}+{\bar{y}^i}(x)\xd\zx_i
+\bar{g}_a^j(x)\zx_j\xd x\,.
\eeas
Then, direct calculations of the Courant-Dorfman bracket show that $\lna\wt{\zs}_1,\wt{\zs}_2\rna^{\zt_2}$ is
represented by the tensor
\be\label{dbracket}
\left(\dot{x}^{c}\frac{\pa \dot{\bar{x}}^{b}}{\pa{x^{c}}}-\dot{\bar{x}}^{c}\frac{\pa
\dot{{x}}^{b}}{\pa{x^{c}}}\right)(x)\pa_{x^{b}}+\left({\dot{x}^{b}}\frac{\pa {\bar{y}^i}}{\pa
{x^{b}}}-{\dot{\bar{x}}^{b}}\frac{\pa {{y}^i}}{\pa {x^{b}}}+{\bar{y}^j}f_j^i+
{\dot{\bar{x}}^{b}}g_{b}^i\right)(x)\xd\zx_i\,.
\ee
If $\dot{\bar{x}}$ and $\bar{y}$ are 0, we get 0. If $\dot{{x}}$ and ${y}$ are 0, we get
$$\left({\bar{y}^j}f_j^i+
{\dot{\bar{x}}^{b}}g_{b}^i\right)(x)\xd\zx_i=2\left(\xd(\wt{\zs}_1|\wt{\zs}_2)\right)^{\zt}_2\,.
$$
\end{proof}

It is clear that having a double vector subbundle $D$, e.g. Dirac algebroid, we can consider suitable sections
of $D$ in the same manner. As the scalar products $\left(\wt{\zs}_1|\wt{\zs}_2\right)$ vanish for sections of
a Dirac algebroid, out of theorem \ref{impo1} we can easily derive the following. Let us fix a Dirac algebroid
with an anchor relation $\Ve_D$ inducing an anchor map $\zr_D:\Ve_D\ra\sT M_D$.
\begin{theo}\label{impo2}
If $D$ is a Dirac algebroid satisfying the first-integrability condition, then the Courant-Dorfman bracket
induces on the module of suitable sections of $D$ a skew-symmetric bracket
$$\lna\cdot,\cdot\rna_D:\Suit(D)\ti\Suit(D)\ra\Suit\left((\cT E^\ast)_{|\Ph_D}\right)$$
such that
$$\lna\wt{\zs}_1,f\wt{\zs}_2\rna_D=f\lna\wt{\zs}_1,\wt{\zs}_2\rna_D+\zr_D(\wt{\zs}_1^{\zt_2})(f)\wt{\zs}_2$$
for all $f\in C^\infty(M_D)$. Moreover, if one of the sections is 0-suitable, the the resulted bracket is
0-suitable.

In the case when $D$ is a Dirac-Lie algebroid, the Courant-Dorfman bracket is a Lie algebra bracket on
$\Suit(\cT E^\ast)$  for which $\Suit_0(\cT E^\ast)$ is a Lie ideal and turns the bundles $\suit(D)$ and
$\suit_0(D)$ into Lie algebroids. Moreover, in this situation, as $\Suit(D)/\Suit_0(D)\simeq\Sec(\Ve_D)$, we
get a Lie algebroid bracket on the anchor bundle $\Ve_D$ that gives rise to a canonical short exact sequence
of Lie algebroids associated with the Dirac-Lie algebroid $D$.
$$0\longrightarrow\suit_0(D)\longrightarrow\suit(D)\longrightarrow\Ve_D\longrightarrow 0\,.$$
\end{theo}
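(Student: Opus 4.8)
The plan is to derive everything from Theorem \ref{impo1} together with the two integrability conditions, using the local description of suitable sections obtained above. First I would take two suitable sections $\wt{\zs}_1,\wt{\zs}_2$ of $D$ and extend them to suitable sections of $\cT E^\ast$; by Theorem \ref{impo1} their Courant-Dorfman bracket $\lna\wt{\zs}_1,\wt{\zs}_2\rna$ is again suitable. The first-integrability condition (\ref{1ic}), i.e. $\pr_{\sT E^\ast}(D)\subset\sT\Ph_D$, guarantees through the anchor identity (\ref{anchor}) that the value of $\lna\wt{\zs}_1,\wt{\zs}_2\rna$ along $\Ph_D$ does not depend on the chosen extensions, exactly as in the discussion preceding (\ref{restr}); restricting to $\Ph_D$, which is invariant under the homotheties, then yields a well-defined map $\lna\cdot,\cdot\rna_D:\Suit(D)\ti\Suit(D)\ra\Suit((\cT E^\ast)_{|\Ph_D})$. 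Skew-symmetry over $\Ph_D$ is immediate from the standard identity $\lna s_1,s_2\rna+\lna s_2,s_1\rna=2\,\xd(s_1|s_2)$ and the fact that the symmetric pairing $(\cdot|\cdot)$ vanishes on sections of the maximally isotropic subbundle $D$.

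Next I would obtain the Leibniz-type rule directly from the anchor identity (\ref{anchor}): for $f\in C^\infty(M_D)$ pulled back to $\Ph_D$ one has $\lna\wt{\zs}_1,f\wt{\zs}_2\rna=f\lna\wt{\zs}_1,\wt{\zs}_2\rna+X_1(f)\wt{\zs}_2$, where $X_1=\pr_{\sT E^\ast}\wt{\zs}_1$. Since $f$ depends only on the base coordinates and the base component of the linear vector field $X_1$ is precisely the $\sT M$-part of $\wt{\zs}_1^{\zt_2}$, i.e. the vector field $\zr_D(\wt{\zs}_1^{\zt_2})$ on $M_D$, I get $X_1(f)=\zr_D(\wt{\zs}_1^{\zt_2})(f)$, which is the asserted formula. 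The ideal statement---if one of the arguments is 0-suitable the bracket is 0-suitable---I would read off from Theorem \ref{impo1}, which gives 0-suitability of $\lna\wt{\zs}_1,\wt{\zs}_2\rna$ whenever $\wt{\zs}_2$ is 0-suitable, the opposite order being covered by the skew-symmetry over $\Ph_D$ established above.

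For the Dirac-Lie case, closure of $\Suit(\cT E^\ast)$ under the Courant-Dorfman bracket and the left-ideal property of $\Suit_0(\cT E^\ast)$ are already furnished by Theorem \ref{impo1}; the genuinely skew-symmetric (Lie) bracket appears upon restriction to $D$. Indeed, I would invoke the second-integrability condition (\ref{2ic}): the bracket of sections of $D$ stays in $\Sec(D)$, where---as recalled after (\ref{2ic})---it restricts to a Lie algebroid bracket. Hence the bracket of suitable sections of $D$ lands in $\Suit(D)$ and is a Lie algebra bracket there (skew by isotropy, Jacobi by involutivity), with $\Suit_0(D)$ a two-sided ideal by the property proved above. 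Combined with the Leibniz rule and the anchor $\wt{\zs}\mapsto\zr_D(\wt{\zs}^{\zt_2})$, this makes $\suit(D)$ a Lie algebroid over $M_D$; on 0-suitable sections the anchor vanishes and the bracket is $C^\infty(M_D)$-bilinear and 0-suitable, so $\suit_0(D)$ is a Lie algebroid as well.

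Finally, the module epimorphism $[\zt_2]:\Suit(D)\ra\Sec(\Ve_D)$ with kernel $\Suit_0(D)$ gives $\Suit(D)/\Suit_0(D)\simeq\Sec(\Ve_D)$; because $\Suit_0(D)$ is an ideal and the Leibniz rule is compatible with $[\zt_2]$, the bracket descends to $\Sec(\Ve_D)$, equipping $\Ve_D$ with a Lie algebroid structure whose anchor is $\zr_D$, and the sequence $0\ra\suit_0(D)\ra\suit(D)\ra\Ve_D\ra 0$ becomes a short exact sequence of Lie algebroids. I expect this last descent to be the main obstacle: one has to verify that the induced bracket on $\Sec(\Ve_D)$ is independent of the chosen suitable lifts and satisfies the Lie algebroid axioms, which is exactly where the ideal property of $\Suit_0(D)$ and the compatibility of the Leibniz rule with the projection $[\zt_2]$ do the essential work.
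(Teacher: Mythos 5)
Your proposal follows essentially the same route as the paper: the paper obtains Theorem \ref{impo2} precisely by combining Theorem \ref{impo1} with the vanishing of the scalar products $(\wt{\zs}_1|\wt{\zs}_2)$ on the isotropic subbundle $D$, the first (resp.\ second) integrability condition for well-definedness (resp.\ closure in the Dirac-Lie case), the anchor identity (\ref{anchor}) for the Leibniz rule, and the quotient $\Suit(D)/\Suit_0(D)\simeq\Sec(\Ve_D)$ for the short exact sequence, which is exactly the skeleton you flesh out. The only cosmetic difference is that you deduce the two-sided ideal property of $\Suit_0(D)$ from skew-symmetry, whereas the paper reads it directly off the second clause of Theorem \ref{impo1}, in which the correction term $2\,\xd\left(\wt{\zs}_1|\wt{\zs}_2\right)$ drops out on $D$ by isotropy.
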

In the case of a Lie algebroid $E$ associated with a linear Poisson structure $\zP$ on $E^\ast$ the Lie
bracket of sections of $E$ can be recognized inside the Lie algebroid on sections of $D_\zP$ as the bracket of
sections $\zP(\za)+\za$, associated with `linear 1-forms' $\za$, in coordinates $\za=y^i(x)\xd\zx_i$. The
above theorem provides a generalization of this fact and, for each Dirac-Lie algebroid, describes the induced
Lie algebroid structure on its velocity bundle.

\bigskip
The next theorem characterizes the core bundle of a Dirac algebroid in terms of its anchor relation.
\begin{theo}\label{main0}  The core bundle $\cC_D\subset\sT^\ast M\oplus_M E^\ast$ of a Dirac algebroid $D\subset\cT E^\ast$
is the annihilator subbundle $\Ve_D^0\subset\sT^\ast M\oplus_M E^\ast$ of the anchor relation $\Ve_D\subset\sT
M\oplus_M E$.
\end{theo}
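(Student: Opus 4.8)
The plan is to prove Theorem \ref{main0} by a direct coordinate computation, exploiting the explicit description of the core bundle provided by Theorem \ref{core} together with the coordinate form of the Euler vector fields and homotheties already set up in the excerpt. Recall that by Theorem \ref{core}, the core bundle $\cC_D$ consists of those elements of $D$ lying in the intersection of the kernels of both projections $\zt_1$ and $\zt_2$; equivalently, in the adapted coordinates $(x^a,\zx_i,\dot{x}^b,\dot{\zx}_j,p_c,y^k)$ on $\cT E^\ast$, the core of the ambient bundle $\cT E$ is the set of points with $\zx_i=0$ and $\dot x^b=0,\ y^k=0$ (the base coordinates for $\zt_1$ and $\zt_2$ respectively vanishing), leaving precisely the fibre coordinates $(x^a,\dot{\zx}_j,p_c)$, which is exactly $\sT^\ast M\oplus_M E^\ast$.

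First I would make precise the identification $\cC_D=\cC_{\cT E}\cap D$. Since $D$ is a double vector subbundle, it is invariant under both families of homotheties $h^1_t$ and $h^2_s$ of (\ref{ho1})--(\ref{ho2}), and Proposition \ref{p1} guarantees $D$ projects onto the vector subbundles $\Ph_D=\zt_1(D)$ and $\Ve_D=\zt_2(D)$. The core $\cC_D$ is the fibre over the zero-sections of both $\Ph_D$ and $\Ve_D$, i.e. the vectors in $D$ with vanishing $\zt_1$- and $\zt_2$-images. In coordinates, such a core element has $\zx_i=0$ (zero in $\Ph_D\subset E^\ast$) and $\dot x^b=0,\ y^k=0$ (zero in $\Ve_D\subset\sT M\oplus_M E$), so it is determined by $(\dot{\zx}_j,p_c)$, confirming $\cC_D\subset\sT^\ast M\oplus_M E^\ast$.

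The heart of the argument is then to show $\cC_D=\Ve_D^0$. The annihilator $\Ve_D^0$ consists of covectors $p_c\,\xd x^c+\dot{\zx}_j\,\xd\xi^j$ (viewing $\dot\zx_j$ as the $E^\ast$-component) that annihilate every element $(\dot x^b,y^k)\in\Ve_D$, i.e. satisfy $p_b\dot x^b+\dot\zx_k y^k=0$ for all $(\dot x,y)\in\Ve_D$. The key input is that the canonical symmetric pairing vanishes on $D$: as noted in the excerpt, it is represented by $Q=p_a\dot x^a+y^i\dot\zx_i$ and vanishes identically on any Dirac algebroid. I would argue that an element $(p_c,\dot\zx_j)$ of the core lies in $D$ precisely when, added to an arbitrary element of $D$ projecting to a given $(\dot x,y)\in\Ve_D$, it preserves maximal isotropy; the bilinear form associated to $Q$ then forces the annihilation condition $p_b\dot x^b+\dot\zx_k y^k=0$. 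Conversely, maximal isotropy (the subbundle $D$ has fibre dimension equal to $\dim N$, half that of $\cT E^\ast$) supplies a dimension count showing that every annihilator covector is realized, giving the reverse inclusion $\Ve_D^0\subseteq\cC_D$.

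The main obstacle I anticipate is handling the bookkeeping of the two intertwined linear structures cleanly — in particular verifying that the maximal isotropy of $D$ with respect to $(\cdot|\cdot)$ translates exactly into the orthogonality pairing between the core $\cC_D$ and the anchor relation $\Ve_D$, rather than some twisted or partial version. The cleanest route is likely a dimension count: since $D$ is maximally isotropic, $\dim D=\dim\cT E^\ast/2$, and combining this with the affine-bundle structure of (\ref{proj}) (whose model is $\cC_D$) yields $\rk\cC_D=\rk\sT M\oplus_M E^\ast-\rk\Ve_D$ over $M_D$, which matches $\rk\Ve_D^0$ exactly. Having established both the inclusion $\cC_D\subseteq\Ve_D^0$ via isotropy of $Q$ and the matching ranks, the equality $\cC_D=\Ve_D^0$ follows.
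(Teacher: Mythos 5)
Your proposal is correct and takes essentially the same route as the paper's proof: you obtain the inclusion $\cC_D\subseteq\Ve_D^0$ by polarizing the isotropy condition between a core element and an arbitrary element of $D$ (exactly the paper's computation $(d|d)=0$ and $(d+u_x|d)=0$, hence $\la \zt_2(d),u_x\ran=0$), and you get equality from the rank count forced by maximal isotropy, which is precisely what the paper invokes with ``the equality follows from the conditions on the rank.'' Your explicit dimension count via the affine fibration (\ref{proj}) modeled on $\cC_D$ is just a spelled-out version of that same step.
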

\begin{proof}
To an element $d\in D$ that projects onto $(\zt_1, \zt_2)(d)=(\zm_x,v_x)$ we can add any element $u_x$ of the
$x$-fiber of the core not changing the projections, so, due to isotropy, $\la v_x,u_x\ran=0$ for all
$v_x\in(\Ve_D)_x$ and $\cC_D\subset\Ve_D^0$. The equality follows from the conditions on the rank. In
coordinates, $d$ is represented by
$$d={\dot{x}^a}\pa_{x^a}+f_i\pa_{\zx_i}+{y^i}\xd\zx_i+g_a\xd x^a$$
and
$$u_x={\dot{\zx_i}}\pa_{\zx_i}+{p_a}\xd x^a\,.$$
Since $(d|d)=0$ and $(d+u_x|d)=0$, we have $\la \zt_2(d),u_x\ran=0$, i.e.
$${\dot{x}^a}{p_a}+{y^i}{\dot{\zx_i}}=0\,.
$$

\end{proof}

In order to describe the local form of a Dirac algebroid $D$, note first that, since an arbitrary Dirac
algebroid $D\subset \cT E^\ast$ is the restriction to the phase bundle $\Ph_D\subset E^\ast$ of a Dirac
algebroid supported on the whole bundle $E^\ast$, we can assume at the beginning for simplicity that
$\Ph_D=E^\ast$. As the Pontryagin bundle $\cT E^\ast$ is, as the bundle over the projection
$$(\zt_1,\zt_2):\cT E^\ast\ra(\sT M\op_M E)\op_M E^\ast$$
an affine bundle modeled on the pull-back bundle of the anchor bundle $\sT^\ast M\op_M E^\ast$ (Theorem
\ref{core}), we can write
\be\label{cTform}
\cT E^\ast\simeq \left(E^\ast\op_M \sT M\op_M E\right)\ti_M(\sT^\ast M\op_M E^\ast)\,.
\ee
Note that the product $\ti_M$ in the above expression is not canonical, but it can be used to express the fact
that we can add elements of $\sT^\ast_x M\op E^\ast_x$ to elements of $E^\ast_x\op \sT_x M\op E_x$ and to
serve for introducing local coordinates. Instead of the coordinates we have already used, it will be more
convenient to introduce affine coordinates $(x^a,\zh^i,\wh{\zh}^j)$in $\sT_x M\op E_x$  and dual affine
coordinates $(x^a,\zz_i,\wh{\zz}_j)$ in $\sT^\ast_x M\op E^\ast_x$, so that $(\zh^i,\wh{\zh}^j)$ represent
linear coordinates in fibers of the anchor relation $\Ve_D$ and its (non-canonical) complementary subbundle
$V$, $\sT_x M\op E_x=\Ve_D\op_M V$, respectively, and the coordinates $(\zz_i,\wh{\zz}_j)$ are linear
coordinates in the annihilators $\sT^\ast_x M\op E^\ast_x=V^0\op_M\Ve_D^0$, respectively. Note that $V^0$
represents the dual bundle $\Ve_D^\ast$.

The points of $D$ satisfy then $\wh{\zh}^j=0$. Since we can add elements of the core $\cC_D=\Ve_D^0$,
coordinates $\wh{\zz}_j$ are arbitrary. Therefore, there are sections $\wt{\zs}_i$ of $D$ associated with the
canonical local basis $\zs_i$ of sections of $\Ve_D$, $\zh^{i'}(\zs_i(x))=\zd^{i'}_i$, which read
$$\zs_i(x^a,\zx_j)=\left(x^a,\zx_j,\zh^{i'}=\zd^{i'}_i,0,\zz_{i'}=c_{ii'}^j(x)\zx_j,0\right)\,.$$
Due to isotropy, we have skew-symmetry $c_{ii'}^j(x)=-c_{i'i}^j(x)$. Now, we can add linear constraint $\Ph_D$
in $E^\ast$ by introducing affine coordinates, say $(x,\wh{x},\zx\wh{\zx})$, such that $\Ph_D$ is expressed by
$\wh{x}=0$, $\wh{\zx}=0$. In this way we get the following
\begin{theo} {\rm (local form of a Dirac algebroid)}

\noindent In the introduced local affine coordinates the Dirac algebroid $D$ consists of points
$(x,\wh{x},\zx,\wh{\zx},\zh,\wh{\zh},\zz,\wh{\zz})$ for which
\be\label{locall}
\wh{x}=0\,,\quad \wh{\zx}=0\,,\quad\wh{\zh}=0\,,\quad \zz_{k}=c_{ik}^j(x)\zh^i\zx_j\,.
\ee
Moreover, $c_{ik}^j(x)=-c_{ki}^j(x)$.
\end{theo}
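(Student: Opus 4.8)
The plan is to read off the defining equations of $D$ from the three structural facts already established---the affine-bundle description of $\cT E^\ast$ (Theorem \ref{core}), the projection behaviour of a double vector subbundle (Proposition \ref{p1}), and the identification of the core (Theorem \ref{main0})---and then to pin down the remaining freedom using the two homotheties (\ref{ho1}), (\ref{ho2}) and isotropy. Since an arbitrary Dirac algebroid is the restriction to $\Ph_D$ of one with full phase bundle, I would first treat the case $\Ph_D=E^\ast$ and only at the end impose the linear phase constraint: choosing affine fibre coordinates on $E^\ast$ adapted to the vector subbundle $\Ph_D$ produces exactly the equations $\wh x=0$, $\wh\zx=0$, so it suffices to obtain the last two equations of (\ref{locall}) under the assumption $\Ph_D=E^\ast$.

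For the coordinate setup, by Theorem \ref{core} the map $(\zt_1,\zt_2)\colon\cT E^\ast\ra(\sT M\op_M E)\op_M E^\ast$ is an affine bundle modelled on the pull-back of the core $\sT^\ast M\op_M E^\ast$. I would fix a splitting $\sT M\op_M E=\Ve_D\op_M V$, with $\Ve_D=\zt_2(D)$ the anchor relation and $V$ any complement, together with the dual splitting $\sT^\ast M\op_M E^\ast=V^0\op_M\Ve_D^0$; these are precisely the splittings that introduce $(\zh,\wh\zh)$ and $(\zz,\wh\zz)$. Two facts then locate $D$ inside these coordinates. First, by Proposition \ref{p1} the $\zt_2$-projection of $D$ is $\Ve_D$, on which the $V$-coordinates $\wh\zh$ vanish, so $\wh\zh=0$ on $D$. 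Second, by Theorem \ref{main0} the core of $D$ equals $\Ve_D^0$, which in the dual splitting is exactly the subspace spanned by the $\wh\zz$-directions; since by (\ref{proj}) the affine fibres of $(\zt_1,\zt_2)|_D$ are translates of this core, $\wh\zz$ is free along $D$ while $\zz$ is a function of the base point, $\zz_k=F_k(x,\zx,\zh)$. A quick rank count ($\dim D=2(n+m)$ against $n+m+r$ free base coordinates plus $(n+m)-r$ free core coordinates) confirms that these are all the constraints.

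It remains to determine $F_k$, and here the double vector subbundle hypothesis does the work. Invariance of $D$ under the two families of homotheties (\ref{ho1}), (\ref{ho2})---equivalently, tangency to the two Euler vector fields---translates in the affine coordinates into separate homogeneity of degree one in $\zh$ (scaled by $h^1$) and of degree one in $\zx$ (scaled by $h^2$), while $x$ is fixed by both. As smooth homogeneous functions are linear, $F_k$ must be bilinear with $x$-dependent coefficients, $\zz_k=c^j_{ik}(x)\,\zh^i\zx_j$, which is the fourth equation of (\ref{locall}); alternatively one obtains the coefficients by lifting a local basis $\zs_i$ of $\Sec(\Ve_D)$ to suitable sections of $D$ and invoking the linearity (\ref{lin}) of suitable sections (Theorem \ref{impo}). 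Skew-symmetry comes from isotropy: as $D$ is maximally isotropic, the quadratic function $Q$ vanishes on it, and on $D$ (where $\wh\zh=0$) the function $Q$ reduces to the pairing of $\zz$ with $\zh$, giving $c^j_{ik}(x)\zh^i\zh^k\zx_j=0$ for all $\zh,\zx$; hence the symmetric part in $(i,k)$ vanishes, i.e. $c^j_{ik}=-c^j_{ki}$.

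The step I expect to be the most delicate is the bookkeeping in the second paragraph: one must verify that the chosen non-canonical splittings make $\wh\zh=0$ and ``$\wh\zz$ free, $\zz$ determined'' hold simultaneously, i.e. that the identification $\cC_D=\Ve_D^0$ of Theorem \ref{main0} is genuinely compatible with the dual splitting $\sT^\ast M\op_M E^\ast=V^0\op_M\Ve_D^0$, so that the core is exactly the $\wh\zz$-axis. Once the coordinates are honestly aligned in this way, the homogeneity and isotropy arguments are routine.
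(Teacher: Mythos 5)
Your proposal is correct and follows essentially the same route as the paper: reduce to $\Ph_D=E^\ast$, use Theorem \ref{core} and the splittings $\sT M\op_M E=\Ve_D\op_M V$, $\sT^\ast M\op_M E^\ast=V^0\op_M\Ve_D^0$ to get $\wh{\zh}=0$ with $\wh{\zz}$ free (via $\cC_D=\Ve_D^0$ from Theorem \ref{main0}), then pin down $\zz_k=c^j_{ik}(x)\zh^i\zx_j$ by homogeneity under the two homotheties and obtain skew-symmetry from isotropy. Your explicit spelling-out of the isotropy computation ($Q|_D=c^j_{ik}\zh^i\zh^k\zx_j=0$) and the rank count merely fill in details the paper leaves implicit, and your homothety argument is the same mechanism underlying the paper's construction of the lifted sections $\wt{\zs}_i$.
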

Let us note that the above constraints can be viewed as a common generalizations of (\ref{kk}) and
(\ref{kkk}). The functions $c_{ik}^j$ play the role of structure functions and $\wh{\zh}=0$ defines the anchor
relation. We can write $(\zh,\wh{\zh})$ as linear functions of variables $(\dot{x},y)$  and $(\zz,\wh{\zz})$
as linear functions of $(p,\dot{\zx})$ (with coefficients being functions of $x$) to derive constraints
\be\label{localll}
\wh{\zh}(x,\dot{x},y)=0\,,\quad \zz_{i}(x,p,\dot{\zx})+c_{ik}^j(x)\zh^k(x,\dot{x},y)\zx_j=0\,.
\ee
\begin{example}\label{egen}
For the $\zP$-graph Dirac algebroid as described in example \ref{e1} we have
$$\zh=y\,,\ \wh{\zh}^b=\dot{x}^b-\zr^b_k(x)y^k\,,\ \zz_j=\dot{\zx}_j+\zr^a_j(x)p_a\,,\ \wh{\zz}_b=p_b
$$
and the equations (\ref{localll}) read
$$\dot{x}^b-\zr^b_k(x)y^k=0\,,\ \dot{\zx}_j + \zr^a_j(x) p_a+c^k_{ji}(x) y^i\zx_k=0\,,$$
exactly as in (\ref{kk}).
\end{example}

\section{Induced Dirac algebroids}
In this section we will show how appropriate linear (or affine) `nonholonomic constraints' in the velocity
bundle $\Ve_D$ give rise to new (induced) Dirac algebroids. These construction may be viewed as a
generalization of
%example \ref{e4} and
the similar construction for the canonical Lie algebroid $E=\sT M$ in
\cite{YM1}.

Consider a Dirac algebroid $D\subset\cT E^\ast$ and let $V$ be a vector subbundle of the velocity bundle
$\Ve_D\subset\sT M\oplus_M E$ supported on $S\subset M_D\subset M$. Let $\wt{V}=(\zt_2^D)^{-1}(V)$ be the
restriction of the vector bundle $\zt_2^D:D\ra\Ve_D$ to the submanifold $V$ in the base, and let $V^0\subset
\sT^\ast M\oplus E^\ast$ be the annihilator of $V$. Of course, $V^0$ is supported on $S$ as well and
$V^0\supset \Ve_D^0=\cC_D$. Since $\sT^\ast M\oplus E^\ast$ is the core of $\cT E^\ast$, we may add vectors
from $V^0$ to vectors of the vector bundle $\zt_1:\cT E^\ast\ra E^\ast$ not changing any of two projections.
In this sense, $D^V=\wt{V}+V^0$ is again a double vector subbundle of $\cT E^\ast$ which is no longer $D$, but
still projects on $V$ {\it via} $\zt_2$, and on $\Ph_D$ {\it via} $\zt_1$.

\begin{theo}\label{tnh0} The double vector subbundle $D^V$ in $\cT E^\ast$ is a Dirac algebroid on $E$.
\end{theo}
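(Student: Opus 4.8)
The plan is to verify directly the two defining properties of a Dirac algebroid: that $D^V$ is a double vector subbundle of $\cT E^\ast$, and that it is an almost Dirac structure, i.e. maximally isotropic for the pairing $(\cdot|\cdot)$. The first property is essentially built into the construction. Indeed, $\wt{V}=(\zt_2^D)^{-1}(V)$ is the preimage of the vector subbundle $V\subset\Ve_D$ under the morphism $\zt_2^D=\zt_2|_D$; since $\zt_2$ intertwines the homotheties (\ref{ho1}), (\ref{ho2}) with the corresponding homotheties on $\sT M\op_M E$ and $V$ is homothety-invariant, $\wt{V}$ is invariant under both (\ref{ho1}) and (\ref{ho2}), hence a double vector subbundle of $D$. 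As $V^0$ sits inside the core $\sT^\ast M\op_M E^\ast$, whose elements scale linearly under both homotheties and can be added without affecting either projection, the sum $D^V=\wt{V}+V^0$ is again invariant under both homotheties, so by the criterion of \cite{GR} it is a double vector subbundle. Thus the substance of the theorem is isotropy and maximality.

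The key structural fact I would isolate first is the behaviour of the pairing against core elements. Writing a general element of $\cT E^\ast$ as $d=\dot{x}^a\pa_{x^a}+f_i\pa_{\zx_i}+y^i\xd\zx_i+g_a\xd x^a$ and a core element as $u=\dot{\zx}_i\pa_{\zx_i}+p_a\xd x^a$, a one-line computation with the definition of $(\cdot|\cdot)$ gives
\be\label{pairingcore}
(d|u)=\tfrac12\bigl(y^i\dot{\zx}_i+p_a\dot{x}^a\bigr)=\tfrac12\la\zt_2(d),u\ran\,,
\ee
where $\la\cdot,\cdot\ran$ is the canonical duality pairing between $\sT M\op_M E$ and $\sT^\ast M\op_M E^\ast$; the same computation shows $(u_1|u_2)=0$, so the core is isotropic. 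These identities are precisely those already used in the proof of Theorem \ref{main0}.

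Isotropy of $D^V$ then follows at once. Decomposing $w_i=d_i+u_i$ with $d_i\in\wt{V}\subset D$ and $u_i\in V^0$, I expand
$$(w_1|w_2)=(d_1|d_2)+(d_1|u_2)+(u_1|d_2)+(u_1|u_2)\,.$$
The first term vanishes because $D$ is isotropic, the last because the core is isotropic, and each cross term equals $\tfrac12\la\zt_2(d_i),u_j\ran$ by (\ref{pairingcore}); but $\zt_2(d_i)\in V$ while $u_j\in V^0$ annihilates $V$, so these vanish too. Hence $D^V$ is isotropic.

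The remaining, and I expect the most delicate, step is maximality, which I would settle by a fibrewise rank count. Fix $\zm\in\Ph_D$ over $x\in S$, put $n=\dim M$, $m=\rk E$, so that the Lagrangian dimension in $\cT E^\ast$ is $n+m$. The restriction $\zt_2^D:D\to\Ve_D$ is fibrewise surjective with kernel the core, which by Theorem \ref{main0} is $\cC_D=\Ve_D^0$; thus $\dim\wt{V}_\zm=\rk V+(n+m-\rk\Ve_D)$ and $\dim V^0_x=(n+m)-\rk V$. Since $V\subset\Ve_D$ forces $\Ve_D^0\subset V^0$, and since $\wt{V}_\zm\subset D_\zm$ meets the core exactly in $\cC_D=\Ve_D^0$, one obtains $\wt{V}_\zm\cap V^0_x=\Ve_D^0$, of dimension $n+m-\rk\Ve_D$. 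Adding up,
$$\dim D^V_\zm=\dim\wt{V}_\zm+\dim V^0_x-\dim(\wt{V}_\zm\cap V^0_x)=n+m\,,$$
which is half the rank of $\cT E^\ast$. An isotropic subbundle of this dimension is maximally isotropic, so $D^V$ is an almost Dirac structure and, together with the double vector subbundle property, a Dirac algebroid. The one point demanding care is the intersection formula $\wt{V}_\zm\cap V^0_x=\Ve_D^0$: it rests on $\wt{V}$ lying inside $D$, whose fibre meets the core only in $\cC_D$, and once this is in place the count closes.
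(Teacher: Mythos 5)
Your proof is correct and takes essentially the same route as the paper's: isotropy via the decomposition $D^V=\wt{V}+V^0$, with $V^0$ isotropic and orthogonal to $\wt{V}$ (which you justify by the core-pairing identity $\left(d|u\right)=\tfrac12\la\zt_2(d),u\ran$, the same computation the paper uses for Theorem \ref{main0}), and maximality by the same rank count, which the paper phrases as losing $\dim(\Ve_D/V)$ and regaining $\dim(V^0/\Ve_D^0)=\dim(\Ve_D/V)$, and which you recast as inclusion-exclusion using $\wt{V}_\zm\cap V^0_x=\Ve_D^0$. Your additional verification of the double vector subbundle property via homothety invariance merely makes explicit what the paper asserts in the paragraph preceding the theorem.
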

\begin{proof} The subbundle $D^V$ is isotropic by definition, since $\wt{V}$ is isotropic as a subbundle of $D$,
and $V^0$ is isotropic and orthogonal to $\wt{V}$. The rank of this bundle is maximal, since first we loose
rank by $\dim(\Ve_D/V)$ and then we gain $\dim(V^0/\Ve_D^0)=\dim(\Ve_D/V)$.
\end{proof}

\begin{deff} The Dirac algebroid $D^V$ we will call the Dirac algebroid {\it induced from $D$ by the subbundle $V\subset \Ve_D$}.
\end{deff}

Quite similarly, we can induce affine Dirac algebroids using an affine subbundle $A$ of $\Ve_D$ based on a
submanifold $S\subset M$. Let $V=\sv(A)$ be its model vector bundle viewed as a vector subbundle of $\Ve_D$.
Let us put $\wt{A}=(\zt_2^D)^{-1}(A)$, and let $V^0\subset \sT^\ast M\oplus E^\ast$ be the annihilator of $V$.
The vector subbundle $\wt{A}$ of $\zt_2^D:D\ra\Ve_D$ is simultaneously an affine subbundle of
$\zt_1^D:D\ra\Ph_D$, thus {\it vector-affine} subbundle. Similarly as above, $D^A=\wt{A}+V^0$ is again a
vector-affine subbundle of $\cT E^\ast$ which still projects on $A$ {\it via} $\zt_2$, and on $\Ph_D$ {\it
via} $\zt_1$. Analogously to theorem \ref{tnh0} one can prove the following.

\begin{theo}\label{tnh9} The vector-affine subbundle $D^A$ in $\cT E^\ast$ is an affine Dirac algebroid on $E$.
\end{theo}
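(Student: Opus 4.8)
The plan is to deduce the statement from its linear counterpart, Theorem~\ref{tnh0}, by identifying the model vector bundle of $D^A$. By the definition of an affine Dirac algebroid, two things must be checked: that $D^A$ is an affine subbundle of $\zt_1:\cT E^\ast\ra E^\ast$, which is precisely the vector-affine property already noted when $D^A$ was introduced, and that its model vector bundle $\sv(D^A)\subset\cT E^\ast$ is a Dirac algebroid, i.e. an almost Dirac double vector subbundle. Since for a (non-Lie) Dirac algebroid no integrability is required, the entire content lies in the second point, and I would prove it by showing that $\sv(D^A)$ coincides with $D^V$, the Dirac algebroid induced from $D$ by the model vector bundle $V=\sv(A)\subset\Ve_D$.

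To establish $\sv(D^A)=D^V$ I would use two observations. First, since $\zt_2^D:D\ra\Ve_D$ is a morphism of the double vector bundle $D$, passing to model bundles commutes with the fibred preimage, so that $\sv(\wt{A})=(\zt_2^D)^{-1}(\sv(A))=(\zt_2^D)^{-1}(V)=\wt{V}$ as affine/vector subbundles over $\Ph_D$. Second, the annihilator $V^0$ is a vector subbundle of the core $\sT^\ast M\oplus_M E^\ast$, hence consists of vectors vertical for the $\zt_1$-structure and lies entirely in the model directions of the affine bundle $\wt{A}$ over $\Ph_D$. Adding such a vector piece to an affine bundle yields an affine bundle whose model is the sum of the two models, so $\sv(D^A)=\sv(\wt{A}+V^0)=\wt{V}+V^0=D^V$.

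With this identification in hand, Theorem~\ref{tnh0} applied to $V=\sv(A)$ gives at once that $\sv(D^A)=D^V$ is a Dirac algebroid: it is isotropic because $\wt{V}$ is isotropic in $D$ while $V^0$ is isotropic and orthogonal to $\wt{V}$, and it is maximally isotropic by the rank count that loses $\dim(\Ve_D/V)$ and regains $\dim(V^0/\Ve_D^0)=\dim(\Ve_D/V)$; the double vector subbundle property is inherited likewise. By the definition of affine Dirac algebroid this is exactly what is required. The step I expect to be the main obstacle is the clean identification $\sv(D^A)=D^V$: one must take the model bundle with respect to the correct ($\zt_1$-)structure and check that $\sv$ is compatible both with the preimage $(\zt_2^D)^{-1}$ and with the addition of the core subbundle $V^0$. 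A more pedestrian alternative, if one prefers to avoid this bookkeeping, is to repeat verbatim the isotropy-plus-rank argument of Theorem~\ref{tnh0} directly on $\sv(D^A)$, but reducing to Theorem~\ref{tnh0} is more economical.
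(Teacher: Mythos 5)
Your proposal is correct. The paper in fact offers no written proof of this theorem---it only remarks that it can be proved ``analogously to Theorem \ref{tnh0}'', i.e., by repeating the isotropy-plus-maximal-rank computation---so your route, which instead identifies the model bundle $\sv(D^A)$ with the linearly induced Dirac algebroid $D^{V}$ and then quotes Theorem \ref{tnh0} as a black box, is a genuinely different (and more economical) packaging of the same content: it replaces a repeated argument by a reduction lemma. Both of your key observations are sound. Since $\zt_2$ is linear on the fibres of $\zt_1$ (double vector bundle compatibility), the $\zt_1$-affine subbundle $\wt{A}=(\zt_2^D)^{-1}(A)$ has model bundle $(\zt_2^D)^{-1}(\sv(A))=\wt{V}$; and $V^0$ lies in the core, so adding it translates each $\zt_1$-fibre of $\wt{A}$ by a fixed vector subspace of that fibre, whence $\sv(\wt{A}+V^0)=\wt{V}+V^0=D^V$. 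Two minor points you should make explicit to close the bookkeeping you yourself flag as the main obstacle: (i) the fibrewise identification of models requires each fibre $\wt{A}_\zm$, for $\zm$ in the support of $D^A$, to be nonempty---this follows from Theorem \ref{core}, because $(\zt_1^D,\zt_2^D):D\ra \Ph_D\op_{M_D}\Ve_D$ is an affine bundle, hence surjective; (ii) the base of the affine subbundle $D^A$ in $E^\ast$ is the restriction of $\Ph_D$ over $S$, not all of $\Ph_D$. Neither point affects the validity of your argument, and with them included your proof is complete.
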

\begin{deff} The affine Dirac algebroid $D^A$ we will call the affine Dirac algebroid
{\it induced from $D$ by the affine subbundle $A\subset \Ve_D$}.
\end{deff}

\begin{example}\label{nhco} Consider  a Dirac algebroid $D_\zP$ on a vector bundle $\zt:E\ra M$, associated with a linear bivector field $\zP$ on $E^\ast$. Since the anchor relation $\Ve_{D_\zP}$ is in this case the graph of the anchor map $\zr:E\ra\sT M$, subbundles $V$ of $\Ve_{D_\zP}$ may be identified with subbundles $V_0$ of $E$, $V=\{ \zr(v)+v;v\in V_0\}$.

It is convenient to see all this in local coordinates $(x^a,\zx_i,\dot{x}^b,\dot{\zx}_j,p_c,y^k)$ in $\cT
E^\ast$. We may choose local coordinates in $(x^a)=(x^\za,x^A)$ in $M$, so that $S$ is given locally by
$x^A=0$. Let us also use linear coordinates $(y^i)$ in the fibers of $E$, so that $y=(y^i)=(y^\zi,y^I)$ and
the subbundle $V_0$ is defined by the constraint $y^I=0$. On $\cT E^\ast$ we have then local coordinates
$(x^a,\dot{x}^b,\dot{\zx}_l,p_c,y^\zi,y^I)$ where we have also decompositions $(\zx_k)=(\zx_\zk,\zx_K)$ and
$(\dot{\zx}_l)=(\dot{\zx}_\zl,\dot{\zx}_L)$ associated with the decomposition $(y^i)=(y^\zi,y^I)$. The double
subbundle $\wt{V}$ is defined by the constraints (cf. example \ref{e1})
$$\wt{V}=\{(x^a,\zx_i,\dot{x}^b,\dot{\zx}_j,p_c,y^k):\quad x^A=0\,,\ y^I=0\,,\ \dot{x}^b=\zr^b_\zi(x)y^\zi\,,\ \dot{\zx}_k=c^j_{\zi k}(x) y^\zi\zx_j-\zr^a_k(x) p_a\}\,.
$$
The points $(x^a,p_b,\dot{\zx}_i)$ of $\sT^\ast M\oplus_M E^\ast$ belong to $V^0$ if and only if $x^A=0$ and
$p_b\zr^b_\zi(x)y^\zi+\dot{\zx}_\zi y^\zi=0$ for all $(y^\zi)$, thus $\dot{\zx}_\zi=-\zr^b_\zi(x)p_b$ and
$\dot{\zx}_I$ are arbitrary. As the first condition agrees with the original constraints, we get the final
constraints defining $D_\zP^V$:
\be\label{con}
x^A=0\,,\ \dot{x}^b=\zr^b_\zi(x)y^\zi\,,\ \dot{\zx}_\zk=c^j_{\zi \zk}(x) y^\zi\zx_j-\zr^a_\zk(x) p_a\,,\
y^I=0\,,
\ee
as adding ${V}^0$ makes $\dot{\zx}_K$ arbitrary.

Let us assume now that $\zP$ is a Poisson tensor, i.e., $D_\zP$ is a Dirac-Lie algebroid. The first
integrability condition for $D_\zP^V$ is now $\pr_{\sT E^\ast}\subset \sT E^\ast_{|S}$, i.e.,
$\dot{x}^B=\zr^B_\zi(x^\za,0)y^\zi=0$ for all $y^\zi$, thus
\be\label{dd1}\zr^B_\zi(x^\za,0)=0\,.\ee
To check the second integrability condition, let us note first that $D_\zP^V$ is locally generated by the
sections $\pa_{\zx_I}$ and  $R_{\zx_\zi}=\wt{\zP}(\xd\zx_\zi)+\xd\zx_\zi$. Since, by the assumption that $\zP$
is Poisson,
$$\lna R_{\zx_\zi},R_{\zx_{\zi'}}\rna=R_{\zP(\xd\zx_\zi,\xd\zx_{\zi'})}\,,$$
$I$th components of $\xd\left(\zP(\xd\zx_\zi,\xd\zx_\zi')\right)$ must vanish along $E^\ast_{|S}$, i.e.,
\be\label{dd2}c^I_{\zi\zi'}(x^\za,0)=0\,.\ee
The vector fields $\pa_{\zx_I}$ commute, so it remains to check whether $\lna\pa_{\zx_I},R_{\zx_\zi}\rna$ are
sections of $D_\zP^V$ over $E^\ast_{|S}$. But
\be\label{ddd}\lna\pa_{\zx_I},R_{\zx_\zi}\rna=[\pa_{\zx_I},\wt{\zP}(\xd\zx_\zi)]\ee
and, according to (\ref{dd1}),
$$\wt{\zP}(\xd\zx_\zi)(x^\za,0,\zx_i)=c_{\zi\zi'}^{\zi''}(x^\za,0)\zx_{\zi''}\pa_{\zx_\zi'}+f^{I'}(x^\za,0,\zx_i)\pa_{\zx_{I'}}$$
for some functions $f^{I'}$. Hence
$$\lna\pa_{\zx_I},R_{\zx_\zi}\rna(x^\za,0,\zx_i)=\pa_{\zx_I}(f^{I'})(x^\za,0,\zx_i)\pa_{\zx_{I'}}\,,$$
so that the expression in (\ref{ddd}) is spanned over $E^\ast_{|S}$ by $\pa_{\zx_I}$. Thus we get that
$D_\zP^V$ is a Dirac-Lie algebroid if and only if (\ref{dd1}) and (\ref{dd2}) are satisfied. This, in turn,
means that $V_0$ is a Lie subalgebroid in the Lie algebroid on $E$ associated with $\zP$, so $V$ is a Lie
subalgebroid of the Lie algebroid $\Ve_D$ -- the graph of the anchor map.
\end{example}
\begin{theo} If $D_\zP$ is a $\zP$-graph Dirac-Lie algebroid, then $D_\zP^V$ is a Dirac-Lie algebroid if
and only if $V$ is a Lie subalgebroid of $\,\Ve_D$.
\end{theo}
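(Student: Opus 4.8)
The plan is to extract the statement directly from the local analysis already carried out in Example \ref{nhco}, reading its two integrability conditions as the two defining properties of a Lie subalgebroid. First I would recall the identification, valid because the anchor relation of a $\zP$-graph Dirac algebroid is the graph of $\zr$, that a subbundle $V\subset\Ve_D$ corresponds to a vector subbundle $V_0\subset E$ via $V=\{\zr(v)+v:v\in V_0\}$, and that in the adapted coordinates of Example \ref{nhco} the submanifold $S\subset M$ is cut out by $x^A=0$ and $V_0$ by $y^I=0$. Under the hypothesis that $\zP$ is Poisson, Theorem \ref{impo2} identifies the Lie algebroid bracket induced on $\Ve_D$ with the Lie algebroid bracket on $E$ transported along the anchor graph, so that $V$ is a Lie subalgebroid of $\Ve_D$ exactly when $V_0$ is a Lie subalgebroid of $E$.

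Next I would unwind ``$V_0$ is a Lie subalgebroid of $E$'' in these coordinates, where $V_0$ is spanned by $e_\zi$. It amounts to two requirements. The anchor must carry $V_0$ into $\sT S$, i.e. $\zr(e_\zi)=\zr^b_\zi\pa_{x^b}$ must be tangent to $S$ along $S$; since $\sT S$ is spanned by $\pa_{x^\za}$, this is precisely $\zr^B_\zi(x^\za,0)=0$, which is condition (\ref{dd1}). Moreover the bracket of sections of $V_0$ must stay in $\Sec(V_0)$ over $S$, i.e. $[e_\zi,e_{\zi'}]=c^k_{\zi\zi'}e_k$ must have vanishing $e_I$-component along $S$, which is precisely $c^I_{\zi\zi'}(x^\za,0)=0$, that is condition (\ref{dd2}). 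Note that the first requirement is exactly what makes the bracket $[\Sec(V_0),\Sec(V_0)]$ well defined over $S$ in the first place, matching the role of the first integrability condition.

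Finally I would invoke the computation in Example \ref{nhco}, which shows that $D_\zP^V$ satisfies the first integrability condition iff (\ref{dd1}) holds, and, using the generators $\pa_{\zx_I}$ and $R_{\zx_\zi}=\wt{\zP}(\xd\zx_\zi)+\xd\zx_\zi$ together with the Poisson property of $\zP$, satisfies the second integrability condition iff (\ref{dd2}) holds. Combining the three steps yields the desired equivalence. The point requiring the most care is the matching in the middle step: one must check that conditions (\ref{dd1}) and (\ref{dd2}) correspond, respectively and exactly, to anchor tangency and bracket closure, and that the transport of the statement from $V_0\subset E$ to $V\subset\Ve_D$ relies on the bracket identification of Theorem \ref{impo2}, which is only legitimate in the Dirac-Lie (Poisson $\zP$) case assumed in the hypothesis.
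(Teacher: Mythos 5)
Your proposal is correct and takes essentially the same approach as the paper: the theorem there is stated as the conclusion of Example \ref{nhco}, whose local computation shows that the two integrability conditions for $D_\zP^V$ are exactly (\ref{dd1}) and (\ref{dd2}), and that these are precisely the anchor-tangency and bracket-closure conditions making $V_0$ (equivalently $V$, via the graph of the anchor) a Lie subalgebroid. Your additional explicitness in transporting the bracket from $E$ to $\Ve_D$ via Theorem \ref{impo2} merely spells out what the paper leaves implicit.
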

\begin{example}\label{nhco1} A particular case of the above example is the canonical Dirac-Lie algebroid $D_M$.
In this case we recover the induced Dirac structure considered in \cite{YM1}, i.e., the set
\begin{multline*}
D_V=\{\,(X+\za)\in\sT\sT^\ast M\oplus_{\sT^\ast M}\sT^\ast\sT^\ast M; \\
X\in(\sT\zp_M)^{-1}(V_0),\quad\forall\, W\in (\sT\zp_M)^{-1}(V_0):\quad\langle \za, X\rangle=\zw_M(X,W)\;\},
\end{multline*}
where $\zw_M$ is the canonical symplectic form on $\sT^\ast M$. Let us show that this indeed is the case.
\medskip

According to our definition the canonical Dirac-Lie algebroid $D_M$ on the cotangent bundle is given by the
canonical Poisson structure $\zP_M$ or the canonical symplectic structure $\zw_M$ on $\sT^\ast M$, i.e.,
$$D_M=\Graph(\zP_M)=\Graph(\zw_M).$$
The velocity bundle $\Ve_{D_M}\subset \sT M\oplus_M\sT M$ is in this case the graph of the identity map on
$\sT M$, the phase bundle $\Ph_{D_M}$ is the whole cotangent bundle $\sT^\ast M$ and the core
$\cC_{D_M}\subset\sT^\ast M\oplus_M\sT^\ast M$ is the graph of the minus identity map on $\sT^\ast M$.

Any subbundle $V$ of the velocity bundle is given by a subbundle $V_0$ of the tangent bundle $\sT M$ and is of
the form $V=\{v+v;\; v\in V_0\}$. Then we get
$$\widetilde{V}=\left(\zt^{D_M}_2\right)^{-1}(V)=\{\,X+\tilde\zw_M(X)\in\cT\sT^\ast M:\quad \sT\zp_M(X)\in V_0\,\}.$$

The anihilator $V^0$ consists of all pairs of covectors $(\zf,\zc)$ at the same point in $M$ such that
$\zf+\zc\in (V_0)^0$. Since the induced Dirac structure is $D^V_M=\tilde V+V^0$, we have that
$$D^V_M=\{(X+\zf)+(\tilde\zw_M(X)+\zc);\quad \sT\zp_M(X)\in V_0,\; \zf+\zc\in (V_0)^0\,\}.$$
The "+" sign in brackets in the above formula stands for adding an element of a core to an element of double
vector bundle. To compare $D^V_M$ with the Dirac structure considered in \cite{YM1} let us observe, that the
projection of $D^V_M$ on $\sT\sT^\ast M$ gives the whole $(\sT\zp_M)^{-1}(V_0)$. Adding elements of a core of
a double vector bundle does not change projections, therefore adding $\zf$ to $X$ produces another element $Y$
of $(\sT\zp_M)^{-1}(V_0)$. Since $\tilde\zw_M$ is a double vector bundle isomorphism, it respects the
structure of the double vector bundle. In particular, it maps the core of $\sT\sT^\ast M$ to the core of
$\sT^\ast\sT^\ast M$. Both cores are isomorphic to $\sT^\ast M$, and  $\tilde\zw_M$ restricted to the core is
the identity map. We have
$$\tilde\zw_M(Y)+\zc=\tilde\zw_M(X+\zf)+\zc=\tilde\zw_M(X)+(\zf+\zc)\,,$$
so
$$D^V_M=\{\,X+(\tilde\zw_M(X)+\zc);\quad \sT\zp_M(X)\in V_0,\;\zc\in (V_0)^0\;\}\,.$$
Evaluating $\tilde\zw_M(X)+\zc$ on any $W\in(\sT\zp_M)^{-1}(V_0)$, we get that
$$\langle\,\tilde\zw_M(X)+\zc, W\,\rangle=
\langle\,\tilde\zw_M(X), W\,\rangle+\langle\, \zc,\sT\zp_M(W)\,\rangle= \langle\,\tilde\zw_M(X),
W\,\rangle=\zw_M(X,W)\,,$$ thus $D^V_M\subset D_V$. For dimensional reasons the inclusion is in fact equality.

\end{example}

\section{Lagrangian and Hamiltonian formalisms based on Dirac \\ algebroids}
\subsection{Implicit differential equations}
Let us start with an explanation what we will understand as implicit dynamics on a manifold $N$.
\begin{deff} An {\it ordinary first-order (implicit) differential equation (implicit dynamics)} on a manifold $N$
will be understood as a subset $\cD$ of the tangent bundle $\sT N$. We say that a smooth curve $\zg:\R\ra N$
(or a smooth path $\zg:[t_0,t_1]\ra N$) {\it satisfies the equation} $\cD$ (or {\it is a solution of $\cD$}),
if its tangent prolongation $\dot{\zg}:\R\ra\sT N$ (resp., $\dot{\zg}:[t_0,t_1]\ra\sT N$) takes values in
$\cD$. A curve (or a path) $\wt{\zg}$ in $\sT N$ we call {\it admissible}, if it is the tangent prolongation
of its projection ${\wt{\zg}}_N$ on $N$.
\end{deff}
According to the above definition, solutions of an implicit dynamics $\cD$ on a manifold $N$ are projections
$\wt{\zg}_N$ of admissible curves $\wt{\zg}$ lying in $D$. Note, however, that different implicit differential
equations may have the same set of solutions. First of all, if $\cD$ is supported on a subset $N_0$,
$\zt_N(\cD)=N_0$, only vectors from $\cD\cap\sT N_0$ do matter, if solutions are concerned. Hence,
$\cD'=\cD\cap\sT N_0$ has the same solutions as $\cD$, and $\cD\subset \sT N_0$ is the {\it first
integrability condition}. Of course, replacing $\cD$ with $\cD'$ may turn out to be an infinite procedure, but
we will not discuss the integrability problems in this paper.

All this can be generalized to ordinary implicit differential equations of arbitrary order. In this case we
consider $\cD$ as a subset of higher jet bundles, the $n$-th tangent bundle $\sT^n N$ in case of an equation
of order $n$, and consider $\zg$ as a solution when its $n$-th jet prolongation takes values in $\cD$. If we
call the $n$-th jet prolongations {\it admissible} in $\sT^n N$, then solutions of $\cD$ are exactly
projections $\wt{\zg}_N$ to $N$ of admissible curves (or paths) $\wt{\zg}$ in $\sT^n N$ lying in $\cD$.

\begin{rem} The implicit differential equations described above are called by some authors {\it differential relations}.
Let us explain that we use the most general definition, not requiring from $\cD$ any differentiability
properties, since in real life the dynamics $\cD$ we encounter are often not submanifolds. This generality is
also very convenient, as allows us to skip technical difficulties in the corresponding Lagrangian and
Hamiltonian formalisms. Of course, what is a balast in defining implicit dynamics can be very useful in
solving the equations, but in our opinion, solving could be considered case by case, while geometric
formalisms of generating dynamics should be as general as possible. Note also that for any subset $N_0$ of a
manifold $N$ the tangent prolongations $\sT N_0$, $\sT^2 N_0$, etc., make precisely sense as subsets of $\sT
N$, $\sT^2 N$, etc. They are simply understood as families of the corresponding jets of appropriately smooth
curves in $N$ which take values in $N_0$.
\end{rem}
Admissibility of a path in $\sT N$ has a natural generalization for paths $\zg$ in an algebroid $E$. This
concept plays a fundamental role in the `integration' of Lie algebroids to Lie groupoids \cite{CF} and appears
as natural consequence of the algebroid version of the Euler-Lagrange equations \cite{GGU3,GG}. We propose the
following extension of this concept to Dirac algebroids, which reduces to the standard definition for
$\zP$-graph Dirac algebroids and Lie algebroids.

Note first that given a smooth curve or path $\zg$ with values in $E$ we have a unique `tangent prolongation'
of $\zg$ to a curve (or path) $\bzg:\R\ra \sT M\oplus_M E$ (resp., $\bzg:[t_0,t_1]\ra \sT M\oplus_M E$),
defined in obvious way by
\be\label{aprolong} \bzg(t)=\dot{\zg_M}(t)\oplus\zg(t)\,,
\ee
where ${\zg_M}$ is the projection of $\zg$ to $M$, ${\zg_M}=\zt\circ\zg$.
\begin{deff} Let $D$ be a Dirac algebroid on $\zt:E\ra M$ and let $\Ve_D\subset\sT M\oplus_M E$ be its anchor relation.
We say that a curve $\zg:\R\ra E$ (or a path $\zg:[t_0,t_1]\ra E$) is {\it $D$-admissible}, if its tangent
prolongation $\bzg$ takes values in $\Ve_D$,
$$\forall\ t\in\R\ [\bzg(t)=\dot{\zg_M}(t)+\zg(t)\in \Ve_D\subset \sT M\oplus_M E]\,.$$
\end{deff}
\begin{rem}\label{admiss}It is easy to see that in the case of a $\zP$-graph Dirac algebroid, when $\Ve_D$ is
the graph of the anchor map $\zr:E\ra\sT M$, a curve $\zg$ in $E$ is $D$-admissible if and only if
$\zr(\zg(t))=\dot{\zg_M}(t)$ that coincides with the concept of admissibility for Lie algebroids. In
particular, for the canonical Lie algebroid $E=\sT M$ and the corresponding canonical Dirac algebroid $D_M$, a
curve $\zg$ in $\sT M$ is $D_M$-admissible if and only if it is admissible, i.e., it is the tangent
prolongation of its projection $\zg_M$ on $M$, $\zg(t)=\dot{\zg_M}(t)$.
\end{rem}

\subsection{Phase dynamics, Hamilton, and Euler-Lagrange equations}
Our experience in working with (constrained) systems on skew-algebroids \cite{GGU3,GG} suggests us the
following approach. Let us fix a Dirac algebroid $D$ on a vector bundle $E$,
$$D\subset \sT^\ast E^\ast\oplus_{E^\ast}\sT E^\ast\simeq\sT^\ast E\oplus_{E^\ast}\sT E^\ast\,.
$$
In generalized Lagrangian and Hamiltonian formalisms we will view $D$ as a differential relation
$$\ze_D:\sT^\ast E\rel\sT E^\ast$$
or
$$\zb_D:\sT^\ast E^\ast\rel\sT E^\ast\,,$$
respectively. We use the symbol `$\rel$' to stress that we deal with relations having domains in $\sT^\ast E$
or $\sT^\ast E^\ast$ (not necessarily the whole $\sT^\ast E$ or $\sT^\ast E^\ast$) and with ranges being
subsets of $\sT E^\ast$. Note that $\ze_D=\zb_D\circ\cR_\zt$ and $\zb_D$ is a relation over the identity on
the support of $D$ in $E^\ast$ -- the phase bundle $\Ph_D$. The bundle $E$ plays the role of the bundle of
generalized velocities (quasi-velocities), and its dual, $E^\ast$, the role of the phase space.

A Lagrangian function $L:E\ra\R$ and a Hamiltonian $H:E^\ast\ra\R$ give rise to maps associated with their
derivatives, $\xd L:E\ra\sT^\ast E$ and $\xd H:E^\ast\ra\sT^\ast E^\ast$, respectively. The Lagrangian
produces the {\it phase dynamics} $\ze_D[\xd L]$ as the image of $E$ under the composition of relations
$\zL_D^L=\ze_D\circ\xd L$:
\be\label{Ld}\ze_D[\xd L]=\zL_D^L(E)\subset\sT E^\ast\,.
\ee
The relation $\zL_D^L$ we call the {\it Tulczyjew differential of $L$}. Similarly, when using the composition
of relations $\zF_D^H=\zb_D\circ\xd H$, that projects onto the relation $\zq_D^H=\zt_{E^\ast}\circ\zF_D^H$
being the identity on a subset of $E^\ast$, the {\it Hamiltonian dynamics} generated by the Hamiltonian $H$ is
defined by
\be\label{Hd}\zb_D[\xd H]=\zF_D^H(E^\ast)\subset\sT E^\ast\,.
\ee
The phase dynamics $\ze_D[\xd L]$ associated with the Lagrangian $L$ {\it has a Hamiltonian description}, if
there is a hamiltonian $H$ with the same dynamics, $\ze_D[\xd L]=\zb_D[\xd H]$.

Of course, the actual phase spaces associated with $L$ and $H$ are projections of the phase dynamics on
$E^\ast$, $\Ph_D^L=\zt_{E^\ast}(\ze_D[\xd L])$ and $\Ph_D^H=\zt_{E^\ast}(\zb_D[\xd H])$.

Since, as easily seen, the projection of the relation $\zL_D^L=\ze_D\circ\xd L$ to $E\oplus_ME^\ast$ is
actually a function,
$$\zl_D^L=\zt_{E^\ast}\circ\zL_D^L=\sT^\ast\zt\circ\xd L_{|\Ve_D^L}\,,$$
called the {\it Legendre map} associated with the Lagrangian $L$. The domain of the Legendre map will be
denoted $\Ve_D^L$ and called the {\it Euler-Lagrange domain}. It is easy to see that $\Ph_D^L$ is the image of
the  Legendre map
$${\zl_D^L}=\zt_{E^\ast}\circ\zL_D^L: E\supset \Ve_D^L\ra \Ph_D^L\subset E^\ast\,,$$
and the Legendre map is the restriction of the vertical derivative $\xd^\sv L:E\ra E^\ast$ to
\be\label{VeL}\Ve_D^L=\{ v\in \pr_E(\Ve_D): \xd^\sv L(v)\in\Ph_D\}\,.
\ee
In local coordinates,
$$\xd^\sv L(x,y)=\left(x,\frac{\pa L}{\pa y^i}(x,y)\right)\,.$$
If $D$ is a $\zP$-graph Dirac algebroid, then $\Ve_D^L=E$.

The diagram picture for the corresponding {\it Tulczyjew triple} containing: $\sT^\ast E$ (the Lagrangian
side), the canonically isomorphic (via $\cR_\zt)$ double vector bundle $\sT^\ast E^\ast$ (the Hamiltonian
side), and $\sT E^\ast$ (the phase dynamics side) is the following (here, the arrows denote relations):

\be\label{diag}\xymatrix @+1pc{
\sT^\ast E\ar@<1ex>[d]^{\zp_{E}}\ar[rr]^{\ze_D}\ar@/^1.5pc/[rrrr]^{\cR_\zt}  && \sT
E^\ast\ar[d]^{\zt_{E^\ast}}&&
\sT^\ast E^\ast\ar[ll]_{\zb_D}\ar@<1ex>[d]^{\zp_{E^\ast}}\\
E\ar[rr]^{{\zl_D^L}}\ar@<1ex>[u]^{\xd L}\ar@{.>}[rru]^{\Lambda_{D}^L} && E^\ast && E^\ast
\ar[ll]_{\zq_D^H}\ar@<1ex>[u]^{\xd H}\ar@{.>}[llu]_{\zF_{D}^H}}.
\ee

\bigskip
The Euler-Lagrange equation associated with $L$ will be viewed as an implicit dynamics on $E$. It will make
sense for curves in $E$ taking values in the Euler-Lagrange domain $\Ve_D^L\subset E$.
\begin{deff}
We say that a curve $\zg:\R\ra\Ve_D^L$ {\it satisfies ({\rm or} is a solution of) the Euler-Lagrange
equation}, if $\zg$ is $\zL_D^L$-related to an admissible curve $\wt{\zg}$ in $\sT E^\ast$ (i.e., $\wt{\zg}$
is the tangent prolongation of its projection ${\wt{\zg}}_{E^\ast}$ onto $E^\ast$). In particular, ${\zg}$ is
${\zl_D^L}$-related to the curve ${\wt{\zg}}_{E^\ast}$ which satisfies the phase equation.
\end{deff}
To describe the Euler-Lagrange equation explicitly, consider the tangent prolongation of the relation
$\zL_D^L$,
$$\sT\zL_D^L=\sT\ze_D\circ\sT\xd L:\sT E\rel\sT\sT E^\ast\,.$$

In $\sT\sT E^\ast$ we can distinguish {\it holonomic vectors}, i.e., vectors $X_v\in \sT_v\sT E^\ast$ such
that $v$ equals the tangent projection of $X_v$ onto $\sT E^\ast$, i.e., $v=\sT\zt_{E^\ast}(X_v)$. The set of
holonomic vectors can be seen as the second tangent bundle $\sT^2E^\ast$. We define the {\it (implicit)
Euler-Lagrange dynamics} as the subset of $\sT E$ defined by the inverse image
$$\el_D^L=(\sT\zL_D^L)^{-1}(\sT^2 E^\ast)\subset\sT E\,.$$
\begin{theo} If a curve $\zg:\R\ra E$ satisfies the Euler-Lagrange equation, then its tangent prolongation
takes values in $\el_D^L$. In particular, $\zg$ is {\it $D$-admissible}.
\end{theo}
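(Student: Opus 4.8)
The plan is to read off both assertions directly from the defining composition $\zL_D^L=\ze_D\circ\xd L$, using nothing beyond the functoriality of tangent prolongation of relations together with the admissibility of the curve on the phase side. Throughout I would treat a relation as its graph and use the convention of the preceding remark, that the prolongation $\sT S$ of a subset $S$ collects exactly the velocities of curves lying in $S$.

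First I would prolong the Euler--Lagrange relation in the time parameter. By hypothesis $\zg$ is $\zL_D^L$-related to an admissible curve $\wt{\zg}$ in $\sT E^\ast$, which by $\zL_D^L=\ze_D\circ\xd L$ means that for every $t$ the point $\xd L(\zg(t))\in\sT^\ast E$ is $\ze_D$-related to $\wt{\zg}(t)$. Since the curve $t\mapsto\bigl(\xd L(\zg(t)),\wt{\zg}(t)\bigr)$ lies in $\ze_D$, its velocity $t\mapsto\bigl(\sT\xd L(\dot{\zg}(t)),\dot{\wt{\zg}}(t)\bigr)$ lies in $\sT\ze_D$, the first component being the velocity of $\xd L\circ\zg$. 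Hence $\dot{\zg}(t)$ is $\sT\zL_D^L=\sT\ze_D\circ\sT\xd L$-related to $\dot{\wt{\zg}}(t)$, and it remains only to certify that $\dot{\wt{\zg}}(t)$ is holonomic.

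This is exactly where admissibility of $\wt{\zg}$ is consumed. Admissibility means that $\wt{\zg}$ is the tangent prolongation of its projection $\wt{\zg}_{E^\ast}=\zt_{E^\ast}\circ\wt{\zg}$, i.e. $\wt{\zg}(t)=\dot{\wt{\zg}_{E^\ast}}(t)$; applying $\sT\zt_{E^\ast}$ to $\dot{\wt{\zg}}(t)$ then gives $\sT\zt_{E^\ast}(\dot{\wt{\zg}}(t))=\dot{\wt{\zg}_{E^\ast}}(t)=\wt{\zg}(t)$, which is precisely the holonomicity condition defining $\sT^2E^\ast$. Therefore $\dot{\wt{\zg}}(t)\in\sT^2E^\ast$, and consequently $\dot{\zg}(t)\in(\sT\zL_D^L)^{-1}(\sT^2E^\ast)=\el_D^L$, which is the first assertion.

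For the $D$-admissibility I would chase the projection $\zt_2:\cT E^\ast\ra\sT M\oplus_M E$ whose image is $\Ve_D$. The relation $\bigl(\xd L(\zg(t)),\wt{\zg}(t)\bigr)\in\ze_D$ is realised by some $d(t)\in D$ whose image $\zt_2(d(t))\in\Ve_D$ has $E$-component equal to $\zg(t)$, because $\xd L$ covers the identity on $E$ and $\zt_2$ reads the $E$-part off the cotangent factor (in coordinates the $y^k$ of $\zt_2$ are the fibre coordinates of $\zg(t)$), while its $\sT M$-component is $\sT\zp(\wt{\zg}(t))$. Using admissibility once more, $\sT\zp(\wt{\zg}(t))=\frac{\D}{\D t}(\zp\circ\wt{\zg}_{E^\ast})(t)$, and since $\zg$ is $\zl_D^L$-related to $\wt{\zg}_{E^\ast}$ and the Legendre map $\zl_D^L$ (a restriction of $\xd^\sv L$) covers the identity on $M$, one gets $\zp\circ\wt{\zg}_{E^\ast}=\zt\circ\zg=\zg_M$, so $\sT\zp(\wt{\zg}(t))=\dot{\zg_M}(t)$. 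Hence $\zt_2(d(t))=\dot{\zg_M}(t)\oplus\zg(t)=\bzg(t)\in\Ve_D$, which is exactly $D$-admissibility. The genuinely routine content is the prolongation of the second paragraph; the step I expect to require the most care is this last projection chase, where the two uses of admissibility must be combined with the identity-covering properties of $\xd L$ and $\zl_D^L$ to match the $\sT M$-component of $\zt_2(d(t))$ with $\dot{\zg_M}(t)$.
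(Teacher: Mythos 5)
Your proof is correct and follows essentially the same route as the paper's: you prolong the relation $\zL_D^L$ along the curve, and then use admissibility of $\wt{\zg}$ to see that $\dot{\wt{\zg}}$ is the second tangent prolongation of $\wt{\zg}_{E^\ast}$, hence holonomic, so $\dot{\zg}$ lands in $\el_D^L=(\sT\zL_D^L)^{-1}(\sT^2 E^\ast)$. Your third paragraph, chasing the $\zt_2$-projection of the witnessing element of $D$ to identify $\dot{\zg_M}(t)\oplus\zg(t)$ as a point of $\Ve_D$, is a correct and explicit verification of the $D$-admissibility that the paper asserts with an ``in particular'' and leaves implicit.
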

\begin{proof}\label{eld} Let $\wt{\zg}$ be an admissible curve, $\wt{\zg}=\dot{\wt{\zg}}_{E^\ast}$,
contained in $\ze_D[\xd L]$ and $\zL_D^L$-related to $\zg$. Then, its tangent prolongation $\dot{\wt{\zg}}$ is
$\sT\zL_D^L$-related to the tangent prolongation $\dot{\zg}$ of $\zg$. But $\dot{\wt{\zg}}$ is the 2-tangent
prolongation of $\wt{\zg}_{E^\ast}$, thus lies in $\sT^2E^\ast$.
\end{proof}

Note that the converse is `almost true'. Indeed, if $\dot{\zg}$ lies in $\sT(\zL_D^L)^{-1}(\sT^2 E^\ast)$, we
only need to know that we can pick up a curve in $\sT^2 E^\ast$ being $\sT\zL_D^L$-related to $\dot{\zg}$.
This can be assured, for instance, by some smooth transversality assumptions. As we do not want to consider
these questions here, let us only mention that the converse of theorem \ref{eld} is always true in the case
when $\zL_D^L$ is a map, for instance for $\zP$-graph Dirac algebroids.
\begin{rem}
Let us observe that in our setting the Euler-Lagrange equation is a first-order equation on $E$, in full
agreement with the fact that the Hamilton equation is first-order as well. In the standard setting, the
Euler-Lagrange equation is viewed as second-order, but for curves in the base $M$. This can be explained as
follows. The solutions of the Euler-Lagrange equations are always $D$-admissible. In the case of the canonical
algebroid $E=\sT M$ the admissible curves in $\sT M$ are exactly the tangent prolongations of curves in the
base $M$, thus we may view the corresponding Euler-Lagrange equations as first-order equations on tangent
prolongations, so second-order equations for curves on the base.
\end{rem}
\subsection{Hyperregular Lagrangians}
Let us assume that we have a {\it hyperregular} Lagrangian $L:E\ra\R$, i.e., such a Lagrangian that its
vertical derivative $\mathfrak{L}=\xd^\sv L:E\ra E^\ast$ is a diffeomorphism. For instance, $L$ can be  {\it
of mechanical type}, being the sum of a `kinetic energy' (associated with a `metric' on the vector bundle $E$)
and a potential (a basic function). It is well known \cite{GGU3} that in this case the Hamiltonian
$H:E^\ast\ra\R$ defined by
\be\label{rl}H=(\nabla_E(L)-L)\circ\mathfrak{L}^{-1}\,,\ee
where $\nabla_E$ is the Euler vector field on the vector bundle $E$, defines the same Lagrangian submanifold
in $\sT^\ast E^\ast$ as $L$ in $\sT^\ast E$, when we identify canonically both bundles:
$$\xd H(E^\ast)=\cR_\zt(\xd L(E))\,.$$
In local coordinates, $\zx_i=\frac{\pa L}{\pa y_i}(x,y)$ and
$$H(x,\zx)=\zx_i\cdot y^i(x,\zx)-L(x,y(x,\zx))\,.$$
It is then easy to see that the Legendre map ${\zl_D^L}$ is a diffeomorphism of $\Ve_D^L$ on $\Ph_D^L$, and
that the phase dynamics associated with $L$ and $H$ coincide.
\begin{theo} If $L$ is a hyperregular Lagrangian, then,  for any Dirac algebroid $D$, the phase dynamics $\ze_D[\xd L]$ coincides with the phase dynamics $\zb_D[\xd H]$ for the Hamiltonian $H$ defined by (\ref{rl}). In this sense,
for hyperregular Lagrangians, the Lagrangian and Hamiltonian formalisms are equivalent.
\end{theo}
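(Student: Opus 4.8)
The plan is to reduce the asserted equality of phase dynamics to a single statement about generating objects, namely $\cR_\zt(\xd L(E))=\xd H(E^\ast)$, and then to push this identity through the common relation $\zb_D$. The essential structural input is the factorisation $\ze_D=\zb_D\circ\cR_\zt$ recorded in Section~6.2: since $\ze_D$ and $\zb_D$ differ only by the canonical anti-symplectomorphism $\cR_\zt\colon\sT^\ast E\to\sT^\ast E^\ast$, the Dirac algebroid $D$ enters both sides in exactly the same way, and the whole discrepancy between the two formalisms is concentrated in the comparison of the two Lagrangian submanifolds $\xd L(E)\subset\sT^\ast E$ and $\xd H(E^\ast)\subset\sT^\ast E^\ast$.

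First I would unwind the definitions. By (\ref{Ld}) we have $\ze_D[\xd L]=\ze_D(\xd L(E))$, and substituting $\ze_D=\zb_D\circ\cR_\zt$ gives
$$\ze_D[\xd L]=\zb_D\bigl(\cR_\zt(\xd L(E))\bigr).$$
On the Hamiltonian side, (\ref{Hd}) reads $\zb_D[\xd H]=\zb_D(\xd H(E^\ast))$. Hence the theorem is \emph{equivalent} to the equality of images $\cR_\zt(\xd L(E))=\xd H(E^\ast)$ inside $\sT^\ast E^\ast$: once this is established, applying the relation $\zb_D$ to both sides concludes the proof at once, with no further reference to the internal structure of $D$.

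The core step is therefore to prove $\cR_\zt(\xd L(E))=\xd H(E^\ast)$ for $H$ given by (\ref{rl}). This is precisely the statement quoted from \cite{GGU3} just before the theorem, so it may be cited directly; but I would also check it in the adapted coordinates for self-containment. Using (\ref{iso1}), a point $\bigl(x^a,y^i,\tfrac{\pa L}{\pa x^b},\tfrac{\pa L}{\pa y^j}\bigr)$ of $\xd L(E)$ is carried by $\cR_\zt$ to the point with coordinates $\bigl(x^a,\ \zx_i=\tfrac{\pa L}{\pa y^i},\ -\tfrac{\pa L}{\pa x^b},\ y^j\bigr)$ in $\sT^\ast E^\ast$. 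On the other hand, writing $H(x,\zx)=\zx_i\,y^i(x,\zx)-L(x,y(x,\zx))$ with $y(x,\zx)$ the inverse of the Legendre relation $\zx_i=\tfrac{\pa L}{\pa y^i}$, the chain rule together with the cancellation provided by $\zx_i=\tfrac{\pa L}{\pa y^i}$ yields $\tfrac{\pa H}{\pa\zx_j}=y^j$ and $\tfrac{\pa H}{\pa x^b}=-\tfrac{\pa L}{\pa x^b}$, so that $\xd H(E^\ast)$ consists of exactly the same points. Comparing the two parametrisations gives the equality of images.

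The main obstacle is not this computation but the \emph{global} validity of the substitution $y=y(x,\zx)$: it is a genuine, smooth, single-valued map only because $L$ is hyperregular, i.e.\ $\mathfrak{L}=\xd^\sv L$ is a diffeomorphism. This is exactly where the hypothesis is used, and it is what upgrades the pointwise Legendre correspondence to an equality of full images over all of $E^\ast$, ensuring that $\cR_\zt$ maps the Lagrangian submanifold $\xd L(E)$ onto the \emph{entire} submanifold $\xd H(E^\ast)$ rather than onto a proper piece of it. Once this is secured, $\ze_D[\xd L]=\zb_D[\xd H]$ follows by applying $\zb_D$, and I would close by noting that the same diffeomorphism identifies the Euler-Lagrange domain $\Ve_D^L$ with $\Ph_D^L$ through the Legendre map $\zl_D^L$, so the two formalisms coincide concretely and not merely as abstractly equal dynamics.
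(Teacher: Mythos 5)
Your proposal is correct and follows essentially the same route as the paper: the paper also reduces the theorem to the identity $\xd H(E^\ast)=\cR_\zt(\xd L(E))$ (cited from \cite{GGU3} and sketched in the adapted coordinates) and then concludes via the factorisation $\ze_D=\zb_D\circ\cR_\zt$, exactly as you do. Your explicit chain-rule verification of $\tfrac{\pa H}{\pa\zx_j}=y^j$ and $\tfrac{\pa H}{\pa x^b}=-\tfrac{\pa L}{\pa x^b}$, and your remark on where hyperregularity enters globally, simply fill in details the paper leaves as "easy to see".
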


\subsection{Constraints}

{\it Nonholonomic linear (or affine) constraints} in our Dirac algebroid setting are understood as represented
by vector (affine) subbundles $V$ of the the velocity bundle $\Ve_D$. This could look strange for the first
sight, but it becomes quite natural, if we recall that the solution of the Euler-Lagrange equations are
admissible curves $\zg$ in the bundle $E$ of quasi-velocities. Since there is a canonical tangent prolongation
$\bzg$ of $\zg$, with $\bzg$ lying in $\Ve_D$, the constraint $V$ gives us equations for $\zg$ with $\bzg$ in
$V$. The general principle is the following.
\begin{deff} (Nonholonomic constraints) The phase dynamics and the Euler-Lagrange equations for a
constrained Lagrangian system on a Dirac algebroid $D$ over a vector bundle $\zt:E\ra M$, and associated with
the Lagrangian $L:E\ra \R$ and the linear (affine) constrain bundle $V\subset\Ve_D$, is the dynamics
associated with the same Lagrangian but on the induced Dirac algebroid $D^V$ over $E$.
\end{deff}

Another type of constraints we can consider in our setting are {\it vakonomic constraints} represented by a
submanifold (not necessary an affine subbundle) $C$ of $E$. Let us recall that with any submanifold $C$ of $E$
and any function $L:C\ra\R$ we can associate a lagrangian submanifold $[\xd L_C]$ of $\sT^\ast E$ defined by
\be\label{lasu} [\xd L_C]=\{ \zh\in\sT^\ast_y E:y\in C\ \text{and}\ \forall v\in\sT_yC\quad\la\zh,v\ran=\la\xd
L(y),v\ran\}\,.
\ee
We can view $[\xd L_C]$ as a relation $[\xd L_C]:E\rel\sT^\ast E$. Now we can define the constrained phase
dynamics and the Euler-Lagrange equations completely analogously to unconstrained ones, but replacing the
relation $\xd L(E)$ with $[\xd L_C]$.
\begin{deff} (Vakonomic constraints) The phase dynamics for a
constrained Lagrangian system on a Dirac algebroid $D$ over a vector bundle $\zt:E\ra M$, associated with the
Lagrangian $L:E\ra \R$ and a vakonomic constraints represented by a submanifold $C$ of $E$, is the dynamics
represented by the subset $\ze_D([\xd L_C])$ of $\sT E^\ast$. We say that a curve $\zg:\R\ra[\xd L_C]$ {\it
satisfies the vakonomically constrained Euler-Lagrange equation}, if $\zg$ is $\ze_D$-related to an admissible
curve in $\sT E^\ast$.
\end{deff}
\begin{rem} Note first that, by definition, the phase dynamics for vakonomic constraints depends on
the restriction of the Lagrangian $L$ to $C$ only. Second, we recover the old dynamics in the unconstrained
case, as $\xd L(E)=[\xd L_{E}]$. This can look strange at the first sight that we define solution of the
Euler-Lagrange equations as curves in $[\xd L_C]$ and not in $\Ve_D\subset E$, but when the constraints are
absent there is no real difference between $\Ve_D$ and $[\xd L_{\Ve_D}]$, since the projection $\zp_E$
establishes a diffeomorphism. In the presence of a constraint we have no longer this diffeomorphism. Of
course, we could say that a curve in $\Ve_D$ satisfies the constrained Euler-Lagrange equation, if it is a
projection of an appropriate curve in $[\xd L_C]$, but our approach seems to be more natural. It could happen
that one curve is the projection of different curves in $[\xd L_C]$ that is a geometric interpretation of the
presence of `Lagrange multipliers'.
\end{rem}

\section{Examples}
\begin{example}\label{e5.5} {\bf (Mechanics on a general Dirac algebroid)}

\medskip\noindent
The very general scheme of the phase or the Euler-Lagrange dynamics on a Dirac algebroid $D\subset\cT E^\ast$
can be described in local coordinates as follows. Let us choose the standard adapted coordinates (slightly
reordered) $(x,\zx,\dot{x},y,p,\dot{\zx})$ in $\cT E^\ast$. Starting with a Lagrangian $L:E\ra\R$ we can
define the associated subset $[[\xd L]]$ in $\cT E^\ast$ as consisting of points with coordinates for which
$\zx=\frac{\pa L}{\pa y}(x,y)$
%$\dot{\zx}=\frac{\xd }{\xd t}\left(\frac{\pa L}{\pa y}(x,y)\right)$,
and $p=-\frac{\pa L}{\pa x}(x,y)$. Next, we intersect $[[\xd L]]$ with $D$ getting the (implicit)
Euler-Lagrange equations defined by the following relations (in coordinates of (\ref{localll})):
\bea\label{crd}
&&\left(x,\frac{\pa L}{\pa y}(x,y)\right)\in\Ph_D\,,\quad \wh{\zh}(x,\dot{x},y)=0\,,\\
&&\quad \zz_{i}\left(x,-\frac{\pa L}{\pa x}(x,y),\frac{\xd }{\xd t}\left(\frac{\pa L}{\pa
y}(x,y)\right)\right)+c_{ik}^j(x)\zh^k(x,\dot{x},y)\frac{\pa L}{\pa y^j}(x,y)=0\,.\label{crd1}
\eea
Similarly, starting with a Hamiltonian $H:E^\ast\ra\R$ and defining the subset $[[\xd H]]$ by putting the
constraints $y=\frac{\pa H}{\pa \zx}(x,\zx)$ and $p=\frac{\pa H}{\pa x}(x,\zx)$, we get after intersecting
with $D$ the following (implicit) phase dynamics
\bea\label{crdH}
&&(x,\zx)\in\Ph_D\,,\quad \wh{\zh}\left(x,\dot{x},\frac{\pa H}{\pa \zx}(x,\zx)\right)=0\,,\\
&&\quad \zz_{i}\left(x,\frac{\pa H}{\pa x}(x,\zx),\dot{\zx}\right)+c_{ik}^j(x)\zh^k\left(x,\dot{x},\frac{\pa
H}{\pa \zx}(x,\zx)\right)\zx_j=0\,.\label{crdH1}
\eea
For the canonical Dirac algebroid $D_M$ we have in adapted coordinates $\wh{\zh}^a=\dot{x}^a-y^a$,
$\zz_a=\dot{\zx}_a+p_a$, and $c^k_{ij}=0$, so we get the standard Euler-Lagrange
$$\frac{\xd x^a}{\xd t}=y^a, \quad
\frac{\xd}{\xd t}\left(\frac{\partial L}{\partial y^a}\right)(x,y)=\frac{\partial L}{\partial x^a}(x,y)$$ and
Hamilton
$$\frac{\xd {\zx}_a}{\xd t}=- \frac{\pa H}{\partial {x^a}}(x,\zx)\,,\quad
\frac{\xd x^b}{\xd t}=\frac{\pa H}{\partial {\zx_b}}(x,\zx)
$$
equations. Changing the symbols $y,\zx$ for velocities and momenta into the standard ones, $\dot{x},p$, we end
up with the traditional Euler-Lagrange and Hamilton equations.
\end{example}

\begin{example} {\bf (Pontryagin Maximum Principle for general Dirac algebroids)}

\medskip\noindent
Starting with a general Dirac algebroid as above, let us impose a vakonomic constraint $C\subset E$
parametrized by $f:M\ti U\ra C$, with $U$ being a manifold of `control parameters'. In local coordinates
$(x,y)$ in $E$ and $u$ in $U$, the parametrization yields $y=f(x,u)$. Note that, classically, for $E=\sT M$
and $y=\dot{x}$, the constraint $C$ represents the differential equation $\dot{x}=f(x,u)$.

A Lagrangian $L:C\ra \R$ may be now seen as a function $L:M\ti U\ra\R$, and $[\xd L_C]$ consists of points
$(x,y,p,\zx)\in\sT^\ast E$  (we skip the indices) such that
\be\label{constr}
y=f(x,u)\,,\quad p=\frac{\pa L}{\pa x}-\zx\frac{\pa f}{\pa x}\,,\quad \zx\frac{\pa f}{\pa u}=\frac{\pa L}{\pa
u}\,.
\ee
The above identities define a subset $[[\xd L_C]]$ in $\cT E^\ast$ which, similarly as above, leads to
implicit Euler-Lagrange equations
\bea\label{crd-cstr}
&&(x,\zx)\in\Ph_D\,,\quad \wh{\zh}(x,\dot{x},f(x,u))=0\,,\\
&&\quad \zz_{i}\left(x,\zx\frac{\pa f}{\pa x}-\frac{\pa L}{\pa
x}(x,y),\dot{\zx}\right)+c_{ik}^j(x)\zh^k(x,\dot{x},f(x,u))\zx=0\,,\label{crd1-cstr}
\eea
constrained additionally by
\be\label{pmp}
\zx\frac{\pa f}{\pa u}-\frac{\pa L}{\pa u}=0\,.
\ee
Let us note that equations (\ref{crd-cstr}) and (\ref{crd1-cstr}) are the same as the Hamilton equations
(\ref{crdH}) and (\ref{crdH1}) with the Hamiltonian
\be\label{pmp-ham}
H(x,u,\zx)=\zx\cdot f(x,u)-L(x,u)
\ee
depending on the parameter $u$. Moreover, the equation (\ref{pmp}) reads $\frac{\pa H}{\pa u}(x,u,\zx)=0$ that
is an infinitesimal form of the Pontryagin Maximum Principle (PMP): our solutions choose control parameters
which are critical for the Hamiltonian. The whole picture is an obvious generalization of (PMP), this time for
Dirac algebroids, of course in its smooth and infinitesimal version.
\end{example}

\begin{example}\label{e6} {\bf (Mechanics on skew algebroids)}

\medskip\noindent
Consider the Dirac algebroid $D_\zP$ associated with a linear bivector field $\zP$ on $E^\ast$, as described
in example \ref{e1}. Since in this case $D_\zP$ is the graph of the map $\wt{\zP}$, the relation
$\ze_{D_\zP}$ is a map. Hence, $\zL_{D_\zP}^L=\ze_{D_\zP}\circ\xd L$ is also a map
$\zL_{D_\zP}^L:E\ra\sT E^\ast$ which in local coordinates reads
\be\zL_{D_\zP}^L(x^a,y^i)= \left(x^a,\frac{\partial L}{\partial y^i}(x,y),
\zr^b_k(x)y^k, c^k_{ij}(x) y^i\frac{\partial L}{\partial y^k}(x,y) + \zr^a_j(x)\frac{\partial L}{\partial
x^a}(x,y)\right)\,. \label{F1.4a}
\ee
The Legendre relation $\zl_{D_\zP}^L$ is also a map which reads
\be\label{leg1}\zl_{D_\zP}^L(x^a,y^i)= \left(x^a,\frac{\partial L}{\partial y^i}(x,y)\right)\,.
\ee
Let $\zg(t)=(x(t),y(t))$ be a smooth curve in $E$. Since $\wt{\zg}=\zL_{D_\zP}^L\circ\zg$ is the only curve in
$\sT E^\ast$ which is $\zL_{D_\zP}^L$-related to $\zg$, the latter satisfies the Euler-Lagrange equation if and
only if $\wt{\zg}$ is admissible, i.e., $\wt{\zg}=\dot{\ul{\wt{\zg}}}$. In local coordinates,
\be\label{EL}\qquad\frac{\xd x^a}{\xd t}(x)=\zr^a_k(x)y^k, \quad
\frac{\xd}{\xd t}\left(\frac{\partial L}{\partial y^j}\right)(x,y)= c^k_{ij}(x) y^i\frac{\partial L}{\partial
y^k}(x,y) + \zr^a_j(x)\frac{\partial L}{\partial x^a}(x,y)\,,
\ee
in the full agreement with the Euler-Lagrange equation for Lie (and general skew) algebroids as described in
\cite{GGU3,GG,LMM,Mar1,We}. Note that we do not assume any regularity of the Lagrangian.

As for the Hamilton equations, let us note that also in this case the relation $\zb_{D_\zP}$ is a map,
$\zb_{D_\zP}=\wt{\zP}$,
\be\label{bha}
\zP(x^a,\zx_j,p_b,y^i) = (x^a, \zx_j, \zr^b_k(x)y^k, c^k_{ij}(x) y^j\zx_k - \zr^a_j(x) p_a)\,.
\ee
The corresponding phase dynamics is explicit and associated with the Hamiltonian vector field
\be\label{ham} \X_H(x,\zx)=\left(c^k_{ij}(x)\zx_k\frac{\pa H}
{\partial {\zx_i}}(x,\zx)- \zr^a_j(x)\frac{\pa H}{\partial {x^a}}(x,\zx)\right) \partial _{\zx_j} +
\zr^b_i(x)\frac{\pa H}{\partial {\zx_i}}(x,\zx)\partial _{x^b}\,,
\ee
i.e.,
$$\dot{\zx_j}=\left(c^k_{ij}(x)\zx_k\frac{\pa H}
{\partial {\zx_i}}(x,\zx)- \zr^a_j(x)\frac{\pa H}{\partial {x^a}}(x,\zx)\right)\,\quad
\dot{x}^b=\zr^b_i(x)\frac{\pa H}{\partial {\zx_i}}(x,\zx)\,.
$$

In the particular case of the canonical Lie algebroid $E=\sT M$, we can take for coordinates $y$ in the fiber
the coordinates $\dot{x}^a$ induced from the base. As now $c^a_{bc}=0$ (coordinate vector fields commute) and
$\zr^a_b=\zd^a_b$ (the anchor map is the identity), we get the traditional Euler Lagrange equations
$$\frac{\xd x^a}{\xd t}=\dot{x}^a, \quad
\frac{\xd}{\xd t}\left(\frac{\partial L}{\partial \dot{x}^a}\right)(x,\dot{x})=\frac{\partial L}{\partial
x^a}(x,\dot{x})\,,$$ as a particular case. Also The Hamilton equations become completely traditional in
coordinates $\zx$ replaced by the corresponding momenta:
$$\dot{p_a}=- \frac{\pa H}{\partial {x^a}}(x,p)\,,\quad
\dot{x}^b=\frac{\pa H}{\partial {p_b}}(x,p)\,.
$$
\end{example}

\begin{example}\label{e7} {\bf (Mechanics on presymplectic manifolds)}

\medskip\noindent
Consider the Dirac algebroid $D_\zw$ associated with a linear 2-form $\zw$ on $E^\ast$, as described in
example \ref{e2}. Since in this case $D_\zw$ is the graph of the map $\wt{\zw}:\sT E^\ast\ra\sT^\ast
E^\ast\simeq\sT^\ast E$, the implicit phase dynamics associated with a Lagrangian and a Hamiltonian are
inverse images of the images of $\xd L$ and $\xd H$, respectively. In coordinates,
$$\zb_D[\xd H]=\left\{(x^a,\zx_i,\dot{x}^b,\dot{\zx}_j):
\zr^i_a(x)\dot{x}^a=\frac{\pa H}{\pa\zx_i}(x,\zx)\,,\ c^k_{ab}(x)\zx_k\dot{x}^b -
\zr_a^i(x)\dot{\zx}_i=\frac{\pa H}{\pa x^a}(x,\zx)\right\}
$$
and
$$\ze_D[\xd L]=\left\{(x^a,\zx_i,\dot{x}^b,\dot{\zx}_j):\exists y\ \left[\zx_i=\frac{\pa L}{\pa y^i}(x,y)\,,\
\zr^i_a(x)\dot{x}^a=y^i\,,\ c^k_{ab}(x)\zx_k\dot{x}^b - \zr_a^i(x)\dot\zx_i=\frac{\pa L}{\pa
x^a}(x,y)\right]\right\}\,.
$$
The implicit Euler-Lagrange equations (Euler-Lagrange relations) take the form
\be\label{EL1}\qquad\zr^i_a(x)\frac{\xd x^a}{\xd t}(x)=y^i, \quad
\zr^i_a(x)\frac{\xd}{\xd t}\left(\frac{\partial L}{\partial y^i}\right)(x,y)= c^k_{ab}(x)\frac{\xd x^b}{\xd
t}(x) \frac{\partial L}{\partial y^k}(x,y) - \frac{\partial L}{\partial x^a}(x,y)\,.
\ee

Of course, for the canonical symplectic structure $\zw_M=\xd p_a\we\xd x^a$ on $E^\ast=\sT^\ast M$ we get the
classical dynamics as above. But also in the case of a regular presymplectic form of rank $r$,
$$\zw=\sum_{a\le r}\xd p_a\we\xd x^a\,,$$
we get the equations for the presymplectic reduction by the characteristic distribution to the reduced
symplectic form: the coordinates $x^a$ and $\dot{x}^a$ with $a>r$ are simply forgotten,
$$\frac{\xd}{\xd t}\left(\frac{\partial L}{\partial \dot{x}^a}\right)(x,\dot{x})=
- \frac{\partial L}{\partial x^a}(x,\dot{x})\,,\quad a\le r\,.$$
\end{example}

\begin{example}\label{e8} {\bf (Non-autonomous systems)}

\medskip\noindent
Consider the affine Dirac algebroid $D_0$ on $E_0=E\ti\R$ described  in example \ref{e5}, for the $\zP$-graph
Dirac algebroid $D=D_\zP$ on $E$, as in example \ref{e6}. In coordinates,
$$D_0=\{(x^0,x^a,\zx_i,\dot{x}^0,\dot{x}^b,\dot{\zx}_j,p_0,p_c,y^k):
\dot{x}^0=1\,,\ \dot{x}^b=\zr^b_k(x)y^k\,,\ \dot{\zx}_j=c^k_{ij}(x) y^i\zx_k - \zr^a_j(x) p_a\}\,.
$$
For a Lagrangian $L:E\ti\R\ra\R$ we get the Tulczyjew differential ${\zL_{D_0}^L}:E_0\rel\sT E_0^\ast$ of $L$
which is the map which in coordinates reads
$${\zL_{D_0}^L}(x^0,x^a,y^i)=(x^0,x^a,\zx_i,\dot{x}^0,\dot{x}^b,\dot{\zx}_j)$$
such that $$\zx_i=\frac{\pa L}{\pa y^i}(x^0,x^a,y^i)\,,\ \dot{x}^0=1\,,\ \dot{x}^b=\zr^b_k(x)y^k\,,\
\dot{\zx}_j=c^k_{lj}(x) y^l\frac{\pa L}{\pa y^k}(x^0,x^a,y^i) + \zr^b_j(x)\frac{\pa L}{\pa
x^b}(x^0,x^a,y^i)\,.
$$
Identifying $x^0$ with the time parameter $t$, we get the corresponding Euler-Lagrange equations in the form
$$\frac{\xd {x}^b}{\xd t} =\zr^b_k(x)y^k\,,\
\frac{\xd}{\xd t}\left(\frac{\pa L}{\pa y^j}(t,x^a,y^i)\right)=c^k_{lj}(x) y^l\frac{\pa L}{\pa y^k}(t,x^a,y^i)
+ \zr^b_j(x)\frac{\pa L}{\pa x^b}(t,x^a,y^i)\,.
$$
This is exactly the Euler-Lagrange equation on a skew algebroid for time-dependent Lagrangians. Such equations
have been obtained also as the Euler-Lagrange equations for affgebroids \cite{GGU2,GGU4,IMMS,IMPS,MMS1}. For
the canonical Lie algebroid $E=\sT M$, we get
$$\frac{\xd x^a}{\xd t}=\dot{x}^a, \quad
\frac{\xd}{\xd t}\left(\frac{\partial L}{\partial \dot{x}^a}(t,x,\dot{x})\right)=\frac{\partial L}{\partial
x^a}(t,x,\dot{x})\,.
$$
\end{example}

\begin{example} {\bf (Nonholonomic constraints)}

\medskip\noindent
Consider once more the Dirac algebroid $D_\zP$ associated with a linear bivector field $\zP$ on $E^\ast$, as
described in example \ref{e1}. Consider also a {\it nonholonomic constraint} defined by a vector subbundle $V$
of $E$ supported on a submanifold $S\subset M$. Using coordinates $(x^a)=(x^\za,x^A)$ in $M$, so that $S$ is
given locally by $x^A=0$, and linear coordinates $(y^i)$ in the fibers of $E$, so that $y=(y^i)=(y^\zi,y^I)$
and the subbundle $V$ is defined by the constraint $y^I=0$, on $\cT E^\ast$ we have then local coordinates
$(x^a,\dot{x}^b,\dot{\zx}_l,p_c,y^\zi,y^I)$, with decompositions $(\zx_k)=(\zx_\zk,\zx_K)$ and
$(\dot{\zx}_l)=(\dot{\zx}_\zl,\dot{\zx}_L)$ associated with the decomposition $(y^i)=(y^\zi,y^I)$. The Dirac
algebroid induced from $D_\zP$ by the nonholonomic constraint $V$ in local coordinates reads
$$D_\zP^V=\{(x^a,\zx_i,\dot{x}^b,\dot{\zx}_j,p_c,y^k):x^A=0\,,\ \dot{x}^b=\zr^b_\zi(x)y^\zi\,,\ \dot{\zx}_\zk=
c^j_{\zi \zk}(x) y^\zi\zx_j-\zr^a_\zk(x) p_a\,,\ y^I=0\}\,.
$$
The Tulczyjew differential ${\zL_{D_\zP^V}^L}$ associated with a Lagrangian $L:E\ra\R$ is defined on $V$ and
associates with every $(x^\za,0,y^\zi,0)\in V$ the set
\bea\nn{\zL_{D_\zP^V}^L}(x^\za,0,y^\zi,0)&=&\{(x^\za,0,\zx_i,\dot{x}^b,\dot{\zx}_j)\in\sT E^\ast:
\dot{x}^b=\zr^b_\zi(x)y^\zi\,,\ \zx_i=\frac{\pa L}{\pa y^i}(x^\za,0,y^\zi,0)\,,\\ \label{ph} &&\dot{\zx}_\zk=
c^j_{\zi \zk}(x) y^\zi\frac{\pa L}{\pa y^j}(x^\za,0,y^\zi,0)+\zr^a_\zk(x)\frac{\pa L}{\pa
x^a}(x^\za,0,y^\zi,0)\}\,.
\eea
Note that the coordinates $\dot{\zx}_A$ of points from ${\zL_{D_\zP^V}^L}(x^\za,0,y^\zi,0)$ are arbitrary. Curves
${\zL_{D_\zP^V}^L}$-related to a curve $\zg(t)=(x^\za(t),0,y^\zi(t),0)$ in $V$ have thus arbitrary coordinates
$\dot{\zx}_I$, but the remaining coordinates, if the curve is admissible, satisfy the {\it nonholonomically
constrained Euler-Lagrange equations}:
\bea\label{ELnh} &&x^A=0\,,\ y^I=0\,,\ \frac{\xd x^a}{\xd t}=\zr^a_\zi(x^\za,0)y^\zi\,, \\
&&\frac{\xd}{\xd t}\left(\frac{\partial L}{\partial y^\zk}(x^\za,0,y^\zi,0)\right)= c^j_{\zi\zk}(x^\za,0)
y^\zi\frac{\partial L}{\partial y^j}(x^\za,0,y^\zi,0) + \zr^a_\zk(x^\za,0)\frac{\partial L}{\partial
x^a}(x^\za,0,y^\zi,0)\,.\nn
\eea
Note that a minimal integrability requirement is the first integrability condition for $D_\zP^V$, saying that
$\zr^A_\zi(x^\za,0)=0$.

The constraint $V$ is {\it generalized holonomic} if, independently on the Lagrangian, the above equations
depend on the restriction of $L$ to $V$ only. Hence, $c^I_{\zi\zk}(x^\za,0)=0$ and $\zr^A_\zk(x^\za,0)=0$, so
that $V$ is generalized holonomic if and only if $V$ is a subalgebroid of $E$. In the classical situation of a
canonical Lie algebroid $\sT M$, the constraint $V$ is generalized holonomic if and only if $V$ is involutive,
for instance $V=\sT M_0$ for a submanifold $M_0$ in $M$. This is the traditional understanding of being
holonomic.
\end{example}

\begin{example} {\bf (Affine constraints)}

\medskip\noindent
We can perform a similar procedure with an affine nonholonomic constraint instead of the linear one. Let us
distinguish one variable $y^0$ from $y^I=(y^0,y^{\bar{I})}$ such that the affine constraint $A\subset E$ is
defined by $x^A=0\,,\ y^0=1\,, \ y^{\bar{I}}=0$. The model vector bundle $V=\sv(A)$ is as above and, as easily
checked, the constrained Euler-Lagrange equations are
\bea\nn &x^A=0\,,\ y^0=1\,,\ y^I=0\,,&\ \frac{\xd x^a}{\xd t}=\zr^a_0(x^\za,0)+\zr^a_\zi(x^\za,0)y^\zi\,, \\
\label{ELnha}&\frac{\xd}{\xd t}\left(\frac{\partial L}{\partial
y^\zk}(x^\za,0,y^\zi,0)\right)=&\left(c^j_{0\zk}(x) +c^j_{\zi\zk}(x^\za,0) y^\zi\right)\frac{\partial
L}{\partial y^j}(x^\za,0,y^\zi,0)+\\&& \zr^a_\zk(x^\za,0)\frac{\partial L}{\partial
x^a}(x^\za,0,y^\zi,0)\,,\nn
\eea
exactly as in \cite{GG}.
\end{example}

\begin{example} {\bf (Rolling disc)}

\medskip\noindent
To show how our method of Dirac algebroid works for an explicit constrained system, let us reconsider the case
of vertical rolling disc on a plane studied in \cite{YM1}. The position configuration space is the Lie group
$N=\R^2\ti S^1\ti S^1$ with coordinates $(x^1,x^2,\zvy,\zf)$. The Lagrangian on $\sT N$ in the adapted
coordinates $(x^1,x^2,\zvy,\zf,\dot{x}^1,\dot{x}^2,\dot{\zf},\dot{\zvy})$ reads
\be\label{lag}L(x^1,x^2,\zf,\zvy,\dot{x}^1,\dot{x}^2,\dot{\zf},\dot{\zvy})=\frac{1}{2}m\left((\dot{x}^1)^2+(\dot{x}^2)^2\right)+
\frac{1}{2}J_1\dot{\zf}^2+\frac{1}{2}J_2\dot{\zvy}^2\,.
\ee
The kinematic constraint due to the rolling contact without slipping on the plane is
$$\dot{x}^1=R\,\dot{\zvy}\cos{\zf}\,,\ \dot{x}^2=R\,\dot{\zvy}\sin{\zf}\,.
$$
Since the Lagrangian and the constraints are invariant with respect to translation with respect to
$x^1,x^2,\zvy$, we have an obvious Lie algebroid reduction to $E=\sT N/(\R^2\ti S^1)=\sT\R\ti \R^3$ which is a
vector bundle of rank 4 over $S^1$ with coordinates
$$(\zf,\dot{\zf},\dot{x}^1,\dot{x}^2,\dot{\zvy})\,,
$$
associated with the basis of (global) sections $(f_1=\pa_\zf,f_2,f_3,f_4)$, where $f_2,f_3,f_4$ come from the
reductions of $\pa_{{\zvy}}$, $\pa_{x^1}$, $\pa_{x^2}$, respectively. The anchor $\zr:E\ra\sT S^1$ is the
projection onto $\sT S^1$, and all the basic sections commute. The reduced Lagrangian we will denote also $L$,
as it takes values exactly like in (\ref{lag}).

The constraint subbundle $V$ of $E$ is spanned by the sections $e_1=f_1$ and $e_2=f_2+R\cos{\zf}\cdot
f_3+R\sin{\zf}\cdot f_4$, so we can use the basis $e_1=f_1,e_2,e_3=f_3,e_4=f_4$, and the corresponding
coordinates $(\zf,y)$ on $E$. The basis $(e_1, e_2, e_3, e_4)$ induces the coordinate system $(\zf, \zx)$ in
$E^\ast$ and adapted coordinates $(\zf,\zx,\dot\zf,\dot\zx)$ in $\sT E^\ast$ and $(\zf,\zx,p,y)$ in $\sT^\ast
E^\ast$. The constraint is now described by the equations $y^3=y^4=0$, but we get non-trivial commutation
relations
$$[e_1,e_2]=R\cos{\zf}\cdot e_4-R\sin{\zf}\cdot e_3.
$$
In other words, the corresponding Poisson tensor $\zP$ on $E^\ast$ in the adapted coordinates reads
$$\zP=R(\cos{\zf}\cdot\zx_4-\sin{\zf}\cdot\zx_3)\pa_{\zx_1}\we\pa_{\zx_2}+\pa_{\zx_1}\we\pa_{\zf}\, ,
$$
and the Dirac structure induced by the constraints is
\begin{multline}
D_\zP^V=\{\, (\zf, \zx, \dot\zf, \dot\zx, p, y):\quad
y^3=y^4=0,\; \dot\zf=y^1,  \\
\dot\zx_1=Ry^2\zx_3\sin\zf-R y^2\zx_4\cos\zf-p,\quad \dot\zx_2=-R y^1\zx_3\sin\zf+R y^1\zx_4\cos\zf\,\}.
\end{multline}
Hence, the nonholonomically constrained Euler-Lagrange equations (\ref{ELnh}) take the form
\bea\label{hhh} &&y^3=y^4=0\,,\ \frac{\xd\zf}{\xd t}=y^1\,,\\
\nn&&\frac{\xd}{\xd t}\left(\frac{\pa L}{\pa y^1}(\zf,y^1,y^2,0,0)\right)=\\
\nn &&R\sin{\zf}\cdot y^2\frac{\pa L}{\pa y^3}(\zf,y^1,y^2,0,0)-R\cos{\zf}\cdot y^2\frac{\pa L}{\pa y^4}
(\zf,y^1,y^2,0,0)+\frac{\pa L}{\pa \zf}(\zf,y^1,y^2,0,0)\,,\\
\nn &&\frac{\xd}{\xd t}\left(\frac{\pa L}{\pa y^2}(\zf,y^1,y^2,0,0)\right)=-R\sin{\zf}\cdot y^1\frac{\pa
L}{\pa y^3}(\zf,y^1,y^2,0,0)+R\cos{\zf}\cdot y^1\frac{\pa L}{\pa y^4}(\zf,y^1,y^2,0,0)\,.
\eea
Since
$$\dot{x}^1=y^3+Ry^2\cos{\zf}\,,\ \dot{x}^2=y^4+Ry^2\sin{\zf}\,,\ \dot{\zf}=y^1\,,\ \dot{\zvy}=y^2\,,$$
the Lagrangian in coordinates $(\zf,y)$ reads
\beas &&L(\zf,y^1,y^2,y^3,y^4)=\frac{1}{2}m\left((y^3)^2+(y^4)^2\right)+\\
&&\frac{1}{2}J_1(y^1)^2+\frac{1}{2}(mR^2+J_2)(y^2)^2+mRy^2(y^3\cos{\zf}+y^4\sin{\zf})\,.
\eeas
and, as show the straightforward calculations, the Euler-Lagrange equations (\ref{hhh}) take the form
\be\label{hhh1} y^3=y^4=0\,,\ \frac{\xd\zf}{\xd t}=y^1\,,\ (mR^2+J_2)\frac{\xd y^2}{\xd t}=0\,,\
J_1\frac{\xd y^1}{\xd t}=0\,.
\ee
Going back to the original coordinates, we get finally
\be\label{gg}\dot{x}^1=R\,\dot{\zvy}\cos{\zf}\,,\ \dot{x}^2=R\,\dot{\zvy}\sin{\zf}\,,\ \ddot{\zf}=0\,,\ \ddot{\zvy}=0\,,
\ee
with obvious explicit solutions.

If the phase dynamics is concerned, in view of (\ref{ph}), we get that $\ze_{D_\zP^V}[\xd
L]={\zL_{D_\zP^V}^L}(V)$ is parametrized by $(\phi,y^1,y^2)$ as follows:
\beas \ze_{D_\zP^V}[\xd L]&=&\{(\zf,\zx_1,\zx_2,\zx_3,\zx_4,\dot{\zf},\dot{\zx}_1,\dot{\zx}_2,\dot{\zx}_3,\dot{\zx}_4):
\zx_1=J_1y^1\,,\ \zx_2=(mR^2+J_2)y^2\,,\\
&& \zx_3=mRy^2\cos{\zf}\,,\ \zx_4=mRy^2\sin{\zf}\,,\ \dot{\zf}=y^1\,,\ \dot{\zx}_1=0\,,\ \dot{\zx}_2=0\}\,.
\eeas
Here $\dot{\zx}_3$ and $\dot{\zx}_4$ are arbitrary, but the integrability condition allows us to describe them
as well. Let us note that the phase space $\Ph_D^L\subset E^\ast$ is defined by the equations
\be\label{e61}\zx_3=\zm\cos{\zf}\cdot\zx_2,\quad \zx_4=\zm\sin{\zf}\cdot\zx_2,\,
\ee
where $\zm=\frac{mR}{mR^2+J_2}$. Hence, the first integrability condition gives
$$\dot{\zx}_3=-\zm\zx_2\sin{\zf}\cdot\dot{\zf}=-\frac{\zm}{J_1}\zx_1\zx_2\sin{\zf}\,,\quad
\dot{\zx}_4=\zm\zx_2\cos{\zf}\cdot\dot{\zf}=\frac{\zm}{J_1}\zx_1\zx_2\cos{\zf}\,.
$$

The dynamics is Hamiltonian, since the Hamiltonian
$$H(\zf,\zx)=\frac{1}{2J_1}(\zx_1)^2+\frac{1}{2J_2}(\zx_2-R\zx_3\cos\zf-R\zx_4\sin\zf)^2+\frac{1}{2m}((\zx_3)^2+(\zx_4)^2)$$
defined on $E^\ast$ induces the dynamics $\zb_{D_\zP^V}[\xd H]=\ze_{D_\zP^V}[\xd L]$. The equality can be
checked by straightforward calculations. Let us only note that, since
\beas && y^3=\frac{\partial H}{\partial \zx_3}=-\frac{R}{J_2}\zx_2\sin\zf+\left(\frac{R^2}{J_2}\cos^2\zf+\frac{1}{m}\right)\zx_3+
\frac{R^2}{J_2}\zx_4\sin\zf\cos\zf\,, \\
&& y^4=\frac{\partial H}{\partial \zx_4}=-\frac{R}{J_2}\zx_2\sin\zf+\frac{R^2}{J_2}\zx_3\cos\zf\sin\zf+
\left(\frac{R^2}{J_2}\sin^2\zf+\frac{1}{m}\right)\zx_4\,,\eeas imposing the conditions $y^3=0$, $y^4=0$ of the
Dirac structure, we recover the Hamiltonian constraints (\ref{e61}).
\end{example}

\section{Concluding remarks}
We have introduced the concepts of Dirac  and Dirac-Lie algebroid as a natural common generalization of a skew
(resp., Lie) algebroid and a linear presymplectic structure. Aside from its interesting geometrical structure,
Dirac algebroids, as well as their affine counterparts -- affine Dirac algebroids, provide a powerful
geometrical tool for description of mechanical systems by means of generalized Lagrangian and Hamiltonian
formalisms.

The kinematic configurations (quasi-velocities) form in this framework a subset of a vector bundle $\zt:E\ra
M$ and are related to the actual velocities from $\sT M$ by the so called anchor relation, while the phase
space is a subset of the dual bundle, $E^\ast$. The phase dynamics induced by a Lagrangian or a Hamiltonian is
an implicit dynamics in the phase space described by a subset of the tangent bundle $\sT E^\ast$, and the
associated Euler-Lagrange equations are defined by an implicit dynamics in $E$.

We proposed a well-described procedure of inducing a new Dirac algebroid out of a given one by imposing
certain linear constraints in the anchor relation (velocity bundle), that on the Lagrangian formalism level
corresponds to imposing nonholonomic constraints. Since imposing constraints we end up in a Dirac algebroid
again, our approach does not really distinguish between constrained and unconstrained systems, as well as
between regular and singular Lagrangians. Since the use of algebroids already includes reductions to the picture, our approach covers all main examples of mechanical systems: regular or singular, constrained or not, autonomous or non-autonomous etc.

The Dirac algebroid, especially the Dirac-Lie algebroids, possess a rich and intriguing geometrical structure whose
investigations have been started in the present paper. We are strongly convinced that these objects, as well
as their possible generalization, will allow us to find a proper intrinsic framework also for field theories and
other areas of mathematical physics.

%%%%%%%%%%%%%%%%%%%%%%%%%%%%%%%%%%%%%%%%%%%%%%%%%%%%%%%%%%%%%o

\bigskip
\noindent Katarzyna Grabowska\\
Faculty of Physics,
                University of Warsaw \\
                Ho\.za 69, 00-681 Warszawa, Poland \\
                 {\tt konieczn@fuw.edu.pl} \\\\
\noindent Janusz Grabowski\\Institute of Mathematics, Polish Academy of Sciences\\\'Sniadeckich 8, P.O. Box
21, 00-956 Warszawa,
Poland\\{\tt jagrab@impan.pl}\\\\

\end{document}